\documentclass[11pt, svgnames]{article}
\usepackage[english]{babel}
\usepackage[margin=1 in]{geometry}
\usepackage[utf8]{inputenc}
\usepackage{url}
\usepackage{amsmath,amssymb,amsthm,mathrsfs,amsfonts,amscd,keyval}
\usepackage{float}
\usepackage{mathtools,mathrsfs}
\usepackage{setspace}
\usepackage{multirow}
\usepackage{booktabs,longtable,array}
\usepackage{algorithm}
\usepackage{algpseudocode}
\usepackage[svgnames]{xcolor}
\usepackage{enumitem}
\usepackage{hyperref}
\usepackage{parskip}
\hypersetup{
colorlinks=true,
linkcolor=blue,
filecolor=blue,
citecolor=blue,  
urlcolor=blue,
linktocpage=true}
\usepackage{authblk}
\usepackage{placeins}
\newcommand{\ket}[1]{\lvert #1\rangle}
\usepackage{tikz}
\usetikzlibrary{positioning,fit,arrows.meta,calc,backgrounds}
\theoremstyle{definition}
\usepackage{mathtools}
\usepackage{subcaption}
\def\iprintdate#1/#2/#3/{\DTMdisplaydate{#1}{#2}{#3}{-1}}

\newtheorem{theorem}{Theorem}[section]
\newtheorem{lemma}[theorem]{Lemma}
\newtheorem{definition}[theorem]{Definition}
\newtheorem{example}[theorem]{Example}

\theoremstyle{remark}
\newtheorem{remark}[theorem]{Remark}
\theoremstyle{definition}
\newtheorem{procedure}[theorem]{Procedure}

\usepackage{amsmath,amssymb}
\usepackage{xcolor}
\usepackage{graphicx}
\usepackage{tabularx} % preamble
\usepackage{tikz}
\usetikzlibrary{arrows.meta,positioning,calc}

\definecolor{algink}{HTML}{1F2933}
\definecolor{algmuted}{HTML}{66788A}
\definecolor{algnavy}{HTML}{173B63}
\definecolor{algblue}{HTML}{276EA7}
\definecolor{algbluepale}{HTML}{EAF3FA}
\definecolor{algslatepale}{HTML}{F4F6F8}
\definecolor{alglinegray}{HTML}{C7D1DB}
\definecolor{algamber}{HTML}{A65300}

\newcommand{\Ftwo}{\mathbb F_{2}}
\newcommand{\symp}{\omega}
\newcommand{\orth}[1]{#1^{\perp}}
\newcommand{\dual}[1]{#1^{*}}
\newcommand{\atcut}[2]{#1^{(#2)}}

\newcommand{\FinalCut}{T}
\newcommand{\Vcut}{\mathsf{V}}
\newcommand{\VST}{\mathsf{V}_{\mathrm{ST}}}
\newcommand{\Stab}{\mathsf{S}}
\newcommand{\Gspace}{\mathsf{G}}

\newcommand{\DetectorSpace}{\mathsf{D}}
\newcommand{\Sdet}{\mathsf{S}_{\mathrm{det}}}
\newcommand{\Logical}{\mathsf{L}}

\newcommand{\CircuitBlock}{Q}

\newcommand{\ms}{\mathcal{M}}
\newcommand{\dmeas}{d_{\mathrm{meas}}}
\newcommand{\DetElt}{\delta}
\newcommand{\DetResp}{\sigma}

\newcommand{\DetMap}{\iota_{\mathrm{det}}}
\newcommand{\DetStab}[1]{s_{#1}}

\newcommand{\Uspace}{\mathsf{U}}
\newcommand{\Tspace}{\mathsf{T}}
\newcommand{\Pspace}{\mathsf{P}}
\newcommand{\Uelt}[2]{u_{#1}^{(#2)}}
\newcommand{\Telt}[2]{\tau_{#1}^{(#2)}}
\newcommand{\Pelt}[2]{p_{#1}^{(#2)}}
\newcommand{\Udual}[2]{u_{#1}^{\sharp(#2)}}
\newcommand{\Tdual}[2]{\tau_{#1}^{\sharp(#2)}}
\newcommand{\Pdual}[2]{p_{#1}^{\sharp(#2)}}

\newcommand{\Tsharp}{\mathsf{T}^{\sharp}}
\newcommand{\Psharp}{\mathsf{P}^{\sharp}}

\newcommand{\Ladj}{\mathsf{L}_{\mathrm{adj}}}
\newcommand{\Rprop}{\mathsf{R}}
\newcommand{\Rcomm}{\mathsf{R}_{\mathrm{comm}}}
\newcommand{\Ranti}{\mathsf{R}_{\mathrm{anti}}}
\newcommand{\Z}[1]{Z\!\left(#1\right)}
\newcommand{\Lbare}[1]{\mathsf{L}_{\mathrm{bare}}\!\left(#1\right)}
\newcommand{\Ldressed}[1]{\mathsf{L}_{\mathrm{dressed}}\!\left(#1\right)}
\newcommand{\CodeDistance}[1]{d_{#1}}

\begin{document}
\title{Subsystem Spacetime Code}

\author[1]{Esther Xiaozhen Fu}
\author[1]{Aranya Chakraborty}
\author[1]{Daniel Gottesman}
\affil[1]{Joint Center for Quantum Information and Computer Science,
University of Maryland, College Park, MD, USA}

\date{}

\maketitle

\begin{abstract}

Quantum error-correcting codes provide a powerful framework for analyzing and designing fault-tolerant Clifford circuits. In this work, we construct the subsystem spacetime code and use it to characterize the distances of Clifford circuits. We show that its gauge group can be explicitly constructed and classified using the masking properties of the instantaneous stabilizer groups \cite{fu2024errorcorrectiondynamicalcodes} and consists precisely of faults that are harmless to the circuit.  
We use this framework to identify where a dynamical code loses distance and to analyze how the fault distance changes when circuits are composed from smaller components. Our results provide a versatile tool for analyzing and designing fault-tolerant circuits, with potential applications to the construction of new dynamical codes.
\end{abstract}
\tableofcontents
\section{Introduction}
\label{sec:Introduction}

Quantum computation is inherently susceptible to noise, and building fault-tolerant
protocols with low resource overhead remains a central challenge for scalable quantum
computing, both in theory and in experiment.

The traditional approach treats a quantum code as a static object and builds circuits
from fault-tolerant gadgets, each implementing one task. A fault-tolerant gate, for
example, may consist of a transversal logical gate followed by an error-correction
gadget. This decomposition also underlies the standard threshold
analyses~\cite{aliferis2006threshold,aharonov1999faulttolerant}, in which each
component is designed so that residual errors that exit the previous gadget do not spread uncontrollably in the next gadget.

However, the traditional approach precludes a variety of fault-tolerant protocols. A broader view of fault tolerance allows a much wider variety of fault-tolerant circuits~\cite{gottesman2022opportunities}.  For instance, fault-tolerant protocols could integrate computation and error correction, change error-correcting codes frequently, or allow errors to spread by tracking how they change over time.  Relaxing the constraints on fault tolerance could facilitate improvements in the efficiency of fault tolerance.

This view has led to a dynamical approach to quantum codes, in which the stabilizer group
evolves under unitary gates and Pauli measurements. In a dynamical
code~\cite{Hastings2021dynamically,fu2024errorcorrectiondynamicalcodes}, the
instantaneous stabilizer group need not be fixed, measurements need not commute with
it, and syndrome extraction can proceed together with logical operations. Examples
include Floquet codes, code switching, and sequences of non-commuting measurements
that together implement a logical operation. The search for new dynamical codes has
produced many new
constructions~\cite{Anonymous_2025,Townsend_Teague_2023,PhysRevB.108.205116,PRXQuantum.4.020341,PRXQuantum.5.010342,PRXQuantum.5.020305,yan2024floquetcodescoupledspin,PhysRevB.108.195134,PRXQuantum.6.010360,PRXQuantum.5.040327,tanggara2025simpleconstructionquditfloquet,Hastings2021dynamically,PhysRevA.111.032413,Davydova2024quantumcomputation}.
A complementary line of work uses the ZX calculus to rewrite circuits while preserving
their fault-tolerance properties. This relates apparently different fault-tolerant
circuits and gives a systematic way to transform and simplify existing Clifford
circuits~\cite{delafuente2024dynamicalweightreductionpauli,Bombin2024unifyingflavorsof,McEwen_2023,Gidney_2023,Townsend_Teague_2023,PhysRevLett.134.090602,xu2025faulttolerantprotocolsspacetimeconcatenation, RodatzPoorKissinger2026}.

These developments call for an algebraic framework for the fault distance of such
circuits. For circuits of Clifford gates and Pauli measurements, the distances of the
instantaneous stabilizer groups do not determine the fault distance. A circuit can consist of good instantaneous codes but still have a small distance: measurement errors
can be undetectable and cause a logical error, and the stabilizers may not persist long enough to detect an error. The relevant distance is a property of
the whole spacetime circuit instead of that of an instantaneous stabilizer group. We would like a code that captures this distance exactly,
and whose structure shows how it relates to the familiar distances as well as the masking properties of the instantaneous codes.

Spacetime descriptions of quantum circuits have appeared in several earlier works.
Bacon, Flammia, Harrow and Shi~\cite{bacon_sparse_2017} constructed subsystem codes
from quantum circuits, with gauge operators accounting for errors that can be moved
through the circuit, and showed that a fault-tolerant circuit yields a spacetime code
with a corresponding distance guarantee. Their construction, however, assumes that
every measurement commutes with the current stabilizer group. Delfosse and
Paetznick~\cite{delfosse2023spacetimecodescliffordcircuits} defined the spacetime code
of a general Clifford circuit, allowing measurements that anticommute with the current
stabilizers, and used it to reduce circuit decoding to the decoding of a stabilizer
code. Their spacetime code is a stabilizer code and does not have gauge degrees
of freedom. In Ref.~\cite{gottesman2022opportunities}, both gauge operators and
anticommuting measurements are considered, but only when the measurements are made at
the final time step of the protocol. Blackwell and
Haah~\cite{blackwell2025code} studied the distance of Floquet codes in the steady
state. They identified a subspace of ``benign'' undetectable spacetime errors and
proved that the code distance is the minimum weight of an undetectable spacetime error
that is not benign. Their benign errors play the role that the gauge group plays here. Concurrent with this work, Xu, Wang and Liu~\cite{xu2026theoryspacetimequantumfault} independently obtained a subsystem spacetime code from the tensor-network representation of a Clifford circuit and used it to analyze fault-tolerant gadgets for static codes.   

Our work differs in the following aspects. First, we construct the subsystem spacetime gauge group
explicitly from the masking properties of the instantaneous stabilizer groups and
characterize its full structure (see Figure~\ref{fig:ss-ingredients}). Importantly, the masking structure makes the code usable for analyzing circuits: It gives the origin of every gauge qubit through its masking properties. It bounds the fault distance by the unmasked distances of the instantaneous codes and a distance for measurement errors, with equality under a correctability condition, which locates where a circuit loses distance. This also establishes the relation between the distances of the spacetime code and the unmasked distances of the instantaneous stabilizer groups, identifying conditions under which these distances agree. Our framework also provides a way to study the composition of fault-tolerant circuits by tracking how stabilizer information is preserved, recovered, or lost when individual circuit components are combined, and how this affects the distance of the resulting circuit. Finally, our framework accommodates different circuit settings, including state preparation, incomplete terminal measurements, logical
measurements, qubit reset and discard, and different fault models for measurements. 

\definecolor{sscDet}{HTML}{EB6123}
\definecolor{sscUnm}{HTML}{EB6123}
\definecolor{sscTmp}{HTML}{EB6123}
\definecolor{sscPer}{HTML}{EB6123}
\definecolor{sscLog}{HTML}{512888}
\definecolor{sscBg}{HTML}{F4EFF8}
\definecolor{sscRcomm}{HTML}{512888}

\begin{figure}[ht]
\centering
\resizebox{\linewidth}{!}{%
\begin{tikzpicture}[
    font=\small,
    box/.style={draw=black!70, line width=0.6pt, text=white, align=center,
                minimum width=3.0cm, minimum height=1.35cm, inner sep=3pt},
    pair/.style={{Latex[length=2mm]}-{Latex[length=2mm]}, line width=0.8pt, black!75},
    lab/.style={font=\scriptsize, text=black!75, align=center},
    node distance=0.6cm and 0.25cm]
  % stabilizers = center
  \node[box, fill=sscDet] (det) {Detector\\stabilizers $\Sdet$};
  \node[lab, below=0.12cm of det] {center $\Z{\Gspace}$};
  \node[font=\small\bfseries, above=0.42cm of det] {Stabilizers};

  % gauge qubits: top row and their partners in the bottom row
  \node[box, fill=sscUnm, right=1.0cm of det] (u)
      {Unmasked\\stabilizers $\atcut{\Uspace}{t}$\\{\scriptsize modulo $\Sdet$}};
  \node[box, fill=sscTmp, right=of u] (tm)
      {Temporarily masked\\stabilizers $\atcut{\Tspace}{t}$};
  \node[box, fill=sscPer, right=of tm] (pm)
      {Permanently masked\\stabilizers $\atcut{\Pspace}{t}$};
  \node[box, fill=sscLog, right=of pm] (l1)
      {Adjacent logicals\\$\bar X^{(t)}+\bar X^{(t+1)}$};

  \node[box, fill=sscRcomm, below=of u] (r)
      {Commuting\\propagators $\Rcomm$};
  \node[box, fill=sscTmp, below=of tm] (ts)
      {Destabilizers ${\Tsharp}^{(t)}$\\{\scriptsize chosen freely}};
  \node[box, fill=sscPer, below=of pm] (ps)
      {Destabilizers ${\Psharp}^{(t)}$\\{\scriptsize fixed by the circuit}};
  \node[box, fill=sscLog, below=of l1] (l2)
      {Adjacent logicals\\$\bar Z^{(t')}+\bar Z^{(t'+1)}$};

  \foreach \a/\b in {u/r, tm/ts, pm/ps, l1/l2} \draw[pair] (\a) -- (\b);

  % support of each family
  \node[lab, below=0.12cm of r]  (s1) {cuts $t$ and $t+1$};
  \node[lab, below=0.12cm of ts] (s2) {one cut $t$};
  \node[lab, below=0.12cm of ps] (s3) {one cut $t$};
  \node[lab, below=0.12cm of l2] (s4) {adjacent cuts, $t, t\pm1$};

  \begin{scope}[on background layer]
    \node[draw=black!70, line width=0.6pt, fill=sscBg, inner sep=0.28cm,
          fit=(u)(l1)(r)(l2)(s1)(s4),
          label={[font=\small\bfseries]above:Gauge qubits (anticommuting pairs)}] {};
  \end{scope}
\end{tikzpicture}}
\caption{Key ingredients of the subsystem spacetime code. The detector stabilizers
$\Sdet$ form the center of $\Gspace$ and are its stabilizers. Every other generator
belongs to a gauge qubit: a pair of operators that anticommute with each other
and commute with every other pair. Unmasked stabilizers, taken modulo
$\Sdet$, pair with commuting propagators across adjacent cuts; temporarily and
permanently masked stabilizers pair with their destabilizers at the same cut; and
adjacent logical generators pair with each other. The diagram shows a basis for each type of elements, but the exact gauge qubits, i.e. stabilizer-destabilizer pairs, need to be calculated from here. In our construction the elements in \textcolor{sscDet!80!white}{orange} come from the instantaneous gauge group (Def.~\ref{def : IGG}) whereas the elements in \textcolor{sscRcomm!80!white}{purple} come from the commuting propagators (Def.~\ref{def:comm_prop_space}).}
\label{fig:ss-ingredients}
\end{figure}

We now state our main contributions. Precise definitions are given in Sections~\ref{sec:spacetime-setup} and~\ref{sec: SS code}. We classify stabilizers based on masking properties. A stabilizer is
\emph{unmasked} if later measurement outcomes reveal its value. It is \emph{permanently masked}
if a later measurement destroys it before its value is ever read, and \emph{temporarily masked}
if its value has not been read by the end of the circuit but could still be read by further measurements.

\begin{itemize}
    \item \textbf{Subsystem spacetime code construction}: For a circuit of Clifford unitaries and Pauli
    measurements, the gauge group of the subsystem spacetime code is generated by the
    \emph{instantaneous gauge groups} (see Definition~\ref{def : IGG}) at each cut, which are determined by the masking
    properties of the instantaneous stabilizer groups, together with the commuting
    propagators (see Definition~\ref{def:propagators}). 

    \item \textbf{Characterization of the gauge group}: Figure~\ref{fig:ss-ingredients} shows the key ingredients through the lens of masking. The center of the gauge group is exactly the space of
    stabilizers corresponding to the detectors. The gauge qubits fall into three categories (see Section~\ref{sec:spacetime-gauge-construction}): masked stabilizers
    paired with their destabilizers, unmasked stabilizers paired with commuting
    propagators, and pairs formed from adjacent logical representatives. We give
    explicit algorithms that construct these pairs. The distance of the subsystem spacetime code equals the fault distance of the circuit (see proof in Section~\ref{sec:spacetime-logical-distance}),
    and maximum-likelihood decoding of the circuit reduces to coset decoding of the code. For simplicity, we construct the code with errors only on data qubits, but show in Subsection~\ref{subsec: Frameworks for measurements} that the framework easily includes other fault models, including the model underlying standard circuit-level distance.

    \item \textbf{Relation to the unmasked distance.} In general, the distance of a subsystem spacetime code is upper bounded by its unmasked distances. For measurement-only circuits
    satisfying a correctability condition (see Condition 1 in Section~\ref{sec:two-distances}), which holds for all static stabilizer codes and some dynamical codes, the spacetime distance equals
    $\min\{d_U,d_{\mathrm{meas}}\}$ (see Theorem~\ref{thm:two-distances}), where $d_U$ is the minimum unmasked distance of the
    instantaneous stabilizer groups and $d_{\mathrm{meas}}$ is a distance for
    measurement errors. The \emph{measurement distance} $d_{\mathrm{meas}}$ is the smallest weight of an undetectable but non-trivial set of measurement errors. We give an explicit example showing that this equality can fail without the correctability condition in Appendix~\ref{sec: counter}.

\item \textbf{Circuit composition.} Gluing two circuits whose distances have been computed separately can create logical faults that neither piece has, and these faults lie near the boundary where the two pieces meet. Under a correctability condition (see Condition 2 in Section~\ref{sec: circuit composition}), the composed distance is $\min\{d_1(C),d_2,d_{\mathrm{meas}}\}$ where  $d_1,d_2$ are the distances of the two pieces, $d_1(C)$ is the new distance of $d_1$ with a reclassification of temporarily masked stabilizers into unmasked and permanently masked stabilizers and the quantity $d_{\mathrm{meas}}$ is computed for the composed circuit $C$. Without this condition, we can exactly compute the composed distance which is given by $\min\{d_W, d_1(C),d_2\}$ (see Theorem~\ref{thm:window}). Here $d_W$ is the distance of a subsystem spacetime code defined by an appropriate time window around the boundary.
\end{itemize}

This paper is organized as follows. In Section~\ref{sec: prelim}, we review the
background and fix notation. Then in Section~\ref{sec:spacetime-setup} we introduce the masking properties of instantaneous
stabilizer groups and the representation of faults and detectors as spacetime Pauli
operators, illustrated by worked examples. In Section~\ref{sec:setup}, we specify the
circuits the framework applies to. Section~\ref{sec: SS code} constructs the gauge
group of the subsystem spacetime code, determines its center and its decomposition into
gauge qubits, and shows that its distance equals the fault distance of the circuit. In Subsection~\ref{sec:beyond_steady_state}, we discuss the different settings under which the spacetime formalism still holds and in Subsubsection~\ref{subsec: Frameworks for measurements}, we present the different ways of taking into account measurement errors. In
Section~\ref{sec:two-distances}, we relate this distance to the unmasked distances of the
instantaneous stabilizer groups, and Appendix~\ref{sec: counter} gives an example
where the relation fails. Section~\ref{sec: appli} applies the framework to
existing dynamical codes and to the composition of circuits. Finally, we conclude in Section~\ref{sec: conclude} and discuss open problems in Section~\ref{sec: open problems}.

\section{Preliminaries and notation}~\label{sec: prelim}
We begin with preliminary definitions and notations, which includes a brief review of Pauli spaces and subsystem stabilizer codes. 

In this work unless stated otherwise, every Pauli group is taken modulo phases and
regarded as a binary vector space.  Let $\Vcut_n = \mathcal{P}_n/\langle i I\rangle$ be the $n$-qubit Pauli group modulo phases. It is equivalently regarded as a $2n$ dimensional vector space over $\mathbb{F}_2$, equipped with the symplectic form
$\symp:\Vcut_n\times\Vcut_n\to\Ftwo$, where $\symp(E,F)=0$ if $E$ and $F$ commute and $\symp(E,F)=1$ if they anti-commute. Pauli strings still
follow the usual multiplication rules modulo phase. We will use these two representations interchangeably but it will be clear from context.

For a subspace $\mathsf W \subseteq \Vcut_n$, define
\begin{equation}
    \orth{\mathsf W}
    \coloneqq
    \{v\in\Vcut_n:\symp(v,w)=0\text{ for all }w\in\mathsf W\},
    \qquad
    \dual{\mathsf W}
    \coloneqq
    \operatorname{Hom}_{\Ftwo}(\mathsf W,\Ftwo).
\end{equation}
Thus $\orth{\mathsf W}$ is a subspace of $V_n$,
whereas $\dual{\mathsf W}$ is the vector space of linear functionals
on $\mathsf W$.  We call $\mathsf W$ \emph{isotropic} if
$\mathsf W\subseteq\orth{\mathsf W}$ and \emph{nondegenerate} if
$\mathsf W\cap\orth{\mathsf W}=\{0\}$. (Note that this is \emph{not} the same as a nondegenerate code.) 

\begin{definition}[Support and weight]
\label{def:pauli-weight}
For $E\in V_n$, its support and weight are
\begin{equation}
    \operatorname{supp}(E)
    \coloneqq
    \{q\in\{1,\ldots,n\}:E_q\neq I\},
    \qquad
    \operatorname{wt}(E)
    \coloneqq
    |\operatorname{supp}(E)|.
    \label{eq:pauli-weight}
\end{equation}
\end{definition}

\begin{definition}[Stabilizer code]
An $[[n,k,d]]$ stabilizer code, where $n,k,d$ are number of physical qubits, number of logical qubits, and code distance respectively, is a quantum code that is specified by an $(n-k)$-dimensional isotropic subspace $\Stab\subseteq\Vcut_n$, called the stabilizer group. The code space is then given by the joint $+1$-eigenspace of the stabilizer group. \\
\end{definition}

\begin{definition}[Subsystem code]
A $[[n,k,d]]$ subsystem code is defined by a gauge group $\Gspace \subseteq\Vcut_n$ which in general is not isotropic. Its 
stabilizer subspace is given by the center of the gauge group.
\begin{equation}
    \Stab=\Z{\Gspace}
    \coloneqq
    \Gspace\cap\orth{\Gspace}.
\end{equation}
\end{definition}
\begin{definition}[Logical operators]
The bare and dressed logical Pauli subspaces of the subsystem code are
\begin{align}
    \Lbare{\Gspace}
    &\coloneqq
    \orth{\Gspace}\backslash \Z{\Gspace},\\
    \Ldressed{\Gspace}
    &\coloneqq
    \orth{\Z{\Gspace}}\backslash\Gspace.
\end{align}
A bare logical representative commutes with every gauge operator,
whereas a dressed representative need only commute with the stabilizers. \\
\end{definition}

\begin{definition}[Subsystem code distance]
The distance of a subsystem code defined by its gauge group $\Gspace$ is given by
\begin{equation}
    \CodeDistance{\Gspace}
    \coloneqq
    \min\{\operatorname{wt}(E):
    E\in\orth{\Z{\Gspace}}\setminus\Gspace\},
\end{equation}
where $\operatorname{wt}(E)$ is the number of qubits on which $E$
acts non-trivially. Thus, the distance is the minimum weight of a
representative of a non-trivial dressed logical class. For a stabilizer code, this reduces to $\CodeDistance{\Stab}\coloneqq\min\{\operatorname{wt}(E): E \in \orth{\Stab}\backslash \Stab\}$.\\
\end{definition}
\begin{definition} [Destabilizers]
Let $w=(w_1,\ldots,w_r)$ be an ordered basis of an isotropic
subspace $\mathsf W$. An associated set of destabilizers for $w$ is a set of elements
\begin{equation}
    w^\sharp=(w_1^\sharp,\ldots,w_r^\sharp),
    \qquad
    \symp(w_i,w_j^\sharp)=\delta_{ij}.
\end{equation}
We call $w_i^\sharp$ a destabilizer of $w_i$: it anticommutes with
$w_i$ and commutes with every other basis element $w_j$.
Such a choice is in general not unique.

The space
$\mathsf W\oplus\langle w^\sharp\rangle$ is nondegenerate and
provides a symplectic completion of $\mathsf W$.
\end{definition}
\section{Dynamical codes in spacetime}\label{sec:spacetime-setup}
In this section, we introduce constructions and formalisms that will be essential for the rest of the paper. First we introduce the circuit model and propagators. We then introduce the classification of instantaneous stabilizers into unmasked, temporarily masked, and permanently masked stabilizers, which determines the available syndrome information at each time slice. Finally, we define the spacetime fault model and show how detector relations give rise to the stabilizers of the subsystem spacetime code. We illustrate these concepts with explicit examples at the end.
\subsection{Circuit model and propagators}
\label{sec : prelim_circuit_model}
We fix the circuit model, such that the faults, stabilizers
and logical operators are assigned to spacetime locations and the circuit blocks at half-time intervals are
taken to be ideal. We also introduce propagators, which relate a Pauli error at one
cut to an error at the next.

Let $\mathcal{Q}$ denote a quantum circuit composed of Pauli
measurements and Clifford unitaries. A \emph{cut} is an imaginary
time slice through the circuit that separates successive operations
without intersecting any operation. We index the cuts by integer
timesteps $t=0,\ldots,\FinalCut$, where $t=0$ and $t=\FinalCut$
correspond to the input and output boundaries, respectively.
The Pauli space at cut $t$ is denoted $\atcut{\Vcut}{t}$, and the
circuit block between cuts $t$ and $t+1$ is denoted
$\atcut{\CircuitBlock}{t+1/2}$. Each block consists of a family
of Pauli measurements $\atcut{\ms}{t+1/2}=\{m_i^{(t+1/2)}\}_i$ and a collection of Clifford unitaries, whose joint action is denoted
$\atcut{U}{t+1/2}$. All operations within a block have pairwise disjoint supports, so each qubit participates in at most one operation per block. By inserting sufficiently many cuts, any circuit composed of Pauli measurements and Clifford unitaries can be expressed in this form as illustrated in Figure \ref{fig:cicruit model}.

\begin{figure}[ht] 
\centering
\includegraphics[height=4 cm]{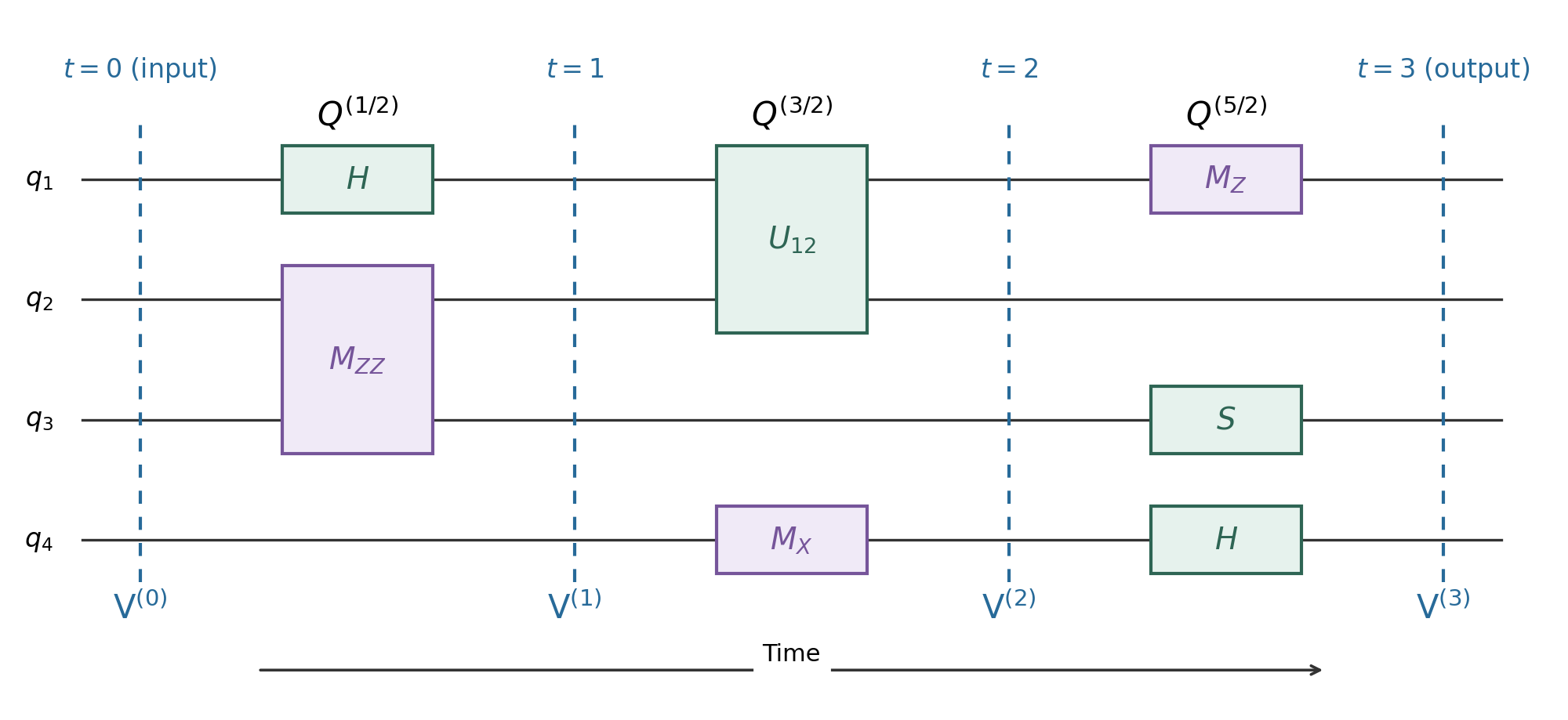}
\caption{The circuit model considered in this paper. We consider imaginary time slices through the circuit that separates successive operations. Each intermediate circuit block consists of unitary operations and measurements having disjoint supports.}
\label{fig:cicruit model}
\end{figure}

\paragraph{Fault model and spacetime Pauli.}
We take the circuit blocks to be ideal, with errors occurring
only at the cuts. In other words, a fault is a Pauli operator acting at cuts. The \emph{spacetime locations} are the pairs $(q,t)$ of a qubit $q$ and a cut $t$ at which $q$ exists. The spacetime Pauli space is
\begin{equation}
    \VST
    \coloneqq
    \bigoplus_{t=0}^{\FinalCut}\atcut{\Vcut}{t},
\end{equation}
with $\atcut{\Vcut}{t}$ denoting the subspace restricted to time $t$. A
spacetime Pauli $E = \bigoplus_t E^{(t)}$ assigns a Pauli $E^{(t)}$ to each
cut, and its weight is the number of spacetime locations on which it acts
nontrivially.  

By a mild abuse of notation, we use $\atcut{E}{t}$ both for an
operator in $\atcut{\Vcut}{t}$ and for its canonical embedding
into $\VST$. Expressions involving operators at distinct cuts
are understood as elements of $\VST$. The symplectic form on
$\VST$ is the direct sum of the forms on the individual slices:
\begin{equation}
    \symp\left(
        \bigoplus_t\atcut{E}{t},
        \bigoplus_t\atcut{F}{t}
    \right)
    =
    \sum_t\symp(\atcut{E}{t},\atcut{F}{t}) \mod{2}.
\end{equation}

Let $E \in \VST$ denote a Pauli fault in spacetime. The spacetime support and weight of a fault configuration are
\begin{align}
    \operatorname{supp}(E)
    &\coloneqq
    \{(q,t):E_q^{(t)}\neq I\},
    \label{eq:spacetime-support}\\
    \operatorname{wt}(E)
    &\coloneqq
    |\operatorname{supp}(E)|
    =
    \sum_{t=0}^{T}
    \operatorname{wt}\!\left(E^{(t)}\right).
    \label{eq:spacetime-weight}
\end{align}
Its spatial support ignores the cut labels:
\begin{equation}
    \operatorname{supp}_{\mathrm{sp}}(E)
    \coloneqq
    \{q:E_q^{(t)}\neq I\text{ for some }t\}.
    \label{eq:spatial-support}
\end{equation}
Thus,
$|\operatorname{supp}_{\mathrm{sp}}(E)|
\leq\operatorname{wt}(E)$.
Unless stated otherwise, the allowed fault space is
$F=V_{\mathrm{ST}}$, equipped with the weight in
Eq.~\eqref{eq:spacetime-weight}.

\paragraph{Dynamical codes} A dynamical code is specified by an initial stabilizer group $\atcut{\Stab}{0} \subseteq \atcut{\Vcut}{0}$ and a sequence of circuit blocks consisting of unitary operators and measurements. In the
measurement-only setting, this sequence is described by a measurement schedule
\begin{equation}
        \mathcal{Q}
        =
        \{
            \atcut{\ms}{1/2},
            \atcut{\ms}{3/2},
            \ldots,
            \atcut{\ms}{\FinalCut-1/2}
        \},
\end{equation}
Measurements in different timeslices need not commute in general. After the block $\atcut{\CircuitBlock}{t-1/2}$, the instantaneous
stabilizer group is updated to $\atcut{\Stab}{t}$ by Clifford
conjugation and the standard stabilizer measurement update rules given in Procedure \ref{lem:stab-update}.
The code space, $\mathcal C^{(t)}$, is the corresponding
simultaneous eigenspace,
\begin{equation}
    \mathcal C^{(t)}
    =
    \{|\psi\rangle:
    s|\psi\rangle=\pm |\psi\rangle
    \text{ for every }s\in\atcut{\Stab}{t}\},
\end{equation}
The eigenvalues are determined by the initial stabilizer eigenvalues and the measurement outcomes.

A Floquet code is a special type of dynamical code with a periodic
measurement schedule, that is
$\atcut{\ms}{t+p+1/2}=\atcut{\ms}{t+1/2}$, where $p$ is the period of the measurement schedule. 

%Stabilizer and logical update rules.
In a dynamical code, every measurement can change the stabilizer group and the logical
representatives. We recall the rules that track both~\cite{fu2024errorcorrectiondynamicalcodes}.
Let $\Stab$ be a stabilizer group with independent generators $s_1,\dots,s_r$, and let the
state lie in the joint eigenspace $s_i\ket{\psi}=(-1)^{a_i}\ket{\psi}$. A Pauli measurement
$m$ has outcome $O(m)\in\{0,1\}$ and leaves the state in the $(-1)^{O(m)}$ eigenspace of $m$.
By the \emph{value} of a logical representative $\ell$ we mean the outcome that a measurement
of $\ell$ would give. \\

\begin{procedure}[Stabilizer update rules~\cite{gottesman1997stabilizer}]
\label{lem:stab-update}

Measure a Pauli operator $m$.
\begin{enumerate}
    \item If $m\in\Stab$, say $m=\pm\prod_{i\in I}s_i$, the outcome is fixed by the signs
    $a_i$, $i\in I$. The state and $\Stab$ are unchanged.
    \item If $m$ anticommutes with some of the generators, choose one of them, $s_j$. Replace
    $s_j$ by $m$, with sign $O(m)$, and replace every other generator $s_i$ that anticommutes
    with $m$ by $s_is_j$, with sign $a_i+a_j$. The outcome is uniformly random, and the rank
    of $\Stab$ is unchanged.
    \item If $m$ commutes with every element of $\Stab$ but $m\notin\Stab$, add $m$ to the
    generators, with sign $O(m)$. The rank of $\Stab$ grows by one.
\end{enumerate}
\end{procedure}

Case~1 is the source of detectors: the outcome is predicted by earlier outcomes and input
signs. Case~2 is the source of masking: the stabilizer $s_j$ leaves the group. For Case~3: If $\Stab$ has
$n-k$ independent generators on $n$ qubits, an operator $m$ as in Case~3 is a nontrivial
logical operator, so Case~3 is a logical measurement.

\begin{procedure}[Logical update rules~\cite{gottesman1997stabilizer}]
\label{lem:logical-update}
Let $\ell$ be a representative of a logical operator of $\Stab$, and measure a Pauli operator
$m$.
\begin{enumerate}
    \item If $m$ commutes with $\ell$, then $\ell$ remains a representative, with the same
    value. If moreover $m=\pm\ell s$ for some $s\in\Stab$, the outcome reveals the value of
    $\ell$.
    \item If $m$ anticommutes with $\ell$ and with some generator, let $s_j$ be the generator
    that Procedure~\ref{lem:stab-update} replaces by $m$. Then $\ell s_j$ commutes with $m$ and
    with every updated generator, so it represents the same logical operator after the
    measurement. Its value is the value of $\ell$ before the measurement, plus the sign
    $a_j$ of $s_j$.
    \item If $m$ anticommutes with $\ell$ and commutes with every element of $\Stab$, then $m$
    is itself a logical operator, and the measurement is a logical measurement. The state
    becomes an eigenstate of $m$, and $\ell$ is replaced by $m$, with sign $O(m)$.
\end{enumerate}
\end{procedure}

The sign $a_j$ in Case~2 is itself a parity of earlier outcomes and input signs, by
Procedure~\ref{lem:stab-update}. A logical value is therefore read off from the final
representative together with the outcomes collected along the way. 

\paragraph{Logical update across a block.}
In the circuit model, a block applies commuting measurements $M^{(t+1/2)}$ on disjoint
supports and a Clifford layer $U^{(t+1/2)}$. We apply Procedure~\ref{lem:logical-update} to the logical representatives $\ell^{(t)}$ at cut $t$, then conjugating by the Clifford layer, to dynamically update the logical representatives and get $\ell^{(t+1)}$ at cut $t+1$. Furthermore, by iterating this process from cut $0$ to cut $T$ for each representative $\ell^{(0)}$, we obtain  $\ell^{(t)}$ for all $t\leq T$ of the same logical operator, together with its ideal measurement outcome at each cut. We also obtain a set of measurements $V_\ell$ which is the set of measurements in the circuit required by the logical update rule to dynamically update the logical representative from $t=0$ to $t=T$. That is, starting from any logical representative, the outcome of the logical representative at $t=T$ depends on the outcomes of the measurements in $V_\ell$. 

\begin{lemma}[$V_\ell$ is well defined up to detectors]
\label{lem:Vell-well-defined}
Let $V_{\ell_1}$ and $V_{\ell_2}$ be logical update sets obtained from two
choices of representatives $\ell_1^{(t)}, \ell_2^{(t)}$ of the same logical
class. Then the measurements appearing in
exactly one of $V_{\ell_1}$ and $V_{\ell_2}$, together with the terminal
readouts of $s_T = \ell_1^{(T)}\ell_2^{(T)}$, form a detector. 
\end{lemma}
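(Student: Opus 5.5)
We argue by tracking the \emph{value} of the logical representatives through the circuit as a linear function of measurement outcomes and input signs, and comparing the two tracking histories. By Procedure~\ref{lem:logical-update} and induction over the blocks $\atcut{\CircuitBlock}{t+1/2}$, the ideal value of $\ell_i^{(T)}$ equals the value of $\ell_i^{(0)}$ plus the parity of the outcomes of the measurements in $V_{\ell_i}$, where the Case~2 signs $a_j$ are themselves expanded into parities of earlier outcomes and input signs via Procedure~\ref{lem:stab-update}. Write this as
\[
v(\ell_i^{(T)}) = v(\ell_i^{(0)}) + \sum_{m\in V_{\ell_i}} O(m) \pmod 2 .
\]
Since $\ell_1^{(0)}$ and $\ell_2^{(0)}$ represent the same logical class, $\ell_1^{(0)}\ell_2^{(0)}\in\atcut{\Stab}{0}$. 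Its value is therefore a fixed function of the input signs, and we absorb it into the detector, treating input stabilizer signs as known as in the paper's setup.

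Next we show that $s_T=\ell_1^{(T)}\ell_2^{(T)}$ lies in $\atcut{\Stab}{T}$. Both representatives are updated by the same rule in the same block. At each step, $\ell_1^{(t)}\ell_2^{(t)}\in\atcut{\Stab}{t}$ is preserved, by induction through the three cases of Procedure~\ref{lem:logical-update}:
\begin{itemize}
\item In Case~1, both representatives are unchanged.
\item In Case~2, each is multiplied by either $s_j$ or nothing, so the product is multiplied by an element of $\atcut{\Stab}{t}$ that commutes with $m$. This uses the same update to the product stabilizer as Procedure~\ref{lem:stab-update}.
\item In Case~3, a logical measurement replaces both representatives by $m$ up to stabilizers. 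Their product therefore remains in the updated group, which now contains $m$.
\end{itemize}
Conjugation by $\atcut{U}{t+1/2}$ preserves membership. Adding the two value equations then gives
\[
v(s_T) = v(\ell_1^{(0)}\ell_2^{(0)}) + \sum_{m\in V_{\ell_1}\triangle V_{\ell_2}} O(m),
\]
because measurements common to both sets cancel mod $2$.

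Finally, the terminal readout of $s_T$ reveals $v(s_T)$. In the ideal circuit, the parity of the outcomes in $V_{\ell_1}\triangle V_{\ell_2}$, together with the terminal readouts of $s_T$, is therefore deterministic. This is precisely the definition of a detector.

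The main obstacle is the bookkeeping in Case~2 when the two representatives are handled differently: one anticommutes with $m$ and picks up $s_j$ with sign $a_j$, while the other does not. We must check that the extra sign $a_j$, expanded as a parity of earlier outcomes, is exactly what is recorded in the symmetric difference, including measurements that enter $V_\ell$ only through the expansion of $a_j$. We handle this by defining $V_\ell$ as the full multiset of outcomes appearing in the expanded parity and working mod $2$. A secondary subtlety arises when different choices of the replaced generator $s_j$ are made for the two tracks. These choices differ by stabilizers whose values are themselves detector-determined, so this only changes $V_{\ell}$ by a detector.
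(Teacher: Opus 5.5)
Your proposal is correct and follows the paper's proof essentially step for step. You unroll the logical update rule for each representative and add the two value equations mod $2$ so the unknown logical value cancels. You then use that $\ell_1^{(0)}\ell_2^{(0)}$ is an input stabilizer of known sign, and let the perfect terminal round supply the value of $s_T$.

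Your induction showing $\ell_1^{(t)}\ell_2^{(t)}\in\Stab^{(t)}$ makes explicit a step the paper only asserts, but it needs one small fix. When only one representative picks up $s_j$, the product is multiplied by $s_j$, which anticommutes with $m$; the real reason it survives is that the new product commutes with $m$. Separately, your displayed value equation should include the input-sign constants from expanding the $a_j$, which the paper writes as $O(s_i^{(0)})$.
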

\begin{proof}
Write $O(P)$ for the value of a Pauli $P$, i.e.\ the outcome a measurement
of $P$ would give. In every fault-free run, unrolling the logical update
rules gives, for $i=1,2$,
\begin{equation}
  O\!\left(\ell_i^{(T)}\right)
  = O\!\left(\ell_i^{(0)}\right)
  + \sum_{m\in V_{\ell_i}} O(m)
  + O\!\left(s_i^{(0)}\right),
\end{equation}
where $s_i^{(0)}\in \Stab^{(0)}$ collects the initial stabilizers whose signs
enter the bookkeeping. Summing the two relations modulo $2$,
\begin{equation}
  \sum_{m\in V_{\ell_1}} O(m) + \sum_{m\in V_{\ell_2}} O(m)
  = O\!\left(\ell_1^{(0)}\ell_2^{(0)}\right)
  + O\!\left(s_1^{(0)}\right) + O\!\left(s_2^{(0)}\right)
  + O\!\left(\ell_1^{(T)}\ell_2^{(T)}\right).
\end{equation}
On the right-hand side the unknown logical value has cancelled:
$\ell_1^{(0)}\ell_2^{(0)}$ is an initial stabilizer, so the first three
terms are constants fixed by the given input signs. The last
term is the value of $s_T\in \Stab^{(T)}$, which the perfect terminal round
reveals as a parity of recorded outcomes. Moving those readouts to the
left-hand side, the parity of outcomes over $V_{\ell_1}$, $V_{\ell_2}$ and
the terminal readouts of $s_T$ has a fixed value in every fault-free run,
and is therefore a detector.
\end{proof}

\begin{definition}[Propagator space]~\label{def:propagators}
For $P\in\{X,Y,Z\}$, a Pauli $P$ supported on qubit $i$ and timeslice $t$ is denoted as $P_i^{(t)} \in \atcut{\Vcut}{t}$. Define $r_{i}^{(t)}(P)$ as follows:
\begin{equation}
    r_{i}^{(t)}(P)
    \coloneqq
    P_i^{(t)}+ \left[U^{(t+1/2)} P_i \atcut{{U^{\dagger}}}{t+1/2}\right]^{(t+1)},
\end{equation}
where $U^{(t+1/2)}$ is the unitary at the circuit block $t+1/2$. 
The propagator space is
\begin{equation}
    \Rprop
    \coloneqq
    \left\langle
        r_{i}^{(t)}(P)
        :
        i\in[n],\
        P\in\{X,Y,Z\},\
        0\le t<T
    \right\rangle.
\end{equation}
\end{definition}

We classify elementary propagators according to the
commutation relations of $P_i$ with the measurements
between cuts $t$ and $t+1$. \\

\begin{definition}(Commuting Propagator Space)
\label{def:comm_prop_space}
The commuting propagator space $\Rcomm$ is 
defined as the subspace of $\Rprop$ such that any $v\in \mathsf{V}^{(t)}$ that commutes with every measurement in $\atcut{\ms}{t+1/2}$, i.e. $r^{(t)}(v)\coloneqq v^{(t)}+ \left[U^{(t+1/2)} v \atcut{{U^{\dagger}}}{t+1/2}\right]^{(t+1)}$, is in $\Rcomm$.
\begin{equation}
\mathsf{R}_{\mathrm{comm}} := \bigoplus_{t=0}^{T-1} r^{(t)}\big(\langle\mathcal{M}^{(t+1/2)}\rangle^{\perp}\big),
\end{equation}
The remaining propagators are classified as $\Ranti$. Note that $\Ranti$ is not a subspace.
\end{definition}

For a commuting propagator, the error $P_i^{(t)}$ can
be moved through the intervening circuit block without changing any measurement outcome. It then
cancels with $U^{(t+1/2)\dag} P_i^{(t+1)} \atcut{U}{t+1/2}$, so the combined error is
equivalent to the identity.

\subsection{Masking}\label{sec:prelim_masking}
In a dynamical code, not every instantaneous stabilizer helps with error correction:
an anticommuting measurement can destroy a stabilizer before its value is ever
computed. Masking, introduced in Ref.~\cite{fu2024errorcorrectiondynamicalcodes},
makes this precise and determines which stabilizers at each cut carry syndrome
information.

Fix a cut $t$ and a measurement window $W$ till some cut $T$. The elements of the instantaneous stabilizer group $\atcut{\Stab}{t}$ are classified according to whether their syndromes can be reconstructed from the measurement outcomes within this window.\\

\begin{definition}[Unmasked stabilizers]
Unmasked stabilizers are stabilizers whose syndromes can be expressed as a parity of outcomes from the specified measurement
sequence. The unmasked stabilizers form a subspace $\atcut{\Uspace}{t} \subseteq \atcut{\Stab}{t}$,
which can be determined using the distance algorithm of
\cite{fu2024errorcorrectiondynamicalcodes}. Its eigenvalue, in the absence of errors, is revealed by a parity of outcomes in $W$.\\
\end{definition}

\begin{definition}[Permanently masked stabilizers]

A permanently masked stabilizer is a stabilizer whose syndrome information has been irreversibly lost during the measurement sequence. The measurement record and the final state at the end of the sequence together provide no information about the syndrome of a permanently masked stabilizer. Thus, even extending the measurement sequence with new measurements cannot recover its syndrome.\\
\end{definition}

\begin{definition}[Temporarily masked stabilizers]

A temporarily masked stabilizer is a stabilizer whose syndrome cannot be inferred from the given measurement window. However, the final state at the end of the window retains full information about its syndrome, so it is possible to infer the syndrome by adding suitable future measurements. 
\end{definition}

Choose a basis of $\atcut{\Stab}{t}$ adapted to this classification,
and let $\atcut{\Tspace}{t}$ and $\atcut{\Pspace}{t}$ denote the set of
the selected temporarily and permanently masked generators,
respectively. Then
\begin{equation}
    \atcut{\Stab}{t}
    =
    \langle\atcut{\Uspace}{t}
    \cup
    \atcut{\Tspace}{t}
    \cup
    \atcut{\Pspace}{t}\rangle.
\end{equation}
The unmasked subspace is determined by the measurement window, whereas the displayed decomposition also involves choices of
complementary subspaces and generators.

For elements $\Telt{b}{t}$ of $\atcut{\Tspace}{t}$
and $\Pelt{c}{t}$ of $\atcut{\Pspace}{t}$ in a chosen basis of $\atcut{\Stab}{t}$, we can choose corresponding $\Tdual{b}{t}$ and $\Pdual{c}{t}$ as their destabilizers. The destabilizers for permanently masked stabilizers are fixed by the measurement scheme (Algorithm 3 in \cite{fu2024errorcorrectiondynamicalcodes}) whereas there is a freedom in choosing the destabilizers of temporarily masked stabilizers. 
 This freedom is in general fixed by maximizing the unmasked distance (Definition \ref{def:unmasked_distance}). The destabilizers of the temporarily masked stabilizers and the permanently masked stabilizers form the subsets, $\boldsymbol{\tau}^{\sharp(t)}$ and $\boldsymbol{p}^{\sharp(t)}$ respectively. Thus each destabilizer anti-commutes with its associated generator and commutes with all other generators in that basis. 

We briefly explain how the destabilizers for permanently masked stabilizers are fixed below.

\paragraph{Permanently masked stabilizers and their destabilizers.}
In the distance algorithm~\cite{fu2024errorcorrectiondynamicalcodes}, a stabilizer at time $t$ becomes permanently masked when a measurement at $t+1/2$ anti-commutes with its current representative in $C$ and the algorithm removes that representative from $C$ (Case~2 in the Distance Algorithm). From then on, its eigenvalue is stored neither in the state nor in any parity of recorded outcomes, so no later measurement can recover it.

Its destabilizer is fixed by the circuit through the time reversal of this process, and is computed by Algorithm~3 of~\cite{fu2024errorcorrectiondynamicalcodes}. The masking measurement $m$ anticommutes with the representative of the permanently masked stabilizer $p(t)$ at time $t$. Then $(p(t),m(t)\coloneqq m)$ forms a gauge qubit at time $t$. By treating this pair as a logical qubit, we can apply logical update rule backward in time to find a complete set of permanently masked stabilizers and destabilizers at previous timesteps that stem from the pair $(p(t),m(t))$. The destabilizer is therefore fixed by masking measurements and the logical update rules. See Example~\ref{ex: masking} for an example of computing the destabilizers.

Temporarily masked stabilizers have not been masked. Thus there is a degree of freedom in choosing its masking measurement, which we can optimize by say maximizing the unmasked distance. 

\begin{definition}[Instantaneous gauge group] \label{def : IGG}
Not all the syndrome information in $\atcut{\Stab}{t}$ can be learned from the measurement window. Only the syndromes in $\atcut{\Uspace}{t}$ are available for distinguishing errors. Treating the temporarily and permanently masked stabilizers as gauge degrees of freedom gives the gauge group at cut $t$:
\begin{equation}
    \atcut{\Gspace}{t}
    \coloneqq
    \left\langle
        \atcut{\Uspace}{t},
        \atcut{\Tspace}{t},
        \atcut{\Pspace}{t},
        \boldsymbol{\tau}^{\sharp(t)},
        \boldsymbol{p}^{\sharp(t)}
    \right\rangle.
\end{equation}
Its stabilizer center is $\Z{\atcut{\Gspace}{t}} := \atcut{\Uspace}{t}$. \\
\end{definition}

\begin{definition}[Unmasked distance]
\label{def:unmasked_distance}
For the masked stabilizer code at cut $t$, with the specified measurement window and a chosen gauge group $\atcut{\Gspace}{t}$,
the unmasked distance is
\begin{equation}
    d_{\mathrm{u}}(\atcut{\Gspace}{t})
    \coloneqq
    \min\left\{
        \operatorname{wt}(E):
        E\in\left(\atcut{\Uspace}{t}\right)^\perp \setminus \atcut{\Gspace}{t}
    \right\}.
\end{equation}
\end{definition}
The gauge group depends on the choice of destabilizers for the temporarily masked stabilizers. The unmasked distance can therefore be maximized over choices of destabilizers for $\boldsymbol{\tau}^{\sharp(t)}$.

The distance algorithm of \cite{fu2024errorcorrectiondynamicalcodes} determines the unmasked
syndromes and classifies the stabilizer
generators. Starting from an initial stabilizer group
$\atcut{\Stab}{0}$, it maintains two sets: $C$, which tracks how
the initial stabilizer generators evolve, and $V$, which tracks
how the recorded measurements can be combined to determine
their syndromes. Initially, $C$ consists of generators of
$\atcut{\Stab}{0}$ and $V=\varnothing$.

The algorithm processes the measurement blocks sequentially.
At the end of the window, $\langle C\rangle\cap\langle V\rangle$
identifies the evolved representatives of the unmasked initial
stabilizers. Tracking these representatives back to
$\atcut{\Stab}{0}$ gives the unmasked subspace, together with the
measurement-outcome parities that determine its syndromes. The update rules are given in \cite{fu2024errorcorrectiondynamicalcodes}. Iterating through each ISG gives $\Gspace^{(t)}$. The information from unmasking stabilizers can also be used to obtain the detectors. A worked example is
presented in Subsection~\ref{sec : prelim example}.

\subsection{Detectors, and spacetime Pauli representations} \label{sec : prelim_fauls and detectors}

\paragraph{From detector relations to detector stabilizers.}
A detector $\delta$ is a parity relation among measurement
outcomes whose value is fixed in the absence of faults \cite{delfosse2023spacetimecodescliffordcircuits}.
Let $M_\delta$ denote the set of measurement events entering
this relation. Taking the fault-free parity to be zero,
\begin{equation}
    \delta \::\:
    \sum_{m_i^{(r+1/2)}\in M_\delta}
    O\!\left(m_i^{(r+1/2)}\right)
    =0
    \pmod{2}.
\end{equation}

Each detector determines a subset of unmasked stabilizers
$\Uelt{\delta}{t}$ for some subset of timeslices. To construct $\Uelt{\delta}{t}$, for a fixed $t$, define
the product of measurements that are involved in $\delta$ from the circuit block $\atcut{Q}{t+1/2}$ as
\begin{equation}
    d_\delta^{(t+1/2)}
    \coloneqq
    \sum_{\substack{
        i:\,m_i^{(t+1/2)}\in M_\delta
    }}
    m_i^{(t+1/2)}.
\end{equation}
Let
\begin{equation}
    C_{r,t}
    \coloneqq
    \atcut{U}{r-1/2}\cdots\atcut{U}{t+1/2},
    \qquad C_{t,t}\coloneqq I,
\end{equation}
denote the ordered product of Clifford layers between cuts
$t$ and $r$, where $r \geq t$. We assume a perfect round of measurements of stabilizers in $\Stab^{(T)}$ at $T+1/2$.
The unmasked stabilizer associated with detector $\delta$
at cut $t$ is
\begin{equation}\label{eq:detector-cut-stabilizer}
    \Uelt{\delta}{t}
    =
    \sum_{r=t}^{\FinalCut}
    C_{r,t}^{\dagger}
    d_\delta^{(r+1/2)}
    C_{r,t}.
\end{equation}
 In the absence of Clifford gates, this reduces to
\begin{equation}
    \Uelt{\delta}{t}
    =
    \sum_{r=t}^{\FinalCut}d_\delta^{(r+1/2)}.
\end{equation}
The detector relation makes $\Uelt{\delta}{t}$ an unmasked
stabilizer: its eigenvalue can be inferred from the
corresponding future measurement outcomes. 

Collecting these operators across all cuts defines the
detector-to-spacetime map. Let
   $ \Sdet$
be the span of the detector stabilizers. Section~\ref{sec: SS code} shows that these are exactly the stabilizers of the subsystem spacetime code
\begin{equation}
    \DetStab{\delta}
    \coloneqq
    \DetMap(\delta)
    \coloneqq
    \bigoplus_{t=0}^{\FinalCut}\Uelt{\delta}{t},
    \qquad
    \Sdet\coloneqq\operatorname{im}\DetMap.
\end{equation}
For a spacetime error
$E=\bigoplus_{t=0}^{\FinalCut}E^{(t)}$, its syndrome from the detector $\delta$ is
\begin{equation}\label{eq:syndrome}
    \sigma_\delta(E)
    =\symp(E,s_\delta)
    =\sum_{t=0}^{\FinalCut}
      \symp\!\left(E^{(t)},\Uelt{\delta}{t}\right)
    \pmod{2}.
\end{equation}

\begin{remark}
$\Sdet$ can be constructed ISG by ISG using the distance algorithm.
For each ISG, the algorithm identifies an unmasked stabilizer together
with a set of measurement outcomes whose product reveals its
syndrome.  These data determine a detector
$\DetElt\in\DetectorSpace$ and its corresponding stabilizer
$\DetStab{\DetElt}=\DetMap(\DetElt)$ in the spacetime code. Conversely, restricting the spacetime representation of a detector to each cut gives its associated
unmasked stabilizer.
\end{remark}

\subsection{Worked out examples}\label{sec : prelim example}
We illustrate the definitions above on small examples, computing the unmasked and
permanently masked stabilizers, the unmasked distance, and the detector stabilizers
explicitly. 
	
\begin{example}
Suppose the Hastings and Haah's Floquet code~\cite{Hastings2021dynamically} is initialized with $XX$ checks at round 0, followed by $YY$ checks at round 1 and $ZZ$ checks at round 2. Figure~\ref{fig:hhfloquet} shows examples of spacetime code stabilizers with weight at most $24$. 

\end{example}

\begin{figure}[ht]
\centering
\includegraphics[width = 0.28\textwidth]{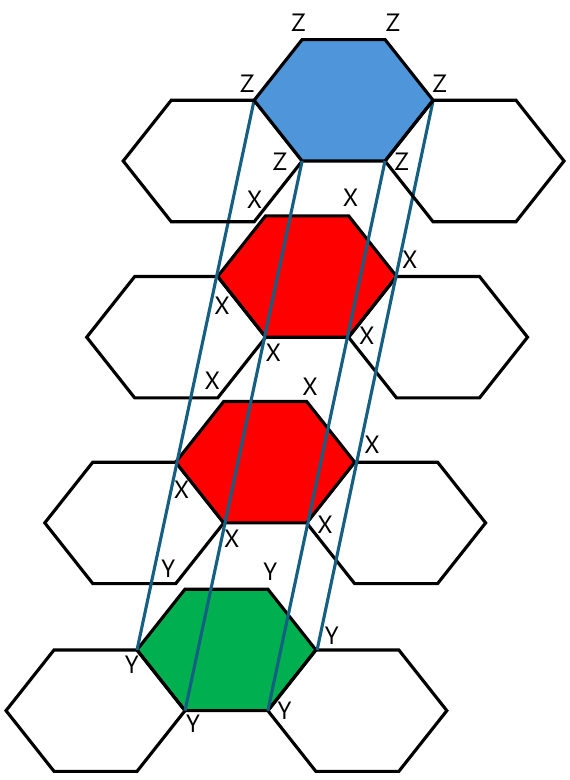}
\includegraphics[width=0.4\textwidth]{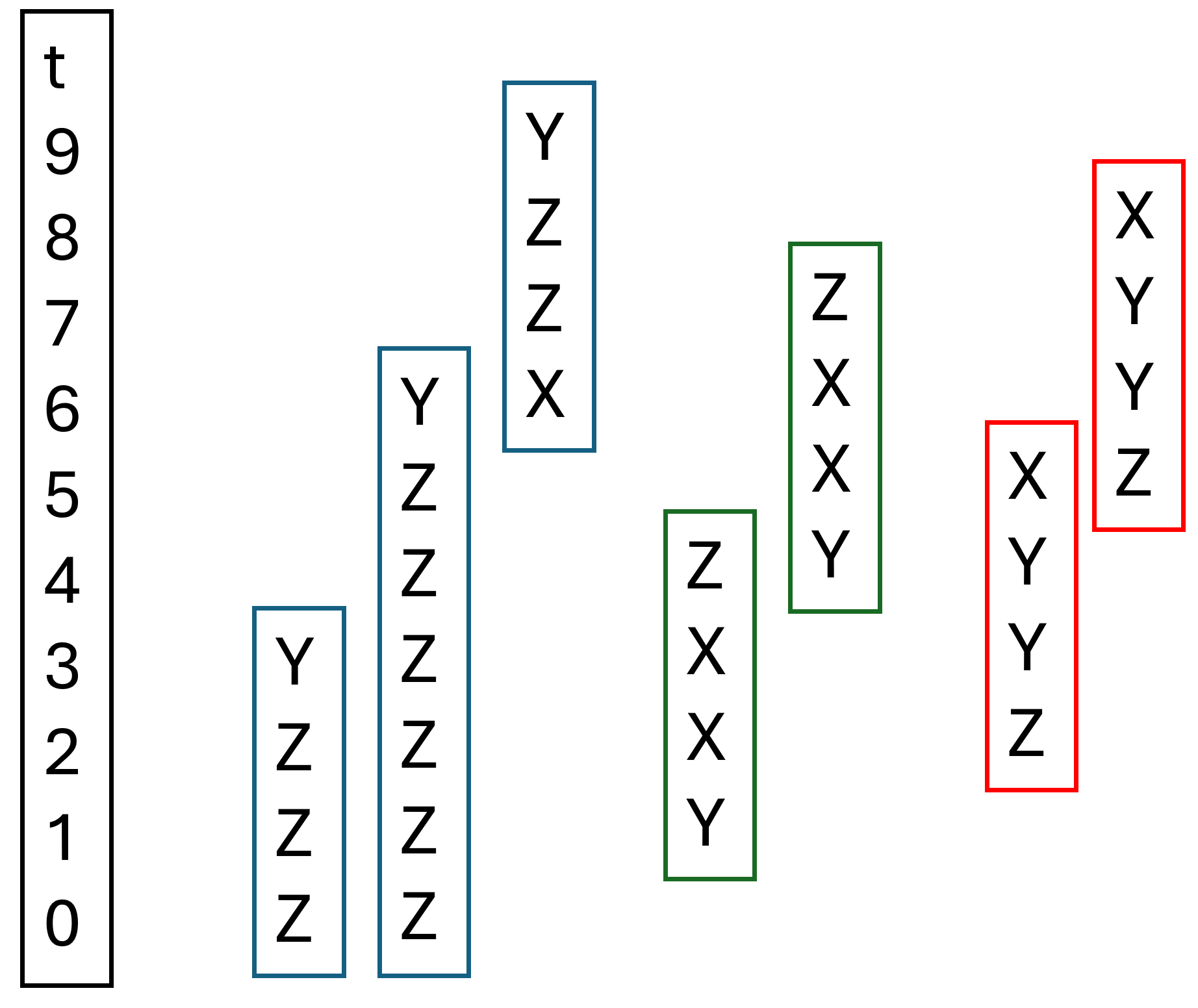}
\caption{Spacetime code structure for the Hastings and Haah's Floquet code. (Left) A stabilizer in the spacetime code with weight 24. (Right) Each entry $X,Y,Z$ denotes a weight-6 Pauli on a hexagon of the same Pauli type in the honeycomb lattice. Operators on the same column are supported by the same hexagon but at different timeslices.}
\label{fig:hhfloquet}
\end{figure}
\FloatBarrier
\begin{example}~\label{ex: masking} 
We work through the following example to illustrate all of the above concepts. Specifically, we use the distance algorithm in \cite{fu2024errorcorrectiondynamicalcodes} to calculate the \hyperlink{unmasked stabilizers}{unmasked stabilizers}, \hyperlink{permanently masked stabilizers}{permanently masked stabilizers}, \hyperlink{unmasked distance}{unmasked distance}, \hyperlink{spacetime stabilizer}{spacetime code's stabilizers}. For brevity, we will only calculate for the first ISG of the measurement sequence. 

\begin{figure}[H]
\centering
\resizebox{\linewidth}{!}{%
\begin{tikzpicture}[
    font=\small,
    text=algink,
    >={Latex[length=2.2mm,width=1.4mm]},
    state/.style={
        draw=alglinegray,
        fill=white,
        rounded corners=2.2pt,
        minimum height=10mm,
        text width=72mm,
        align=left,
        inner xsep=3.2mm,
        inner ysep=2.1mm
    },
    vstate/.style={
        state,
        text width=63mm
    },
    initial/.style={
        fill=algslatepale
    },
    final/.style={
        draw=algblue,
        very thick,
        fill=algbluepale
    },
    event/.style={
        draw=algnavy,
        fill=algnavy,
        text=white,
        rounded corners=5pt,
        minimum width=29mm,
        minimum height=8.5mm,
        font=\sffamily\bfseries\footnotesize,
        align=center
    },
    casetag/.style={
        fill=algbluepale,
        text=algnavy,
        rounded corners=3pt,
        font=\sffamily\bfseries\scriptsize,
        inner xsep=2.2mm,
        inner ysep=1.1mm
    },
    flow/.style={
        draw=alglinegray,
        line width=.8pt,
        ->
    },
    changelabel/.style={
        fill=white,
        text=algmuted,
        font=\sffamily\scriptsize,
        inner xsep=1.5mm,
        inner ysep=.4mm
    }
]

% ============================================================
% Column headings
% ============================================================

\node[
    anchor=west,
    font=\sffamily\bfseries\large,
    text=algnavy
] at (-3.85,1.30)
    {Current representatives $C_j$};

\node[
    anchor=center,
    font=\sffamily\bfseries\large,
    text=algnavy
] at (5.15,1.30)
    {Measurement};

\node[
    anchor=east,
    font=\sffamily\bfseries\large,
    text=algnavy
] at (13.90,1.30)
    {Recorded operators $V_j$};

\draw[algnavy,line width=1.2pt]
    (-3.85,.86)--(13.90,.86);

% ============================================================
% Initial state
% ============================================================

\node[state,initial] (c0) at (0,0)
{
    $C_0$
    \hfill
    \textcolor{algmuted}{initial}
    \par\smallskip
    $s_1\mapsto\texttt{YXXXYY}
     \qquad
     s_2\mapsto\texttt{ZZIYXI}$
    \par
    $s_3\mapsto\texttt{ZZZXYZ}
     \qquad
     s_4\mapsto\texttt{XIZZXY}$
};

\node[vstate,initial] (v0) at (10.55,0)
{
    $V_0$
    \hfill
    \textcolor{algmuted}{initial}
    \par\smallskip
    $\varnothing$
};

% ============================================================
% State after m_1
% ============================================================

\node[state,draw=algamber] (c1) at (0,-2.15)
{
    $C_1$
    \par\smallskip
    $s_1\mapsto\texttt{YXXXYY}
     \qquad
     s_2\mapsto\texttt{ZZIYXI}$
    \par
    $s_3\mapsto\texttt{ZZZXYZ}$
};

\node[vstate] (v1) at (10.55,-2.15)
{
    $V_1$
    \par\smallskip
    $\{\texttt{IIXZYZ}\}$
};

% ============================================================
% State after m_2
% ============================================================

\node[state] (c2) at (0,-4.30)
{
    $C_2$
    \par\smallskip
    $s_1\mapsto\texttt{YXXXYY}
     \qquad
     s_2\mapsto\texttt{ZZIYXI}$
    \par
    $s_3\mapsto\texttt{ZZZXYZ}$
};

\node[vstate] (v2) at (10.55,-4.30)
{
    $V_2$
    \par\smallskip
    $\{\texttt{IIXZYZ},\texttt{XYZXYZ}\}$
};

% ============================================================
% State after m_3
% ============================================================

\node[state,draw=algblue] (c3) at (0,-6.45)
{
    $C_3$
    \par\smallskip
    $s_1\mapsto\texttt{YXXXYY}
     \qquad
     s_2\mapsto\texttt{ZZIYXI}$
    \par
    $\textcolor{algblue}{
        \mathbf{s_3\mapsto\texttt{ZZYYII}}
     }$
};

\node[vstate,draw=algblue] (v3) at (10.55,-6.45)
{
    $V_3$
    \par\smallskip
    $\{
        \textcolor{algblue}{\mathbf{\texttt{XYYYII}}},
        \texttt{IIIIIY}
      \}$
};

% ============================================================
% State after m_4
% ============================================================

\node[state,draw=algamber] (c4) at (0,-8.60)
{
    $C_4$
    \par\smallskip
    $s_1\mapsto\texttt{YXXXYY}
     \qquad
     s_3\mapsto\texttt{ZZYYII}$
};

\node[vstate] (v4) at (10.55,-8.60)
{
    $V_4$
    \par\smallskip
    $\{
        \texttt{XYYYII},
        \texttt{IIIIIY},
        \texttt{IIIIYI}
      \}$
};

% ============================================================
% State after m_5
% ============================================================

\node[state] (c5) at (0,-10.75)
{
    $C_5$
    \par\smallskip
    $s_1\mapsto\texttt{YXXXYY}
     \qquad
     s_3\mapsto\texttt{ZZYYII}$
};

\node[vstate,draw=algblue] (v5) at (10.55,-10.75)
{
    $V_5$
    \par\smallskip
    $\{
        \texttt{XYYYII},
        \textcolor{algblue}{\mathbf{\texttt{IIIIYY}}},
        \texttt{IIIIXX}
      \}$
};

% ============================================================
% Final state after m_6
% ============================================================

\node[state,final] (c6) at (0,-12.90)
{
    $C_6$
    \hfill
    \textcolor{algblue}{final}
    \par\smallskip
    $s_1\mapsto\texttt{YXXXYY}
     \qquad
     s_3\mapsto\texttt{ZZYYII}$
};

\node[vstate,final] (v6) at (10.55,-12.90)
{
    $V_6$
    \hfill
    \textcolor{algblue}{final}
    \par\smallskip
    $\{
        \texttt{XYYYII},
        \texttt{IIIIYY},
        \texttt{IIIIXX},
        \texttt{ZZYYYY}
      \}$
};

% ============================================================
% Measurements and case labels
% ============================================================

\foreach \i/\yy/\op/\case in {
    1/-1.075/IIXZYZ/2,
    2/-3.225/XYZXYZ/1,
    3/-5.375/IIIIIY/4,
    4/-7.525/IIIIYI/2,
    5/-9.675/IIIIXX/3,
    6/-11.825/ZZYYYY/1
}{
    \node[event] (m\i) at (5.15,\yy)
        {$m_{\i}=\texttt{\op}$};

    \node[
        casetag,
        anchor=west
    ] at ([xshift=1.4mm]m\i.east)
        {Case \case};
}

% ============================================================
% Evolution of C
% ============================================================

\draw[flow,draw=algamber]
    (c0.south)
    --node[changelabel,text=algamber]
        {remove $s_4$}
    (c1.north);

\draw[flow]
    (c1.south)
    --node[changelabel]
        {unchanged}
    (c2.north);

\draw[flow,draw=algblue]
    (c2.south)
    --node[changelabel,text=algblue]
        {$s_3\leftarrow s_3 \cdot m_1$}
    (c3.north);

\draw[flow,draw=algamber]
    (c3.south)
    --node[changelabel,text=algamber]
        {remove $s_2$}
    (c4.north);

\draw[flow]
    (c4.south)
    --node[changelabel]
        {unchanged}
    (c5.north);

\draw[flow]
    (c5.south)
    --node[changelabel]
        {unchanged}
    (c6.north);

% ============================================================
% Evolution of V
% ============================================================

\draw[flow]
    (v0.south)
    --node[changelabel]
        {add $m_1$}
    (v1.north);

\draw[flow]
    (v1.south)
    --node[changelabel]
        {add $m_2$}
    (v2.north);

\draw[flow,draw=algblue]
    (v2.south)
    --node[changelabel,text=algblue]
        {$m_1 \cdot m_2=\texttt{XYYYII}$}
    (v3.north);

\draw[flow]
    (v3.south)
    --node[changelabel]
        {add $m_4$}
    (v4.north);

\draw[flow,draw=algblue]
    (v4.south)
    --node[changelabel,text=algblue]
        {$m_3 \cdot m_4=\texttt{IIIIYY}$}
    (v5.north);

\draw[flow]
    (v5.south)
    --node[changelabel]
        {add $m_6$}
    (v6.north);

% ============================================================
% Dotted guides connecting measurements to updates
% ============================================================

\foreach \i in {1,...,6}{
    \draw[densely dotted,draw=alglinegray]
        (m\i.west)--++(-1.05,0);

    \draw[densely dotted,draw=alglinegray]
        ([xshift=10mm]m\i.east)--++(1.15,0);
}

% ============================================================
% Legend
% ============================================================

\node[
    anchor=north west,
    font=\sffamily\footnotesize,
    text=algmuted,
    text width=176mm,
    align=left
] at (-3.85,-13.68)
{
    Each box gives the state immediately after the indicated
    measurement.
    \textcolor{algamber}{Amber} marks permanently masked stabilizers;
    \textcolor{algblue}{blue} marks a nontrivial change of basis.
    Pauli phases are suppressed.
};

\end{tikzpicture}%
}

\caption{
Evolution of the distance-algorithm data.
The left and right tracks show the states of \(C_j\) and \(V_j\),
respectively, after each measurement.
The central nodes identify the measurement and the corresponding
update rule.
}
\label{fig:distance-algorithm-six-qubit}
\end{figure}

\paragraph{Unmasked stabilizer and detector cell.}
The \hypertarget{unmasked stabilizers}{unmasked stabilizers} can be obtained from the distance algorithm by computing generators of $\langle C \rangle \cap \langle V \rangle$. Operationally, an initial stabilizer is unmasked when the recorded measurements in $V$ determine its syndrome.  At
the end of the sequence,
\begin{equation}
    \texttt{ZZYYII}
      = \texttt{IIIIYY}\cdot \texttt{ZZYYYY}
      \in \langle C_6 \rangle\cap\langle V_6 \rangle .
\end{equation}
Since $\texttt{ZZYYII}$ is the current representative associated with the initial stabilizer $s_3 = \texttt{ZZZXYZ}$, the latter is unmasked. Its syndrome can be obtained from the measurement sequence $\{m_1, m_3, m_4, m_6\}$ as,
\begin{equation}
    O(s_3) = O(\texttt{IIXZYZ})+O(\texttt{IIIIIY})+O(\texttt{IIIIYI})+O(\texttt{ZZYYYY})
\end{equation}
    
\paragraph{Permanently masked stabilizers and destabilizers.}  The \hypertarget{permanently masked stabilizers}{permanently masked stabilizers} are given by the stabilizers that are removed from $C$ during computation. In this example the permanently masked stabilizers are given by, 
\begin{equation}
    p_4 = s_4=\texttt{XIZZXY},
    \qquad
    p _2 = s_2=\texttt{ZZIYXI}.
\end{equation}
The destabilizer for a permanently masked stabilizer is computed from
the round that masks it and then propagated backward to the initial
cut.  In this example,
\begin{equation}
    p_4^\sharp = \texttt{IIXZYZ},
    \qquad
    p_2^\sharp
    =\texttt{IIXZIZ}
    =\texttt{IIIIYI}\cdot\texttt{IIXZYZ}.
\end{equation}

\paragraph{Unmasked distance.}
Suppose that the temporarily masked generator $s_1$ is \hypertarget{unmasked distance}{unmasked} in a later
measurement round.  The gauge group associated with the initial ISG is then
\begin{equation}
    \Gspace
      =\langle s_1,s_2,s_3,s_4,p_4^\sharp,p_2^\sharp\rangle,
    \qquad
    \Z{\Gspace}=\langle s_1,s_3\rangle.
\end{equation}
Consequently,
\begin{equation}
    d_{\mathrm u}
      =\min\bigl\{\operatorname{wt}(P):P\in\Z{\Gspace}^{\perp}\setminus\Gspace\bigr\}.
\end{equation}

\paragraph{Spacetime-code stabilizer.}
The \hypertarget{spacetime stabilizer}{spacetime-code stabilizer} is obtained from the cumulative
partial products of the corresponding detector relation. For
$s_3=\texttt{ZZZXYZ}$, the detector relation is
\begin{equation}
    s_3=m_1m_3m_4m_6.
\end{equation}
Consequently, the corresponding spacetime-code stabilizer is
\begin{equation}
\begin{split}
    \widetilde{s}_3={}&
      (\texttt{ZZZXYZ})^{(0)}
      +(\texttt{ZZYYII})^{(1)}
      +(\texttt{ZZYYII})^{(2)}\\
      &+
      (\texttt{ZZYYIY})^{(3)}
      +(\texttt{ZZYYYY})^{(4)}
      +(\texttt{ZZYYYY})^{(5)}.
\end{split}
\end{equation}

 \end{example}

 \paragraph{Logical error} 
    
    A logical operator can be dynamically updated by a sequence of measurements and its logical outcome is also updated by the measurement outcomes. Thus, if there is a measurement error but the measurement is used to update the logical, the logical outcome will be updated incorrectly. 
    
    For example, a code is initialized in the +1 eigenstate of the logical operator $\texttt{XXXX}$. Suppose we measure $\texttt{IZZI}$ but obtained a measurement error of $-1$ instead of $+1$. In the next time step, when we measure $\texttt{IIYI}$, the logical representative is updated to $\texttt{XYYX}$ which should have an outcome of $+1$, but we would have mistaken it to have the opposite sign. In a dynamical code, a single measurement error can therefore cause a logical error, so the spacetime location of measurement errors matters more than in a static code.

\section{Setup}\label{sec:setup}
The subsystem spacetime code depends on more than the circuit itself: it depends on
where faults may occur, how measurement errors are represented, and what is assumed
at the boundaries of the window. This section describes the choices we make for these ingredients. We first describe the basic construction: an input set of qubits with input instantaneous stabilizer generators, a quantum circuit consisting of 
Pauli measurements and Clifford unitaries, and a terminal round of perfect stabilizer measurements
(Section~\ref{sec:setup-basic}). We then describe how we assign fault locations to measurements (Section~\ref{sec:setup-frameworks}).

\subsection{Basic circuit model and terminal boundary}
\label{sec:setup-basic}

\paragraph{Circuit blocks and cuts.}
We use the circuit model of Subsection~\ref{sec : prelim_circuit_model} and fix a finite window. The window
is divided into \emph{cuts} $t = 0, 1, \dots, T$ and \emph{blocks}
$\atcut{\CircuitBlock}{t+1/2}$ between consecutive cuts. Each block contains
Pauli measurements and Clifford unitaries acting on pairwise disjoint
supports, so each qubit takes part in at most one elementary operation per
block. The input stabilizer group $\mathcal{S}^{(0)}$, including the signs of its
generators, and the input logical operators are given. \\

\begin{remark}
In our constructions, we always use a window with an odd number of cuts,
appending one idle block before the terminal round when necessary. For an even number of cuts, the distance does not change but we cannot include the products of adjacent logical representatives (Section \ref{sec:adjacent-logical-generators}) because otherwise we will have $k=0$.

\paragraph{Terminal boundary.}
Unless stated otherwise, we append a perfect terminal readout at $T+\tfrac12$, immediately after the final fault cut $T$. This readout measures a generating set of $\Stab^{(T)}$. The stabilizers in the final timestep are then unmasked, i.e. $\atcut{\Uspace}{\FinalCut} = \atcut{\Stab}{\FinalCut}$. This terminal round prevents faults near the end of the window from escaping detection merely because the measurement record
ends too soon. For example, consider a data-qubit Pauli error immediately before the last noisy measurement round. If it
anticommutes with just one measured stabilizer generator, a measurement error that flips that generator's reported outcome can conceal the data error. With no subsequent measurements,
this two-fault event can remain undetectable. A final perfect measurement of that generator reveals the residual syndrome. We therefore regard the perfect terminal round as an \emph{artificial boundary condition} that is required to have a well-defined fault distance. 

\end{remark}

\subsection{Representing measurement errors}
\label{sec:setup-frameworks}

In a model where faults live only on the cuts, a flipped measurement outcome has no location of its own. Instead we represent measurement errors as Pauli errors on adjacent cuts (See figure~\ref{fig:formalism_a_demo}). 

\begin{definition}
Let $m_i^{t+1/2}$ be a measurement between the time slices $t$ and $t+1$. A flipped measurement outcome is represented as any propagator of the form $P^{(t)}P^{(t+1)}$ that also satisfies the following property: 
\begin{equation}
\operatorname{supp}_{\mathrm{sp}} (P)\subseteq \operatorname{supp}(m_i) \qquad \text{and} \qquad \symp(P,m_i)=1
\end{equation}
where, $\operatorname{supp}_{\mathrm{sp}}\!\left(P\right)$ denotes the \emph{spatial} support of $P$ (Equation~\ref{eq:spatial-support}). Thus, a measurement error has non-unique Pauli representations in $\VST$.
\end{definition}

Faults in
the terminal round cannot be represented, which is consistent with that
round being perfect. For this framework a weight one measurement error is treated as at least a weight two spacetime Pauli error. (If a block were allowed to contain overlapping commuting measurements, minimum weight of $P$ could be higher). 

\begin{figure}[ht]
    \centering
    \includegraphics[width=\linewidth]{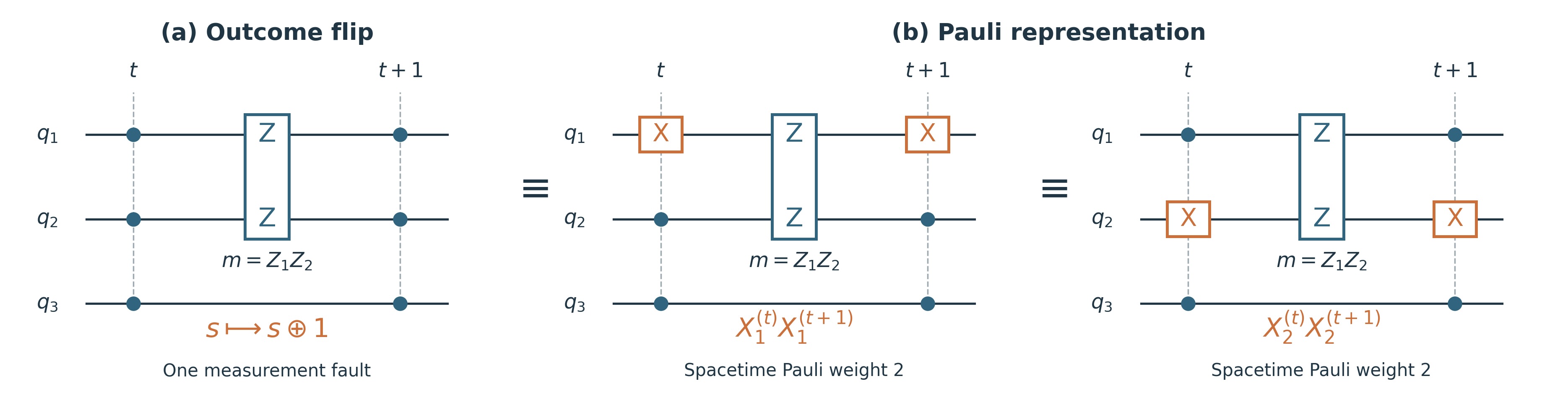}
    \caption{Representing a measurement error on adjacent cuts. (a) The reported outcome of measuring $m=Z_1Z_2$ is flipped. (b) Two equivalent representations are $X_1^{(t)}X_1^{(t+1)}$ and $X_2^{(t)}X_2^{(t+1)}$: the Pauli before the measurement flips its outcome, while
    the Pauli afterward cancels the residual data error. Thus, a single measurement fault admits non-unique representations of spacetime Pauli weight two.}
    \label{fig:formalism_a_demo}
\end{figure}

\subsection{Steady state assumption}\label{sec:steady_stage}
In Section~\ref{sec: SS code}, we construct the subsystem spacetime code under a \emph{steady-state} assumption \cite{blackwell2025code}. This assumption is not necessary and is only used for clearer explanation and we will show in Section \ref{sec:beyond_steady_state} that the subsystem spacetime can be extended to beyond the steady state to incorporate a general Clifford circuit. In the steady state assumption, the measurement sequence preserves the encoded logical information throughout the circuit. In particular, measurements may update the instantaneous stabilizer group, but they must not measure a nontrivial logical operator and thereby project the encoded state onto one of its eigenspaces. This assumption allows us to track a fixed logical Pauli space across the measurement window, even though its physical representatives may change from one cut to the next.

\section{The Subsystem Spacetime Code}
\label{sec: SS code}
We first provide a simple recipe for explicitly constructing the subsystem spacetime code of any given Clifford circuit under the assumptions of Section \ref{sec:setup} in Subsection~\ref{sec:spacetime-gauge-construction}. Then, we examine the elements of its gauge group and provide a natural classification in a way that resembles the masking properties of instantaneous stabilizer groups. We also demonstrate that the construction possesses properties that make it useful for verifying a circuit's fault tolerance, that is the distance from the corresponding subsystem spacetime code is exactly the fault distance of the circuit. Lastly we consider  extensions beyond the steady state assumption in Subsection \ref{sec:beyond_steady_state}.

\subsection{Construction of the gauge group}
\label{sec:spacetime-gauge-construction}
In this subsection, we provide a simple recipe useful for explicitly constructing the subsystem spacetime code's gauge group.

At each cut $t$, let $\atcut{\Gspace}{t}$ be the
instantaneous gauge group of Definition~\ref{def : IGG},
determined by the specified measurement window and the chosen
destabilizers of masked stabilizers. We regard
$\atcut{\Gspace}{t}$ as a subspace of $\VST$ supported
only at cut $t$. Let $\Rcomm$ be the commuting propagator
space defined in Definition~\ref{def:comm_prop_space}. \\

\begin{definition}[Subsystem spacetime gauge group]
\label{def:spacetime-gauge-group}
The gauge group of the subsystem spacetime code is
\begin{equation}
    \Gspace
    \coloneqq
    \left\langle
        \atcut{\Gspace}{0},
        \atcut{\Gspace}{1},
        \ldots,
        \atcut{\Gspace}{\FinalCut},
        \Rcomm
    \right\rangle
    =
    \left(
        \bigoplus_{t=0}^{\FinalCut}
        \atcut{\Gspace}{t}
    \right)
    \oplus \Rcomm.
    \label{eq:spacetime-gauge-group}
\end{equation}
The subsystem spacetime code is the subsystem code on
$\VST$ defined by this gauge group.
\end{definition}

Each instantaneous gauge group includes the stabilizers at
its cut together with the selected destabilizers of temporarily
and permanently masked stabilizers. Including these groups
in $\Gspace$ incorporates the gauge freedoms associated with
syndrome information that is unavailable within the measurement
window. Commuting propagators supply the relations between different
cuts. By their definition, an error before a circuit block
and its propagated counterpart after that block have the same
circuit action without changing any intervening measurement
outcome. Their product therefore represents a trivial error.
Including $\Rcomm$ in $\Gspace$ identifies these error
configurations modulo gauge.

This definition specifies
a generating set for the spacetime gauge group; the following subsections analyze the
resulting gauge group and its relation to the detectors and logical information of the circuit. 
\paragraph{Overview of the gauge group.}
Before analyzing the gauge group in detail, we summarize its structure in
Fig.~\ref{fig:ss-ingredients}; the remaining subsections of this section establish it.
The generators of $\Gspace$ are of two kinds. The detector stabilizers $\Sdet$ commute
with every element of $\Gspace$ and form its center (see Subsection~\ref{sec:gauge-decomposition}). They are the stabilizers of the
code and carry all of its syndrome information. Every other generator belongs to a
gauge qubit, a pair of operators that anticommute with each other and commute with all
other pairs. These pairs have three origins.
\begin{enumerate}
    \item \emph{Masked stabilizers.} (Definition \ref{def : IGG}) A temporarily or permanently masked stabilizer at
    cut $t$ pairs with its destabilizer at the same cut. Its syndrome cannot be learned
    from the measurement window, so its value is a gauge degree of freedom. The
    destabilizers of permanently masked stabilizers are fixed by the measurements that
    mask them. Those of temporarily masked stabilizers can be chosen, for example to
    maximize the unmasked distance.
    \item \emph{Unmasked stabilizers and commuting propagators.} (Subsection~\ref{sec:unmasked-destabilizers}) An unmasked stabilizer
    at a single cut is not a stabilizer of the spacetime code; only its combinations
    into detectors are. The remaining combinations, the unmasked stabilizers modulo
    $\Sdet$, pair with commuting propagators.
    \item \emph{Adjacent logical generators.}   (Subsection~\ref{sec:adjacent-logical-generators}) Products of representatives of the same
    logical operator at adjacent cuts lie in $\Gspace$. When the number of cuts is odd,
    they pair up into gauge qubits.
\end{enumerate}
Counting these pairs shows that what remains is exactly the $k$ logical qubits carried
by the circuit (see Subsection~\ref{sec:gauge-decomposition}).

Of these three kinds of gauge qubit, the first is already in hand: each masked
stabilizer comes paired with its destabilizer by construction of
$\atcut{\Gspace}{t}$ (Definition~\ref{def : IGG}). The other
two are not yet pairs. Section~\ref{sec:unmasked-destabilizers} shows that
commuting propagators supply partners for the unmasked stabilizers, and
Section~\ref{sec:adjacent-logical-generators} that adjacent logical generators lie in
$\Gspace$ and pair among themselves.

\subsection{Commuting propagators and unmasked destabilizers}
\label{sec:unmasked-destabilizers}

Let
\begin{equation}~\label{eq: unmask}
    \Uspace
    \coloneqq
    \bigoplus_{t=0}^{T}\atcut{\Uspace}{t}
\end{equation}
be the space generated by all unmasked
stabilizers supported on a single cut. Since each detector stabilizer restricts
to an unmasked stabilizer at every cut,
$\Sdet\subseteq\Uspace$. The space $\Uspace$ is
isotropic with respect to $\bigoplus_{t=0}^{\FinalCut}\atcut{\Gspace}{t}$.

An unmasked stabilizer at a single cut is not a stabilizer of the spacetime code; only
detector stabilizers are. Commuting propagators supply the missing destabilizers: they
can anticommute with individual cut-supported unmasked stabilizers while commuting
with every detector stabilizer. In this subsection we make this precise and give an
algorithm that constructs the pairs that form gauge qubits. \\

\begin{theorem}[Unmasked operators commuting with all propagators]
\label{thm:unmasked-propagator-kernel}
Let
$
    \Uspace
    \coloneqq
    \bigoplus_{t=0}^{\FinalCut}\atcut{\Uspace}{t}
$
be as in Eq~\ref{eq: unmask} .
Assume that the perfect terminal readout at $T+1/2$ measures a generating set of $\Stab^{(T)}$. 
Let $\Sdet$ be the detector stabilizer space defined in
Section~\ref{sec : prelim_fauls and detectors}.
Then
\begin{equation}
\Uspace\cap\Rcomm^\perp=\Sdet.
\label{eq:unmasked-propagator-kernel}
\end{equation}
In other words, the combinations of unmasked stabilizers that commute with every commuting propagator are precisely the detector stabilizers.
\end{theorem}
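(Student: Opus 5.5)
We prove the two inclusions separately. The whole argument rests on one observation: an element $u=\bigoplus_t u^{(t)}\in\Uspace$ commutes with a commuting propagator $r^{(t)}(v)=v^{(t)}+[U v U^\dagger]^{(t+1)}$, where $U=U^{(t+1/2)}$ and $v\in\langle\ms^{(t+1/2)}\rangle^\perp$, if and only if
\begin{equation}
    \symp\!\left(v,\,u^{(t)}+U^\dagger u^{(t+1)}U\right)=0 .
\end{equation}
Write $\Delta^{(t)}(u)\coloneqq u^{(t)}+U^{(t+1/2)\dagger}u^{(t+1)}U^{(t+1/2)}$. Letting $v$ range over $\langle\ms^{(t+1/2)}\rangle^\perp$ and using $(\mathsf W^\perp)^\perp=\mathsf W$, we obtain
\begin{equation}
    u\in\Rcomm^\perp
    \iff
    \Delta^{(t)}(u)\in\langle\ms^{(t+1/2)}\rangle
    \quad\text{for every } 0\le t<T .
\end{equation}
In words, consecutive slices of $u$ may differ, after pulling back through the Clifford layer, only by a product of the measurements performed in that block.

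\textbf{The inclusion $\Sdet\subseteq\Uspace\cap\Rcomm^\perp$.} Membership in $\Uspace$ was already noted before the theorem. Take $u=s_\delta$. By Eq.~\eqref{eq:detector-cut-stabilizer}, the sum defining $\Uelt{\delta}{t}$ telescopes, giving
\begin{equation}
    \Uelt{\delta}{t}=d_\delta^{(t+1/2)}+U^{(t+1/2)\dagger}\,\Uelt{\delta}{t+1}\,U^{(t+1/2)} .
\end{equation}
Hence $\Delta^{(t)}(s_\delta)=d_\delta^{(t+1/2)}\in\langle\ms^{(t+1/2)}\rangle$, and the criterion above gives $s_\delta\in\Rcomm^\perp$.

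\textbf{The inclusion $\Uspace\cap\Rcomm^\perp\subseteq\Sdet$.} This is the substantive direction. Take $u\in\Uspace\cap\Rcomm^\perp$. For each $t<T$ choose $d^{(t+1/2)}\in\langle\ms^{(t+1/2)}\rangle$ with $\Delta^{(t)}(u)=d^{(t+1/2)}$. The choice is unique modulo products of measurements acting trivially as Paulis, for example a measurement and its negation, and such ambiguities only change which outcomes are named. At the final cut, $u^{(T)}\in\Uspace^{(T)}=\Stab^{(T)}$, so the terminal round supplies $d^{(T+1/2)}\coloneqq u^{(T)}$ as a product of measured generators of $\Stab^{(T)}$. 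Unrolling the recursion yields
\begin{equation}
    u^{(t)}=\sum_{r\ge t}C_{r,t}^\dagger d^{(r+1/2)}C_{r,t},
\end{equation}
which has exactly the form of Eq.~\eqref{eq:detector-cut-stabilizer}. The remaining task, and the main obstacle, is to show that the collection of measurement events underlying $\{d^{(r+1/2)}\}$ is a genuine detector $\delta$, that is, a parity with a fixed value in every fault-free run. Then $u=s_\delta$.

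\textbf{Verifying the detector.} Here we use the stabilizer and logical update rules (Procedures~\ref{lem:stab-update}, \ref{lem:logical-update}). We argue by induction backward from $T$ that, in every fault-free run, the eigenvalue of $u^{(t)}$ on the state at cut $t$ equals $\sum_{r\ge t}O(d^{(r+1/2)})$.
\begin{itemize}
    \item Base case: at cut $T$ this holds because the terminal round is perfect.
    \item Inductive step: $u^{(t)}$ equals $d^{(t+1/2)}$ times the pullback of $u^{(t+1)}$. The operator $d^{(t+1/2)}$ is a product of measurements made in block $t+1/2$, and its outcome is recorded there. Because $u^{(t)}\in\Stab^{(t)}$ and $u^{(t+1)}\in\Stab^{(t+1)}$ are both stabilizers, the sign bookkeeping of the update rules gives $O(u^{(t)})=O(d^{(t+1/2)})+O(u^{(t+1)})$.
\end{itemize}
The real care is needed exactly in this inductive step, namely checking that the sign relation holds even when measurements in the block anticommute with other stabilizers. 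I would handle this by noting that $u^{(t)}$ commutes with every measurement in the block: it is a stabilizer, and measured products stay in the group. Cases~1 and 3 of Procedure~\ref{lem:stab-update} then apply, and they preserve its sign.

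\textbf{Conclusion.} At $t=0$, $u^{(0)}\in\Stab^{(0)}$ has a known sign, so $\sum_r O(d^{(r+1/2)})$ is fixed in every fault-free run. This sum is therefore a detector $\delta$ with $\DetMap(\delta)=u$, and $u\in\Sdet$.
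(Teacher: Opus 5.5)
Your proof is correct and follows the paper's argument. You use the same reduction $\Delta^{(t)}(u)\in\langle\ms^{(t+1/2)}\rangle$ via nondegeneracy, the telescoping identity for $\Sdet\subseteq\Uspace\cap\Rcomm^\perp$, and the terminal choice $d^{(T+1/2)}=u^{(T)}$ for the reverse inclusion, while your backward induction spells out the paper's one-line ``iterating the relations'' step.

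One justification in that induction needs fixing. Being a stabilizer is not enough for $u^{(t)}$ to commute with the block's measurements, since masked stabilizers do not. The right reason is that $u^{(t)}=d^{(t+1/2)}\cdot U^{(t+1/2)\dagger}u^{(t+1)}U^{(t+1/2)}$ with $u^{(t+1)}\in\Stab^{(t+1)}$, and $\Stab^{(t+1)}$ contains the measured operators. Also, a stabilizer that commutes with the measured operator keeps its sign in every case of Procedure~\ref{lem:stab-update}, including Case~2, not only in Cases~1 and~3.
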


\begin{proof}
We first show that every detector stabilizer commutes
with $\Rcomm$. By Eq.~\eqref{eq:detector-cut-stabilizer}
and
\begin{equation}
    C_{r,t}=C_{r,t+1}\atcut{U}{t+1/2},
    \qquad r\geq t+1.
\end{equation}
For each internal block, $0\leq t<T$, the restrictions of a
detector stabilizer satisfy
\begin{equation}
    \Uelt{\delta}{t}
    =
    d_\delta^{(t+1/2)}
    \left(\atcut{U}{t+1/2}\right)^\dagger
    \Uelt{\delta}{t+1}
    \atcut{U}{t+1/2}.
    \label{eq:detector-adjacent-restrictions}
\end{equation}

Consider a commuting propagator
\begin{equation}
    P^{(t)}
    +
    \left[
        \atcut{U}{t+1/2}
        P
        \left(\atcut{U}{t+1/2}\right)^\dagger
    \right]^{(t+1)}.
\end{equation}
By definition, $P$ commutes with every measurement
in $\atcut{\ms}{t+1/2}$, and hence with their
product $d_\delta^{(t+1/2)}$.
Equation~\eqref{eq:detector-adjacent-restrictions}
and Clifford invariance of the symplectic form give
\begin{equation}
\begin{split}
    \symp\left(P,\Uelt{\delta}{t} \right) &= \symp \left( P, d_\delta^{(t+1/2)}
    \left(\atcut{U}{t+1/2}\right)^\dagger
    \Uelt{\delta}{t+1}
    \atcut{U}{t+1/2}\right) \\
    &=   \symp\left(P, d_\delta^{(t+1/2)}\right) + \symp \left( P,
    \left(\atcut{U}{t+1/2}\right)^\dagger
    \Uelt{\delta}{t+1}
    \atcut{U}{t+1/2}\right) \\
    &= \symp\!\left(
        \atcut{U}{t+1/2}
        P
        \left(\atcut{U}{t+1/2}\right)^\dagger,
        \Uelt{\delta}{t+1}
    \right)
\end{split}
\end{equation}

The two contributions to the pairing of every commuting propagators with
$\DetStab{\delta}$ over adjacent timesteps therefore cancel over $\Ftwo$. Since $\DetStab{\delta}\in\Uspace$, this proves
\begin{equation}
    \Sdet\subseteq\Uspace\cap\Rcomm^\perp.
\end{equation}

Conversely, let $v\in\Uspace\cap\Rcomm^\perp$,
with restriction $v^{(t)}\in\atcut{\Uspace}{t}$
at each cut. Commutation with every commuting
propagator across the block $t+1/2$ implies
\begin{equation}
\begin{split}
    \symp (v, P) = 0 \quad\Rightarrow\quad &\symp(\atcut{v}{t}, P) = \symp\left( \atcut{v}{t+1}, \atcut{U}{t+1/2}
        P
        \left(\atcut{U}{t+1/2}\right)^\dagger\right) \\
    &\symp(\atcut{v}{t}, P) =  \symp\left( \left(\atcut{U}{t+1/2}\right)^\dagger \atcut{v}{t+1} \atcut{U}{t+1/2}, 
        P
        \right) \\
     &\symp\left( \atcut{v}{t}\left(\atcut{U}{t+1/2}\right)^\dagger \atcut{v}{t+1} \atcut{U}{t+1/2}, 
        P
        \right) = 0.  
\end{split}
\end{equation}
for every Pauli $P$ commuting with all measurements
in $\atcut{\ms}{t+1/2}$. By nondegeneracy of the
Pauli symplectic form,
\begin{equation}
    v^{(t)}
    \left(\atcut{U}{t+1/2}\right)^\dagger
    v^{(t+1)}
    \atcut{U}{t+1/2}
    \in
    \left\langle\atcut{\ms}{t+1/2}\right\rangle.
    \label{eq:unmasked-measurement-product}
\end{equation}
For each internal block, choose a product of measurements
$d^{(t+1/2)}$ realizing Eq.~(48). At the terminal boundary,
$v^{(T)}\in U^{(T)}=\Stab^{(T)}$, so the perfect
readout provides a product $d^{(T+1/2)}=v^{(T)}$.
Iterating the resulting relations expresses the known input
value of $v^{(0)}$ as a parity of these measurement outcomes.
This parity defines a detector $\delta$ whose cut restrictions
are $u_\delta^{(t)}=v^{(t)}$. Hence
$v=s_\delta\in S_{\mathrm{det}}$, proving the reverse inclusion.

\end{proof}
The next theorem shows that we can pair up unmasked stabilizers and commuting propagators so that they form gauge qubits. \\

\begin{theorem}[Destabilizers of the unmasked stabilizers]
\label{thm:unmasked-gauge-partners}
For any representatives $u_1,\ldots,u_q$ of a basis
of $\Uspace/\Sdet$, there exist
$r_1,\ldots,r_q\in\Rcomm$ such that
\begin{equation}
    \symp(u_i,r_j)=\delta_{ij}.
    \label{eq:unmasked-dual-pairing}
\end{equation}
Every $r_j$ commutes with $\Sdet$.
\end{theorem}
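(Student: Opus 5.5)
This is a linear-algebra duality argument built on Theorem~\ref{thm:unmasked-propagator-kernel}. Consider the bilinear pairing
\begin{equation}
    \beta:\Uspace\times\Rcomm\to\Ftwo,\qquad \beta(u,r)=\symp(u,r),
\end{equation}
and the induced linear map $\Phi:\Uspace\to\dual{\Rcomm}$, $u\mapsto\symp(u,\cdot)|_{\Rcomm}$. The kernel of $\Phi$ is exactly $\Uspace\cap\Rcomm^\perp$. By Theorem~\ref{thm:unmasked-propagator-kernel} this kernel equals $\Sdet$, so $\Phi$ descends to an injective linear map $\bar\Phi:\Uspace/\Sdet\hookrightarrow\dual{\Rcomm}$. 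The plan is to convert this injectivity into the existence of a dual family in $\Rcomm$.

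Concretely, let $u_1,\ldots,u_q$ represent a basis of $\Uspace/\Sdet$. Injectivity of $\bar\Phi$ means the functionals $\Phi(u_1),\ldots,\Phi(u_q)$ are linearly independent in $\dual{\Rcomm}$: if $\sum_i c_i\Phi(u_i)=0$, then $\sum_i c_iu_i\in\Sdet$, which forces all $c_i=0$ since the classes are a basis. Now consider the evaluation map
\begin{equation}
    \Psi:\Rcomm\to\Ftwo^{q},\qquad r\mapsto\bigl(\symp(u_1,r),\ldots,\symp(u_q,r)\bigr).
\end{equation}
Its transpose sends the standard basis of $\dual{(\Ftwo^q)}$ to $\Phi(u_1),\ldots,\Phi(u_q)$. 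The transpose is therefore injective, and so $\Psi$ is surjective, using the finite dimensionality of $\Rcomm$ (it is a subspace of the finite space $\VST$). Choosing $r_j\in\Psi^{-1}(e_j)$ gives $\symp(u_i,r_j)=\delta_{ij}$. The same argument can also be phrased as row reduction: put a basis of $\Rcomm$ as columns of the $q\times\dim\Rcomm$ matrix $[\symp(u_i,\rho_k)]$. This matrix has full row rank, so one can pick $q$ independent columns, invert the corresponding submatrix, and form the $r_j$ as combinations of those columns. That version also gives the explicit algorithm the section promises.

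The last claim, that every $r_j$ commutes with $\Sdet$, follows from the inclusion $\Sdet\subseteq\Rcomm^\perp$ established in the first half of the proof of Theorem~\ref{thm:unmasked-propagator-kernel}. Each $r_j$ lies in $\Rcomm$, so $\symp(s_\delta,r_j)=0$ for every detector stabilizer $s_\delta$.

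The substantive input is Theorem~\ref{thm:unmasked-propagator-kernel}, and in particular its reverse inclusion, which the terminal readout assumption is needed for. Given that theorem, the only step that needs care is justifying that the pairing is well defined on the quotient: adding an element of $\Sdet$ to some $u_i$ does not change $\Phi(u_i)$, again because $\Sdet\subseteq\Rcomm^\perp$. Hence the conclusion holds for \emph{any} choice of representatives, as the statement asserts.
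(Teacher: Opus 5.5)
Your proposal is correct and follows the paper's argument. The paper forms $A_{ij}=\symp(u_i,p_j)$ over a basis of $\Rcomm$, uses Theorem~\ref{thm:unmasked-propagator-kernel} to show $A$ has full row rank, and sets $r_j=\sum_k C_{kj}p_k$ with $AC=I$; this is your surjectivity-of-$\Psi$ argument in its row-reduction form, and your explicit derivation of commutation with $\Sdet$ from $\Sdet\subseteq\Rcomm^\perp$ fills in a point the paper's proof leaves implicit.
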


\begin{proof}
We show the proof using a constructive approach, let $\{p_1, p_2, \ldots, p_m\}$ be a basis for $\Rcomm$. Let us define,
\begin{equation}
    A_{ij} = \symp(u_i, p_j).
\end{equation}

Theorem \ref{thm:unmasked-propagator-kernel} guarantees that $A$ has full rank. If rows in $A$ are dependent then there would exist some $u = \sum_i a_i u_i$ such that it commutes with every $p_j$, hence with all of $\Rcomm$. Theorem \ref{thm:unmasked-propagator-kernel} would then imply $u\in \Sdet$, contradicting independence of the quotient basis. Thus, there exists a binary matrix $C$ which satisfies the following equation,
\begin{equation}
\label{eq:constrcu_destab}
    AC = I.
\end{equation}
Then, we construct the destabilizers of the unmasked stabilizers as $r_j = \sum_k C_{kj}p_k$ which ensures that, 
\begin{equation}
\begin{split}
    \symp(u_i, r_j) &= \sum_k \symp(u_i,p_k) C_{kj} \\ 
    &= \sum_k A_{ik}C_{kj} \\
    &= [I]_{ij}\\
    &= \delta_{ij}
\end{split}
\end{equation}
\end{proof}
Note that set of $r_j$ generated by Theorem \ref{thm:unmasked-gauge-partners} need not commute with one another or with other gauge generators; Lemma~\ref{lem:canonical-unmasked-pairs} corrects this, at the cost of leaving $\Rcomm$.

\subsection{Adjacent logical generators}
\label{sec:adjacent-logical-generators}
In this subsection, we show that the product of adjacent
representatives is a gauge operator.

Choose representatives
$\{\bar X_i^{(t)},\bar Z_i^{(t)}\}_{i=1}^{k}$
of the logical Paulis at each cut.
Representatives with the same label are related
by the circuit's logical update rule. Define
\begin{equation}
    \Ladj
    \coloneqq
    \left\langle
        \bar X_i^{(t)}+\bar X_i^{(t+1)},
        \bar Z_i^{(t)}+\bar Z_i^{(t+1)}
        :
        1\leq i\leq k,\;
        0\leq t<T
    \right\rangle.
\end{equation}

\begin{lemma}
\label{lem:adjacent-logicals-in-gauge}
The adjacent logical generators belong to $\Gspace$ i.e. $\Ladj \subseteq \Gspace$.
\end{lemma}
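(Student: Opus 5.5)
The plan is to show that for each logical label and each block $t+1/2$, the adjacent sum $\bar L^{(t)}+\bar L^{(t+1)}$ (with $\bar L\in\{\bar X_i,\bar Z_i\}$) decomposes as a commuting propagator plus elements of the instantaneous gauge groups $\atcut{\Gspace}{t}$ and $\atcut{\Gspace}{t+1}$. Since $\Gspace$ is spanned by $\bigoplus_t\atcut{\Gspace}{t}$ and $\Rcomm$, and $\Ladj$ is spanned by these generators, this is enough. The decomposition is driven by the logical update rule (Procedure~\ref{lem:logical-update}) applied across the block, followed by conjugation by $\atcut{U}{t+1/2}$. By the steady-state assumption (Section~\ref{sec:steady_stage}), Case~3 (logical measurement) never occurs, so only Cases~1 and~2 need to be handled.

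\emph{Case 1.} Suppose $\bar L^{(t)}$ commutes with every measurement in $\atcut{\ms}{t+1/2}$. Then $\bar L^{(t+1)}=\atcut{U}{t+1/2}\bar L^{(t)}\left(\atcut{U}{t+1/2}\right)^\dagger$, possibly up to a choice of representative. Hence $\bar L^{(t)}+\bar L^{(t+1)}=r^{(t)}(\bar L^{(t)})\in\Rcomm$ directly by Definition~\ref{def:comm_prop_space}. If the representative at $t+1$ was chosen differently, it differs from the propagated one by an element of $\atcut{\Stab}{t+1}\subseteq\atcut{\Gspace}{t+1}$. This is because $\atcut{\Gspace}{t+1}$ contains $\atcut{\Uspace}{t+1}$, $\atcut{\Tspace}{t+1}$ and $\atcut{\Pspace}{t+1}$, which together span $\atcut{\Stab}{t+1}$.

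\emph{Case 2.} Suppose $\bar L^{(t)}$ anticommutes with some measurement $m$ in the block. The update replaces $\bar L^{(t)}$ by $\bar L^{(t)}s_j$, where $s_j\in\atcut{\Stab}{t}$ is the generator masked by $m$ (possibly a product over several measurements in the block, handled one at a time since they have disjoint supports and commute). The product $\bar L^{(t)}s_j$ commutes with all of $\atcut{\ms}{t+1/2}$. Therefore
\begin{equation}
\bar L^{(t)}+\bar L^{(t+1)} = s_j^{(t)} + r^{(t)}\!\left(\bar L^{(t)}s_j\right),
\end{equation}
with $s_j^{(t)}\in\atcut{\Stab}{t}\subseteq\atcut{\Gspace}{t}$ and the second term in $\Rcomm$. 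Any residual representative freedom at $t+1$ is again absorbed into $\atcut{\Stab}{t+1}$.

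The main obstacle is bookkeeping rather than any deep point. I need to confirm that the representatives $\bar L^{(t)}$ fixed ``by the circuit's logical update rule'' are exactly those produced by Procedure~\ref{lem:logical-update}, so that all discrepancies lie in the instantaneous stabilizer groups. I also need to check that the relevant $s_j$ really belongs to $\atcut{\Stab}{t}$ at cut $t$, which holds because the update rule only multiplies by current generators. Finally, the masked stabilizer $s_j$ must be inside $\atcut{\Gspace}{t}$ regardless of whether it is classified as unmasked, temporarily masked, or permanently masked. This follows because Definition~\ref{def : IGG} includes all three classes of stabilizers, so no destabilizers are needed here. Phases are irrelevant since we work modulo phases.
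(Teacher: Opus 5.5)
Your proof is correct and is essentially the paper's argument: both write $\bar L^{(t)}+\bar L^{(t+1)}$ as a commuting propagator $r^{(t)}(\bar L^{(t)}s)$, plus $s^{(t)}\in\atcut{\Stab}{t}$, plus a stabilizer at cut $t+1$ that absorbs the choice of representative. The only difference is how $s$ is obtained. The paper uses a duality argument: if no such $s$ existed, some product of the block's measurements would commute with $\atcut{\Stab}{t}$ but anticommute with $\bar L^{(t)}$, i.e.\ it would be a logical measurement. You instead build $s$ from the update rule, which also works. Within a block, the successively chosen generators are never an already-measured $m$, since those commute with the remaining block measurements, so each chosen generator indeed lies in $\atcut{\Stab}{t}$.
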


\begin{proof}
Consider a logical representative
$\ell^{(t)}$ and the next circuit block. We can multiply $\ell^{(t)}$ by an instantaneous stabilizer $s$ so that the resulting representative $P=\ell^{(t)}+s$
commutes with every measurement in the block. Such a stabilizer is guaranteed to exist as otherwise there exists a product of measurements that commutes with every stabilizer but anti-commutes with the logical operator. This would be a logical measurement and isn't allowed according to the setup (Section \ref{sec:setup}).

The Clifford image of $P$ represents the same logical Pauli at the next cut, so
\begin{equation}
    \ell^{(t+1)}
    =
    U_tPU_t^\dagger+s'
\end{equation}
for some $s'\in\atcut{\Stab}{t+1}$.
Therefore,
\begin{equation}
    \ell^{(t)}+\ell^{(t+1)}
    =
    \left[
        P^{(t)}+(U_tPU_t^\dagger)^{(t+1)}
    \right]
    +s^{(t)}+s'^{(t+1)}.
\end{equation}
The first two terms form an element of $\Rcomm$, while
the remaining terms belong to the instantaneous
gauge groups. Thus the adjacent logical
generator lies in $\Gspace$.
\end{proof}

\begin{lemma}[Pairing adjacent logical generators]
\label{lem:adjacent-logical-pairing}
If $T+1$ is odd, then $\Ladj$ is nondegenerate
and has dimension $2kT$. It therefore supplies
$kT$ gauge qubits.
\end{lemma}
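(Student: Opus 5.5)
The plan is to compute the Gram matrix of the natural generators of $\Ladj$ under $\symp$ and show that it is invertible over $\Ftwo$ exactly when $T$ is even, i.e.\ when $T+1$ is odd. Write
\[
a_i^{(t)} \coloneqq \bar X_i^{(t)}+\bar X_i^{(t+1)}, \qquad b_i^{(t)} \coloneqq \bar Z_i^{(t)}+\bar Z_i^{(t+1)}, \qquad 1\le i\le k,\; 0\le t<T .
\]
These are $2kT$ generators. If their Gram matrix $\Gamma$ is invertible, then they are linearly independent, since a dependency would give a nonzero vector in the kernel of $\Gamma$. So $\dim\Ladj=2kT$. Invertibility also says that no nonzero element of $\Ladj$ is orthogonal to all of $\Ladj$, so $\Ladj$ is nondegenerate. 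A nondegenerate $2kT$-dimensional symplectic space then admits a symplectic basis by symplectic Gram--Schmidt, which gives $kT$ anticommuting pairs, i.e.\ $kT$ gauge qubits.

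First I need the cut-wise relations $\symp(\bar X_i^{(t)},\bar X_j^{(t)})=\symp(\bar Z_i^{(t)},\bar Z_j^{(t)})=0$ and $\symp(\bar X_i^{(t)},\bar Z_j^{(t)})=\delta_{ij}$. I would justify these by choosing a standard symplectic basis of logicals at $t=0$. Procedure~\ref{lem:logical-update} and Clifford conjugation preserve these commutation relations: each updated representative differs from the old one by stabilizers, which commute with all logicals, followed by a symplectic automorphism. The steady-state assumption rules out Case~3 logical measurements, so a fixed labelled basis survives to every cut.

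Because $\symp$ on $\VST$ is the sum over cuts, each pairing of generators reduces to counting shared cuts. Two $a$'s, or two $b$'s, pair to zero. For the mixed pairings,
\[
\symp\bigl(a_i^{(t)},b_j^{(s)}\bigr)=\delta_{ij}\,\bigl|\{t,t+1\}\cap\{s,s+1\}\bigr| \bmod 2 .
\]
This equals $2\equiv 0$ if $s=t$, equals $1$ if $|s-t|=1$, and equals $0$ otherwise. Hence
\[
\Gamma=\begin{pmatrix}0 & I_k\otimes A_T\\ I_k\otimes A_T^{\mathsf T} & 0\end{pmatrix},
\]
where $A_T$ is the $T\times T$ adjacency matrix of the path graph: zero diagonal, ones on the first off-diagonals. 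So $\Gamma$ is invertible iff $A_T$ is invertible over $\Ftwo$.

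The remaining step is to compute $\det A_T$. Expanding along the first row gives the recurrence $D_T=-D_{T-2}$, with $D_1=0$ and $D_2=-1$. Thus $D_T=\pm1$ for $T$ even and $D_T=0$ for $T$ odd. Over $\Ftwo$, $A_T$ is therefore invertible precisely when $T$ is even, that is, when the number of cuts $T+1$ is odd. This also explains the remark that an even number of cuts leaves $\Ladj$ degenerate.

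I expect the main obstacle to be the bookkeeping that the per-cut representatives really satisfy the canonical commutation relations at every cut, rather than the determinant itself. The argument needs the labelled $\bar X_i^{(t)},\bar Z_i^{(t)}$ to remain a symplectic basis of logicals modulo stabilizers after every measurement update. Stabilizer corrections do not change these pairings, since stabilizers commute with logicals. If that check gives any trouble, I would choose the representatives at each cut directly as a symplectic basis and use Lemma~\ref{lem:adjacent-logicals-in-gauge} only for membership of $\Ladj$ in $\Gspace$.
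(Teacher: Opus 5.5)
Your argument is correct, but it takes a different route from the paper's.

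\textbf{Your route.} You write down the full Gram matrix of the $2kT$ generators and reduce everything to the invertibility over $\Ftwo$ of the path-graph adjacency matrix $A_T$.

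\textbf{The paper's route.} The paper computes the radical directly in per-cut coordinates. If $g\in\Ladj$ commutes with every $\bar Z_j^{(s)}+\bar Z_j^{(s+1)}$, then its $\bar X_j$-coefficients at cuts $s$ and $s+1$ agree, so each logical coefficient is constant in $t$. Every element of $\Ladj$ has each coefficient summing to zero over the cuts. With an odd number of cuts, a constant with zero sum must vanish. Linear independence is only asserted there.

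\textbf{How they relate.} The two computations concern the same form: $A_T\equiv B^{\mathsf T}B \pmod 2$, where $B$ is the $(T+1)\times T$ cut--edge incidence matrix of the path. The paper works in cut coordinates (the even-weight vectors), and you work in edge coordinates.

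\textbf{What each buys.} Your route gets independence, nondegeneracy and the converse (degeneracy when $T$ is odd) from a single determinant. It also produces the gauge pairs explicitly: $\bigl(\bar X_i^{(t)}+\bar X_i^{(t+1)},\ \sum_s (A_T^{-1})_{st}(\bar Z_i^{(s)}+\bar Z_i^{(s+1)})\bigr)$ is already a canonical basis. The paper's closing remark about ``suitable linear combinations'' leaves this implicit. The paper's route avoids determinants and identifies the would-be radical, $\sum_{t=0}^{T}\bar X_i^{(t)}$ and $\sum_{t=0}^{T}\bar Z_i^{(t)}$. This lies in $\Ladj$ exactly when the number of cuts is even, which makes the role of the parity condition transparent.

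Both arguments rely on the per-cut canonical commutation relations of the representatives. You justify these explicitly, while the paper uses them implicitly.
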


\begin{proof}
The $2kT$ adjacent logical generators are linearly independent, hence $\dim \Ladj = 2kT$. It remains to show that $\Ladj$ is non-degenerate.

Suppose $g\in\Ladj$ commutes with every adjacent logical generator.
Commutation with the adjacent $Z$-type generators forces the
$X$-coefficients of $g$ to agree at consecutive cuts, and similarly
for its $Z$-coefficients. Thus each logical coefficient is constant
across all cuts.

However, every adjacent generator contributes each logical
coefficient twice, so its sum over all cuts is zero over
$\mathbb F_2$. Since the number of cuts is odd, a constant
coefficient can have zero sum only if it vanishes.
Hence $g=0$, proving nondegeneracy.

A nondegenerate Pauli space of dimension $2kT$
forms $kT$ gauge qubits.
\end{proof}

The $X$- and $Z$-type generators
on the same pair of cuts commute, so the canonical
gauge pairs are formed by pairing suitable
linear combinations of adjacent generators.

\subsection{The center and the gauge-qubit decomposition}
\label{sec:gauge-decomposition}
We now determine the structure of the gauge group. We show that its center is exactly
the detector stabilizer space $\Sdet$, and that the gauge qubits fall into three
categories, with two according to their masking properties: masked stabilizers paired with their
destabilizers, unmasked stabilizers paired with commuting propagators, and pairs formed
from adjacent logical representatives. Counting dimensions then confirms that the code
encodes exactly the $k$ logical qubits carried by the circuit.

The partners $r_i$ of Theorem~\ref{thm:unmasked-gauge-partners} pair correctly with the
$u_i$, but they may anticommute with one another and with the other elements in the gauge group. Lemma~\ref{lem:canonical-unmasked-pairs} removes
these anti-commutations by a symplectic Gram--Schmidt step. This does not change the pairings
with the $u_i$, and the corrected $r_i$ remain in $\Gspace$, although not necessarily in
$\Rcomm$. \\

\begin{algorithm}[ht]
\caption{Completing the canonical unmasked gauge pairs}\label{algorithm:canonical basis}

\begin{algorithmic}[1]%~
\Require Operators $u_i,r_i$ from
Theorem ~\ref{thm:unmasked-gauge-partners};
a canonical paired basis
$\{h_\alpha,h_\alpha^\sharp\}_\alpha$
of $\mathsf H$
\Ensure Canonical pairs
$\{u_i,\widehat r_i\}_{i=1}^{q}$
commuting with $\mathsf H$ and $\Sdet$
\For{$i=1,\ldots,q$}
    \State Set
    \begin{equation}
        r_i'
        \gets
        r_i+
        \sum_\alpha
        \left[
            \symp(r_i,h_\alpha^\sharp)h_\alpha
            +
            \symp(r_i,h_\alpha)h_\alpha^\sharp
        \right].
    \end{equation}
    \State Set
    \begin{equation}
        \widehat r_i
        \gets
        r_i'
        +
        \sum_{j<i}
            \symp(r_i',\widehat r_j)u_j.
    \end{equation}
\EndFor
\State \Return $\{(u_i,\widehat r_i)\}_{i=1}^{q}$.
\end{algorithmic}
\end{algorithm}
\begin{lemma}[Completing the unmasked gauge pairs]
\label{lem:canonical-unmasked-pairs}
Let $H_m$ denote the set of gauge qubits for the masked stabilizers and their destabilizers. Let $\mathsf H$ be the set of canonical paired basis $\{h_\alpha,h_\alpha^\sharp\}_\alpha$ for $H_m\cup \Ladj$ such that:\begin{equation}
    \symp(h_\alpha,h_\beta^\sharp)=\delta_{\alpha\beta},
    \qquad
    \symp(h_\alpha,h_\beta)
    =
    \symp(h_\alpha^\sharp,h_\beta^\sharp)=0.
\end{equation}   Let $u_i,r_i$ be the operators constructed in
Theorem~\ref{thm:unmasked-gauge-partners}.
We apply algorithm~\ref{algorithm:canonical basis}.
Then, the resulting operators satisfy
\begin{align}
    \symp(u_i,\widehat r_j)&=\delta_{ij},\\
    \symp(\widehat r_i,\widehat r_j)&=0,\\
    \symp(\widehat r_i,h)&=0
    \qquad\text{for every }h\in\mathsf H.
\end{align}
They also commute with $\Sdet$ and belong to
$\Gspace$. Thus the pairs
$\{u_i,\widehat r_i\}$ are canonical gauge pairs
orthogonal to $\mathsf H$.
\end{lemma}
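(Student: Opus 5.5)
The plan is to verify the three displayed identities directly from the two update steps of Algorithm~\ref{algorithm:canonical basis}, by bilinearity of $\symp$. Before doing so, I would record the structural facts the computation relies on.

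\begin{itemize}
\item[(a)] The elements of $\mathsf H$ lie in $\Gspace$ (by Definition~\ref{def : IGG} and Lemma~\ref{lem:adjacent-logicals-in-gauge}).
\item[(b)] Each $u_i$ commutes with every $h\in\mathsf H$.
\end{itemize}

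For (b): masked stabilizers, their destabilizers and the $u_i$ at the same cut lie in a basis of $\atcut{\Stab}{t}$ or its chosen symplectic completion, and each destabilizer pairs only with its own generator, so they commute with unmasked stabilizers. Adjacent logical generators commute with all instantaneous stabilizers, since the representatives at each cut lie in $(\atcut{\Stab}{t})^\perp$. If $\mathsf H$ has been symplectically orthogonalized from $H_m\cup\Ladj$, each $h$ is a combination of such elements, so (b) persists.

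With (a) and (b) in hand, the first step, $r_i'$, is the standard symplectic projection. Pairing $r_i'$ with $h_\beta$ gives $\symp(r_i,h_\beta)+\symp(r_i,h_\beta)=0$, and similarly with $h_\beta^\sharp$. Because each $u_j$ commutes with $\mathsf H$, we have $\symp(u_j,r_i')=\symp(u_j,r_i)=\delta_{ij}$.

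The second step adds only multiples of $u_j$, with $j<i$. These commute with $\mathsf H$ by (b), so orthogonality to $\mathsf H$ is preserved. They also commute with every other $u_l$, since $\Uspace$ is isotropic, so $\symp(u_l,\widehat r_i)=\delta_{li}$ is preserved.

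For mutual commutation I would argue by induction on $i$, using the fact that $\symp(u_j,\widehat r_l)=\delta_{jl}$ holds for all previously built $\widehat r_l$. For $l<i$,
\[
\symp(\widehat r_i,\widehat r_l)=\symp(r_i',\widehat r_l)+\sum_{j<i}\symp(r_i',\widehat r_j)\,\symp(u_j,\widehat r_l)=\symp(r_i',\widehat r_l)+\symp(r_i',\widehat r_l)=0 .
\]
The diagonal case $\symp(\widehat r_i,\widehat r_i)=0$ is automatic for the symplectic form.

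Membership in $\Gspace$ follows because $r_i\in\Rcomm\subseteq\Gspace$, $\mathsf H\subseteq\Gspace$ by (a), and $u_j\in\Uspace\subseteq\Gspace$, and $\Gspace$ is a subspace.

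Commutation with $\Sdet$ splits into three pieces:
\begin{itemize}
\item $r_i$ commutes with $\Sdet$ by Theorem~\ref{thm:unmasked-gauge-partners}.
\item The $u_j$ commute with $\Sdet$ because $\Sdet\subseteq\Uspace$ is isotropic.
\item Elements of $\mathsf H$ commute with $\Sdet$. The restriction $\Uelt{\delta}{t}$ of a detector stabilizer to cut $t$ is an unmasked stabilizer at that cut, so by (b) it commutes with the masked pairs at that cut. For adjacent logical generators, $\Uelt{\delta}{t}$ and $\Uelt{\delta}{t+1}$ are stabilizers while the logical representatives are in $(\atcut{\Stab}{t})^\perp$.
\end{itemize}

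I expect the main obstacle to be making (b) rigorous. This requires being explicit that the destabilizers $\boldsymbol\tau^{\sharp(t)}$ and $\boldsymbol p^{\sharp(t)}$ are chosen dual to a full adapted basis of $\atcut{\Stab}{t}$, including the unmasked part, as the definition of destabilizers demands. It also requires that the canonical basis of $H_m\cup\Ladj$ has been chosen so that it is genuinely nondegenerate and does not pick up unmasked components. If cross-cut anticommutation between masked destabilizers and adjacent logical generators forces a Gram--Schmidt step in forming $\mathsf H$, I would check that this step only adds elements of $H_m\cup\Ladj$, so (b) is preserved.
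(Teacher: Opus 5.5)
Your proposal is correct and follows the paper's proof essentially step for step. It uses the same symplectic projection of $r_i$ off the pairs in $\mathsf H$, the same inductive Gram--Schmidt correction by the $u_j$ (relying on $\symp(u_j,\widehat r_\ell)=\delta_{j\ell}$ for $\ell<i$ and on isotropy of $\Uspace$), and the same observation that all corrections lie in $\mathsf H+\Uspace$, a subspace of $\Gspace$ that commutes with $\Sdet$. Your fact (b) spells out the claim that $\mathsf H$ commutes with $\Uspace$, which the paper asserts without proof. The obstacle you flag is already handled by two things in the paper: its convention that each masked destabilizer commutes with every other generator of the adapted basis of $\atcut{\Stab}{t}$, and the lemma's hypothesis that $\mathsf H$ is a canonical paired basis of the span of $H_m\cup\Ladj$.
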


\begin{proof}
For each canonical pair
$h_\alpha,h_\alpha^\sharp$, adding
$\symp(r_i,h_\alpha^\sharp)h_\alpha$
cancels the pairing with $h_\alpha^\sharp$,
and adding
$\symp(r_i,h_\alpha)h_\alpha^\sharp$
cancels the pairing with $h_\alpha$.
This makes $r_i'$ commute with all of $\mathsf H$. These additions preserve its pairings with
$\Uspace$, because $\mathsf H$ commutes with
$\Uspace$.

Next, assume that
$\widehat r_1,\ldots,\widehat r_{i-1}$
have already been constructed.
For $\ell<i$, the next step ensures that
\begin{align}
    \symp(\widehat r_i,\widehat r_\ell)
    &=
    \symp(r_i',\widehat r_\ell)
    +
    \sum_{j<i}
        \symp(r_i',\widehat r_j)
        \symp(u_j,\widehat r_\ell)
    \nonumber\\
    &=0.
\end{align}
Adding the $u_j$ preserves all pairings with
$\Uspace$, since $\Uspace$ is isotropic, and
preserves commutation with $\mathsf H$.

Finally, the original $r_i$ commute with
$\Sdet$, and all corrections lie in
$\mathsf H+\Uspace$, which also commutes with
$\Sdet$. Every operator added belongs to
$\Gspace$, so $\widehat r_i\in\Gspace$.
\end{proof}

Lemma~\ref{lem:canonical-unmasked-pairs} completes the unmasked
operators into canonical gauge pairs, orthogonal to the masked
and adjacent-logical gauge pairs.
Theorem~\ref{thm:gauge-structure} shows that these pairs, together
with the detector stabilizers, span $\Gspace$.
Consequently, the detector stabilizers form its center, and
$\Gspace$ consists precisely of the undetectable faults with
trivial logical action. \\

\begin{definition}[Logical action of an undetectable fault]
An undetectable fault has \emph{trivial logical action} if it
preserves the values of all propagated logical Pauli observables, including the measurement-dependent sign updates. Equivalently, it induces the identity Pauli on the encoded
output state. Otherwise, it has \emph{nontrivial logical action}.  \\
\end{definition}

\begin{theorem}[Structure of the gauge group]
\label{thm:gauge-structure}
Let $\Gspace$ be the spacetime subsystem code constructed as in Def \ref{def:spacetime-gauge-group} then
\begin{enumerate}
    \item $Z(\Gspace)=S_{\mathrm{det}}$.
    \item A spacetime fault belongs to $\Gspace$ if and only if
    it is undetectable and has trivial logical action.
\end{enumerate}
\end{theorem}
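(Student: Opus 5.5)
The plan is to prove both parts by a dimension count on an explicit symplectic decomposition of $\Gspace$, using the canonical pairs already built.

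\textbf{Spanning set.} Let $\mathsf H$ be the canonical paired basis of $H_m\cup\Ladj$. This consists of the masked stabilizers with their destabilizers $\atcut{\Tspace}{t},\atcut{\Pspace}{t},\boldsymbol{\tau}^{\sharp(t)},\boldsymbol{p}^{\sharp(t)}$, together with the adjacent logical pairs (Lemma~\ref{lem:adjacent-logical-pairing}). Let $\{u_i,\widehat r_i\}$ be the pairs from Lemma~\ref{lem:canonical-unmasked-pairs}. The first step is to show
\begin{equation}
    \Gspace=\Sdet\oplus\langle u_i,\widehat r_i\rangle\oplus\mathsf H .
\end{equation}
The inclusion ``$\supseteq$'' is already known: Lemma~\ref{lem:adjacent-logicals-in-gauge} and Lemma~\ref{lem:canonical-unmasked-pairs} place every piece in $\Gspace$. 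For ``$\subseteq$'' it suffices to show that every generator of $\Rcomm$ lies in the right-hand side, since each $\atcut{\Gspace}{t}$ visibly does, with $\Uspace=\Sdet\oplus\langle u_i\rangle$.

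Take a commuting propagator $r$. Subtract its $\symp$-components against the $u_i$, against $\mathsf H$ (using the canonical pairs), and against the $\widehat r_i$. The remainder $r'$ then lies in $\Gspace\cap(\langle u_i,\widehat r_i\rangle\oplus\mathsf H)^\perp$. I need to show that $r'$ is a combination of elements of $\bigoplus_t\atcut{\Gspace}{t}$ and $\Ladj$. The approach is to decompose $P^{(t)}$, for $P\in\langle\atcut{\ms}{t+1/2}\rangle^\perp$, in the symplectic basis of $\atcut{\Vcut}{t}$ given by stabilizers, destabilizers and logicals. The stabilizer and masked-destabilizer parts of $r$ differ from elements of $\atcut{\Gspace}{t}\oplus\atcut{\Gspace}{t+1}$ only by logical parts. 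Those logical parts must agree across the block, since $P$ commutes with the measurements and therefore preserves logical class, so they assemble into $\Ladj$. This is the same argument as in Lemma~\ref{lem:adjacent-logicals-in-gauge}, run in reverse.

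\textbf{Main obstacle.} The hardest part is the unmasked-destabilizer components of $P$. These are operators anticommuting with some $\atcut{\Uspace}{t}$ element but not lying in any $\atcut{\Gspace}{t}$. I expect to handle them by noting that they are exactly what $\langle \widehat r_i\rangle$ accounts for. Counting, $\dim\Rcomm$ modulo $\bigoplus_t\atcut{\Gspace}{t}+\Ladj$ should equal $\dim \Uspace/\Sdet=q$. I would verify this equality by a per-block rank computation: commuting propagators across block $t+1/2$ that move a destabilizer of an unmasked stabilizer give independent pairings with $\Uspace$. Theorem~\ref{thm:unmasked-propagator-kernel} then ensures that no extra kernel appears.

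\textbf{Part 1 (center).} Given the decomposition, part 1 is immediate. The summands $\langle u_i,\widehat r_i\rangle$ and $\mathsf H$ are nondegenerate and mutually orthogonal. $\Sdet$ is isotropic and orthogonal to everything: it is contained in $\Uspace$, it commutes with $\mathsf H$, it commutes with $\Rcomm$ by Theorem~\ref{thm:unmasked-propagator-kernel}, and it commutes with the $\widehat r_i$ by Lemma~\ref{lem:canonical-unmasked-pairs}. Hence $\Gspace\cap\Gspace^\perp=\Sdet$.

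\textbf{Part 2 (characterization of $\Gspace$).} The syndrome of a fault is $\sigma_\delta(E)=\symp(E,s_\delta)$, so ``undetectable'' means $E\in\Sdet^\perp=\Z{\Gspace}^\perp$.

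For the forward direction, elements of $\Gspace$ are undetectable by part 1. They also act trivially on the logicals: each generator is a stabilizer at a cut, a masked destabilizer (which only flips gauge values and commutes with logical representatives), a commuting propagator (trivial circuit action, as noted after Def.~\ref{def:comm_prop_space}), or an element of $\Ladj$ (two cancelling logical flips on the propagated observable).

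For the converse, take $E$ undetectable. Reduce $E$ modulo $\Gspace$ using the decomposition. The remainder lies in $\Gspace^\perp\cap\Sdet^\perp$ modulo $\Gspace$, which is the bare logical space. By a dimension count it has dimension $2k$: the total dimension of $\VST$ minus $2(\#\text{gauge qubits})$ minus $2\dim\Sdet$. It is spanned by the logical representatives placed at a single cut, for example $\bar X_i^{(0)}$ and $\bar Z_i^{(0)}$. These flip the propagated logical values, so trivial logical action forces the remainder to be zero. Hence $E\in\Gspace$.
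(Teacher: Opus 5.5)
Your reduction of the spanning claim to the generators of $\Rcomm$ is fine, but the step you flag as the main obstacle is where the proposal breaks, and the tool you cite for it points the wrong way. Every element of $\bigoplus_t\atcut{\Gspace}{t}+\Ladj$ commutes with $\Uspace$. Theorem~\ref{thm:unmasked-propagator-kernel} says the pairing between $\Rcomm$ and $\Uspace$ has rank exactly $q$. Together these only give $\dim\bigl(\Rcomm/(\Rcomm\cap(\bigoplus_t\atcut{\Gspace}{t}+\Ladj))\bigr)\ge q$. The equality you need is equivalent to
\[
    \Rcomm\cap\orth{\Uspace}\subseteq\bigoplus_{t}\atcut{\Gspace}{t}+\Ladj ,
\]
which is a statement about the kernel on the $\Rcomm$ side. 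Theorem~\ref{thm:unmasked-propagator-kernel} controls only the kernel on the $\Uspace$ side, and ``independent pairings'' again bound the rank from below.

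Your per-propagator decomposition does not supply this inclusion either. After subtracting the $\widehat r_i$, the remainder is $r+\sum_i a_i r_i$ modulo $\mathsf H+\Uspace$. This is a sum of propagators across many blocks, and the individual $P_t$ can carry unmasked-destabilizer components that cancel only at shared cuts. Proving the inclusion needs two further steps. First, an induction over cuts, using that any $Q\in\orth{(\atcut{\Uspace}{t})}$ commuting with $\atcut{\ms}{t+1/2}$ is mapped by $\atcut{U}{t+1/2}$ into $\orth{(\atcut{\Uspace}{t+1})}$. Second, the fact that the masked-gauge part of $Q$ is carried into $\atcut{\Gspace}{t+1}$ with no logical component. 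That second fact is the cross-cut consistency of the masked destabilizers, i.e.\ their trivial logical action, which you only invoke later in Part~2. So ``$P$ preserves logical class'' is being assumed here, not derived.

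The paper sidesteps all of this by reversing your order. It first checks that every generator of $\Gspace$ is undetectable and logically trivial. It then reduces an arbitrary undetectable fault modulo the $\widehat r_i$, the masked pairs, $\Uspace$ and $\Ladj$, working cut by cut since $\orth{(\atcut{\Uspace}{t})}$ is $\atcut{\Gspace}{t}$ plus logicals. The result is a single final-cut logical $\ell^{(T)}$, which logical triviality forces to vanish. Applying the same reduction to elements of $\Gspace$, in particular to $\Rcomm$, gives your spanning decomposition for free, and $\Z{\Gspace}=\Sdet$ then follows exactly as in your Part~1.

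There is also a smaller slip in your converse. $\bar X_i^{(0)}$ is not a bare logical, since it anticommutes with $\bar Z_i^{(0)}+\bar Z_i^{(1)}\in\Ladj$. Your count of $2k$ is correct, but it should be applied to the dressed classes $\orth{\Sdet}/\Gspace$, using $\orth{\Sdet}=\Gspace+\orth{\Gspace}$. Independence of the cut-$0$ logical classes modulo $\Gspace$ then comes from your forward direction.
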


\begin{proof}

Every generator of $\Gspace$ is undetectable and has trivial
logical action. Indeed, instantaneous stabilizers act trivially,
and the two faults in a commuting propagator cancel without changing the intervening measurement outcomes.
For the chosen masked destabilizers, we use the fact that they can be constructed to have trivial detector syndrome and induce no net change in the propagated logical observables, including their measurement-dependent sign updates. Note that this follows from the construction of the instantaneous gauge group using the distance algorithm in~\cite{fu2024errorcorrectiondynamicalcodes}. Thus every element of $\Gspace$ is undetectable and has
trivial logical action.

We next show that every undetectable fault is gauge-equivalent
to a logical operator supported on the final cut.
We do this by successively removing the destabilizers of unmasked stabilizers and the masked components, leaving only stabilizers
and logical representatives.

Let $\{u_i,\widehat r_i\}_{i=1}^{q}$ be the canonical pairs
from Lemma~\ref{lem:canonical-unmasked-pairs}. Define
\[
    E_1
    :=
    E+\sum_{i=1}^{q}\omega(E,u_i)\widehat r_i.
\]
Since $\omega(\widehat r_i,u_j)=\delta_{ij}$, each added
partner cancels the corresponding pairing of $E$ with $u_i$. Moreover, $E$ commutes with $S_{\mathrm{det}}$ because it is
undetectable, and the added partners preserve this commutation.
Since $S_{\mathrm{det}}$ and the $u_i$ span $U$, the operator
$E_1$ commutes with all of $U$.

Next, remove the masked components. Let
$\{h_\alpha,h_\alpha^\sharp\}_\alpha$ be a canonical paired basis
of the masked gauge space $H_{\mathrm m}$, and set
\[
    E_2
    :=
    E_1+\sum_\alpha
    \left[
        \omega(E_1,h_\alpha^\sharp)h_\alpha
        +\omega(E_1,h_\alpha)h_\alpha^\sharp
    \right].
\]
These additions cancel the pairings with both members of
every masked pair, so $E_2$ commutes with $H_{\mathrm m}$.
They also preserve commutation with $U$, since
$H_{\mathrm m}$ commutes with $U$.
All operators added so far belong to $\Gspace$.

Commutation with $U$ excludes destabilizers of unmasked stabilizers,
while commutation with both members of every masked pair
excludes masked stabilizer and destabilizer components.
Thus only unmasked stabilizers and logical representatives
remain:
\[
    E_2=u+\sum_{t=0}^{T}\ell_t^{(t)},
    \qquad u\in U.
\]
Here $\ell_t^{(t)}$ denotes the logical term at cut $t$;
different cuts may carry different logical operators.
Since $u\in\Gspace$, it can also be removed modulo $\Gspace$.

Finally, adjacent representatives of the same logical
operator differ by an element of $\Ladj\subseteq\Gspace$.
For each logical term, summing these adjacent relations gives
\[
    \ell_t^{(t)}+\ell_t^{(T)}
    =
    \sum_{v=t}^{T-1}
    \left(\ell_t^{(v)}+\ell_t^{(v+1)}\right)
    \in\Ladj.
\]
We can therefore move every logical term to the final cut
modulo $\Gspace$ and combine them into
$\ell^{(T)}:=\sum_t\ell_t^{(T)}$. Consequently,
\[
    E=g+\ell^{(T)},
    \qquad g\in\Gspace.
\]
The operator $g$ is a combination of unmasked stabilizers,
the corrected partners $\widehat r_i$, masked gauge pairs,
and adjacent logical generators.

Since $g$ has trivial logical action, $E$ has the same logical action as
$\ell^{(T)}$. If $E$ has trivial logical action, the final
logical class vanishes, so $E\in\Gspace$.
Together with the first paragraph, this proves that
$\Gspace$ consists exactly of the undetectable faults
with trivial logical action.

Applying this reduction to any $E\in\Gspace$ shows that
the constructed generators span all of $\Gspace$.
By Lemma~\ref{lem:canonical-unmasked-pairs} and the
nondegeneracy of $\Ladj$, they give the orthogonal
decomposition
\begin{equation}
    \Gspace
    =
    S_{\mathrm{det}}
    \oplus H_{\mathrm m}
    \oplus \Ladj
    \oplus
    \bigoplus_{i=1}^{q}
    \operatorname{span}\{u_i,\widehat r_i\}.
\end{equation}
Every summand except $S_{\mathrm{det}}$ is nondegenerate,
and $S_{\mathrm{det}}$ commutes with every summand.
Hence
\begin{equation}
    Z(\Gspace)=S_{\mathrm{det}}.
\end{equation}
\end{proof}

\subsection{Logical faults and distance}
\label{sec:spacetime-logical-distance}

In this section we first define the circuit distance and the subsystem spacetime-code
distance and then we show that both notions of distance are in fact exactly equivalent. \\

\begin{definition}[Circuit distance]
\label{def:circuit-distance}
The circuit distance is the minimum weight of an undetectable
fault with nontrivial logical action:
\[
    d_{\mathrm{circ}}
    :=
    \min\left\{
        \operatorname{wt}(E):
        \sigma(E)=0,\;
        E \text{ has nontrivial logical action}
    \right\}.
\]

\end{definition}

\begin{definition}[Subsystem spacetime-code distance]
\label{def:ssc-distance}
The distance of the subsystem spacetime code with gauge
group $\Gspace$ is the minimum weight of a nontrivial
dressed logical operator:
\[
    d_{\mathrm{SSC}}
    :=
    \min\left\{
        \operatorname{wt}(E):
        E\in Z(\Gspace)^\perp\setminus\Gspace
    \right\}.
\]
\end{definition}

The first definition concerns the effect of a fault on the
circuit, whereas the second uses only the gauge group of
the associated subsystem code. \\

\begin{theorem}[Equality of circuit and subsystem-code distance]
\label{thm:distance-equality}
For a subsystem space time code constructed as in Definition \ref{def:spacetime-gauge-group}
\[
    d_{\mathrm{circ}}=d_{\mathrm{SSC}}.
\]
\end{theorem}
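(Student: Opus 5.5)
The plan is to show that the two minimization problems in Definitions~\ref{def:circuit-distance} and~\ref{def:ssc-distance} range over the same set of spacetime Paulis, with the same weight function. Then the minima coincide automatically. Concretely, I will prove
\[
\{E\in\VST:\sigma(E)=0,\ E\text{ has nontrivial logical action}\}
= Z(\Gspace)^\perp\setminus\Gspace ,
\]
and both sides use the spacetime weight of Eq.~\eqref{eq:spacetime-weight}.

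The first step identifies the undetectable faults with $Z(\Gspace)^\perp$. By Eq.~\eqref{eq:syndrome}, the syndrome bit of $E$ for detector $\delta$ is $\symp(E,\DetStab{\delta})$. So $\sigma(E)=0$ holds exactly when $E$ commutes with every element of $\Sdet=\operatorname{im}\DetMap$, that is, when $E\in\Sdet^\perp$. Theorem~\ref{thm:gauge-structure}(1) gives $Z(\Gspace)=\Sdet$, hence the undetectable faults are precisely $Z(\Gspace)^\perp$.

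The second step restricts to nontrivial logical action. Theorem~\ref{thm:gauge-structure}(2) says that an undetectable fault lies in $\Gspace$ if and only if it has trivial logical action. So among elements of $Z(\Gspace)^\perp$, nontrivial logical action is equivalent to $E\notin\Gspace$. This gives the set equality above, and taking minima of $\operatorname{wt}$ yields $d_{\mathrm{circ}}=d_{\mathrm{SSC}}$. The degenerate case where both sets are empty should be handled by the convention that both distances are then $\infty$.

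The main point needing care is consistency between the notion of ``undetectable'' used in Theorem~\ref{thm:gauge-structure} and $\sigma(E)=0$. Detectors involving the perfect terminal readout at $T+1/2$ must be included in $\Sdet$, as in Theorem~\ref{thm:unmasked-propagator-kernel}. Otherwise some faults near cut $T$ could be counted as undetectable in one definition but not the other.

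I would also check that logical action is well defined on the classes $Z(\Gspace)^\perp/\Gspace$. The reduction in the proof of Theorem~\ref{thm:gauge-structure} writes any undetectable $E$ as $g+\ell^{(T)}$ with $g\in\Gspace$, so its logical action is that of the final-cut representative $\ell^{(T)}$. Lemma~\ref{lem:Vell-well-defined} ensures that the measurement-dependent sign bookkeeping does not depend on the choice of representatives, up to detectors that $E$ already commutes with. With these checks in place, the equality of distances is a direct corollary of Theorem~\ref{thm:gauge-structure}.
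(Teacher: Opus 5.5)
Your proposal is correct and follows the same route as the paper's proof. You use the syndrome formula and Theorem~\ref{thm:gauge-structure}(1) to identify the undetectable faults with $\Sdet^\perp = Z(\Gspace)^\perp$, then use part (2) to identify nontrivial logical action among these with $E\notin\Gspace$, so both minimizations run over the same set with the same weight; your extra checks (terminal-readout detectors in $\Sdet$, the empty-set convention, well-definedness of logical action on classes) are refinements the paper leaves implicit.
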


\begin{proof}
A fault is undetectable precisely when it commutes with
every detector stabilizer. By
Theorem~\ref{thm:gauge-structure},
\[
    \sigma(E)=0
    \quad\Longleftrightarrow\quad
    E\in S_{\mathrm{det}}^\perp
    =
    Z(\Gspace)^\perp.
\]
The same theorem states that an undetectable fault has
trivial logical action precisely when it belongs to
$\Gspace$. Therefore,
\[
    \left\{
        E:\sigma(E)=0,\;
        E \text{ has nontrivial logical action}
    \right\}
    =
    Z(\Gspace)^\perp\setminus\Gspace.
\]
Minimizing the same weight over these identical sets gives
the claimed equality.
\end{proof}

The correspondence also describes the logical operators
of the subsystem spacetime code.
The proof of Theorem~\ref{thm:gauge-structure} shows that
every undetectable fault can be written as
\[
    E=g+\ell^{(T)},
    \qquad g\in\Gspace,
\]
where $\ell^{(T)}$ is a final-cut logical representative.
Thus every dressed logical class has a representative
supported on the final cut, and its circuit logical action
is exactly that of this representative.
This reduction need not preserve weight: a representative
distributed across several cuts may have smaller weight.
The distance therefore requires minimization over the
whole spacetime fault space. \\

\begin{definition}[Gauge equivalence]
\label{def:gauge-equivalence}
Two spacetime faults $E$ and $E'$ are gauge-equivalent if
\[
    E+E'\in\Gspace.
\]
\end{definition}

Gauge-equivalent faults have the same detector syndrome.
Conversely, if two faults have the same syndrome, their
difference is undetectable, and they are gauge-equivalent
precisely when this difference has trivial logical action.
Given an actual fault $E$ and an estimated fault
$\widehat E$ with the same syndrome, the residual fault
$E+\widehat E$ is undetectable. The estimate succeeds
precisely when
\[
    E+\widehat E\in\Gspace.
\]
A decoder therefore need not identify the exact fault;
it must identify its gauge-equivalence class.

The subsystem spacetime code therefore captures both
which faults are detected and which undetectable faults
act nontrivially on the encoded information.
Its distance measures the minimum weight of such a
nontrivial undetectable fault over the entire measurement
window.

\subsection{Beyond the steady state set-up}
\label{sec:beyond_steady_state}
Section~\ref{sec: SS code} assumed the
set-up of Section~\ref{sec:setup}: a fixed number $k$ of logical qubits carried
across the whole window, no logical measurements, and a perfect terminal readout of
a generating set of $\Stab^{(T)}$. In this subsection we drop these assumptions one
at a time. The construction of the gauge group (Definition~\ref{def:spacetime-gauge-group})
remains largely unchanged: The instantaneous gauge groups and the
commuting propagators are still in $\Gspace$.

\subsubsection{Partial terminal round}~\label{subsub: partial terminal round}
Beyond the steady state, the perfect terminal readout may contain measurements that are not generators of $\Stab^{(T)}$. Thus, the measurements may permanently masked some of the generators of $\Stab^{(T)}$, with destabilizers given by the terminal measurements.
Because of the partial terminal readout, some of stabilizers
of the final ISGs that could have been unmasked at the terminal round are treated as \emph{temporarily masked}. The last measurements can also
create \emph{permanently masked} stabilizers, whose syndromes are lost
irreversibly. Both kinds can be treated using the formalism developed for an instantaneous gauge group.

\subsubsection{Logical measurements}
\label{sec:setup-logical}

Suppose a logical Pauli
$\bar L$ is measured with outcome $O(\bar L)$. Then $(-1)^{O(\bar L)}\bar L$
becomes a stabilizer of the post-measurement state. The rank of the ISG
increases by one and the number of logical qubits decreases by one at all
later cuts. The outcome itself is a logical outcome and not a detector:
its value is not predicted by earlier outcomes. A later measurement of
$\bar L$ (or of an operator equal to it modulo the ISG) is predicted, and
does give a detector. 

At cuts before the logical measurement, the logical qubit is simply treated as a logical qubit in the subsystem spacetime code formalism. The product of adjacent logical representatives is added to the gauge group, till it is measured. After the logical measurement at $t+1/2$, the number of logical qubits decreased for all later cuts, and $\bar{L}$ is treated as a stabilizer in each ISG.

% ---------------------------------------------------------------------------
\subsubsection{Reset and discard}
\label{sec:setup-reset}

Discarding a qubit ends its ``wire''. At every later cut there is one fewer location, and Pauli components on that qubit are dropped. The ISG is updated by keeping the stabilizers that act trivially on the discarded qubit, after multiplying generators to clear their support there where possible. Those that cannot be cleared become permanently masked. Detectors are back-propagated only onto the qubits present at each cut. A reset is a discard followed by the preparation of a fresh qubit at the next cut. The fresh qubit enters with its initial stabilizer and at a new spatial location, $Z$ for $\ket 0$ and $X$ for $\ket +$, which becomes an element of the ISG.

\subsubsection{State preparation}~\label{subsec: state prep}
A prepared logical state enters as $\bar Z\in\atcut{\mathsf S}{0}$ with a known
sign. If $\bar Z$ is measured later in the window, it is an ordinary unmasked
stabilizer and contributes a detector. If it is never measured, we treat it as
a stabilizer at every cut, $\atcut{\bar Z}{t}\in\atcut{\mathsf S}{t}$ and exclude $\bar Z$ from the
terminal readout. Equivalently, $\Gspace_{\mathrm{prep}}=\langle \Gspace,
\atcut{\bar Z}{T}\rangle$ with $\Gspace$ the gauge group of
Definition~\ref{def:spacetime-gauge-group}.

The commuting propagators of $\bar X$ remain in $\Rcomm$ and anticommute with
each $\atcut{\bar Z}{t}$, so the individual $\atcut{\bar Z}{t}$ are gauge
operators, not stabilizers. The center is
\begin{equation}
  \Z{\Gspace_{\mathrm{prep}}}=\Sdet\oplus\langle \bar Z_{\mathrm{ST}}\rangle,
  \qquad
  \bar Z_{\mathrm{ST}}:=\sum_{t=0}^{T}\atcut{\bar Z}{t},
\end{equation}
and the remaining $\atcut{\bar Z}{t}$ pair with the $\bar X$ propagators into
$T$ gauge qubits. The operator $\bar Z_{\mathrm{ST}}$ is the detector
stabilizer that a terminal readout of $\bar Z$ would produce. Removing the
readout leaves it in the center but unmeasured (so it is not the same as a detector stabilizer). A logical $\bar X$ fault anticommutes
with it and is undetectable, so the relevant distance is
\begin{equation}
  d_{\mathrm{prep}}
  =\min\{\mathrm{wt}(E):E\in\orth{\Sdet}\setminus\Gspace_{\mathrm{prep}}\},
\end{equation}
which differs from $\orth{\Z{\Gspace_{\mathrm{prep}}}}\setminus
\Gspace_{\mathrm{prep}}$ precisely by the $\bar X$-type faults.

\begin{remark}
$\bar Z$ is temporarily masked in the sense of
Section~\ref{subsub: partial terminal round}, but we deliberately do not gauge it
out: its destabilizer is a logical error.
\end{remark}
\subsubsection{Frameworks for measurements}~\label{subsec: Frameworks for measurements}
In this section, we give two other ways to represent measurement errors, in addition to Section \ref{sec:setup-frameworks}. They each assign different weights to measurement
errors, and so they can give different spacetime distances. Table~\ref{tab:frameworks}
summarizes them.
The three frameworks below differ only in how a measurement is represented
in the spacetime code. 

\begin{table}[ht]
\centering
\begin{tabularx}{\linewidth}{@{}p{3.8cm}p{3cm}p{3.8cm}p{3.8cm}@{}}
\toprule
 & (a) Pauli pairs & (b) measurement locations & (c) syndrome extraction circuits \\
\midrule
Fault locations& Qubits & Qubits with measurement locations at $t+1/2$ & Qubits and ancillas \\
Measurement error & $P^{(t)}P^{(t+1)}$, weight 2 & $X_m^{(t+1/2)}$, weight 1 & Error on ancilla, weight 1 \\
Support of a $s_\delta$ & On data qubits & $s_\delta$ extended by $Z_m$ & On both ancilla and data qubits \\
\bottomrule
\end{tabularx}
\caption{The three ways of representing measurements.}
\label{tab:frameworks}
\end{table}

The next two frameworks are different ways to treat measurements, either as a location on its own or as a full syndrome extraction circuit, which can itself incorporate measurement locations. We explain how these can be incorporated into the spacetime code framework although we have only used framework (a) for the earlier sections.

\begin{figure}[ht]
    \centering
    \includegraphics[width=0.4\linewidth]{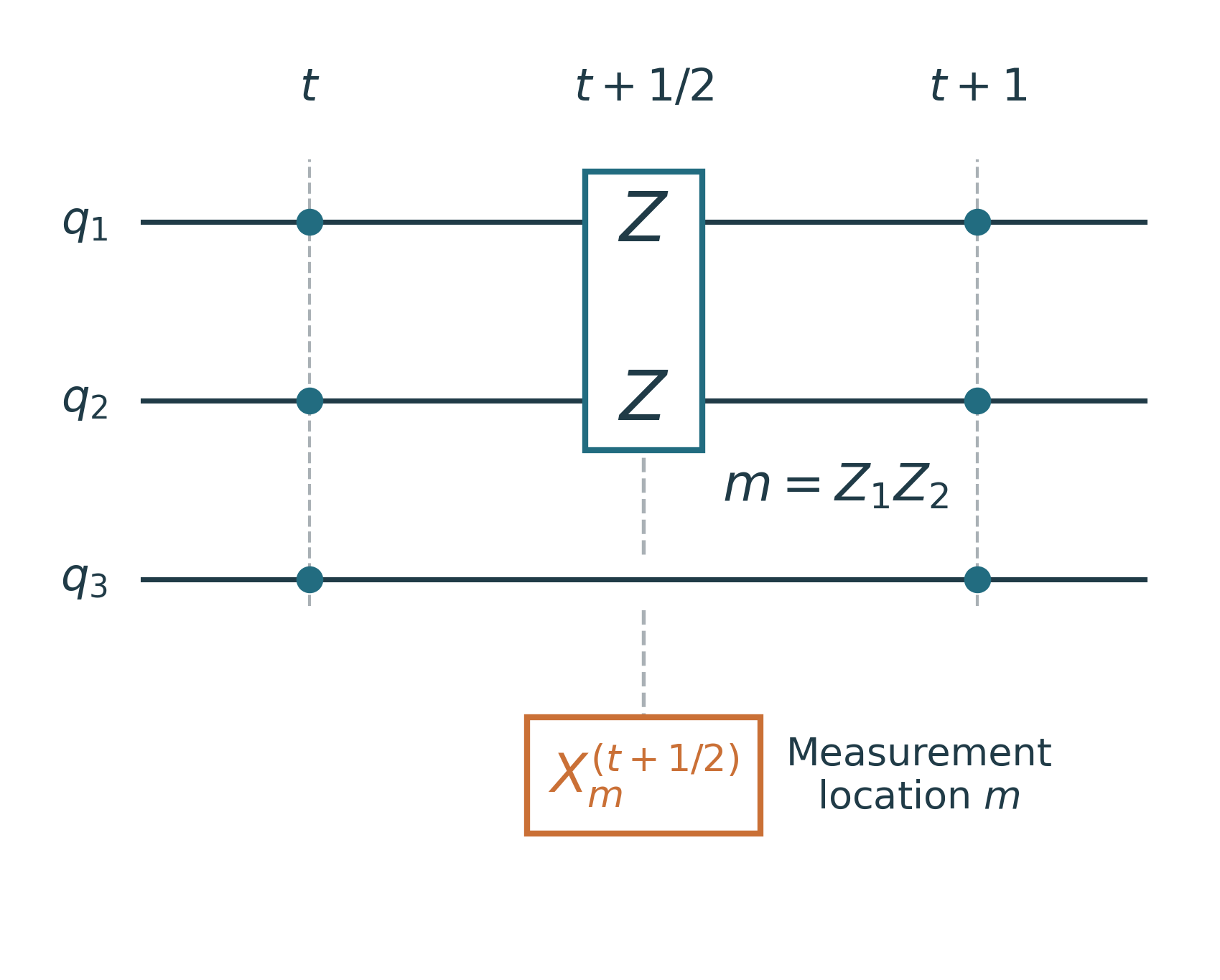}
    \caption{A measurement location $m$ for a $Z_1Z_2$ measurement. The
    operators $(\atcut{X_1}{t} + \atcut{X_1}{t+1} + \atcut{X_m}{t+1/2})$ and $\atcut{Z_m}{t+1/2}$ are added to the gauge group, and any
    gauge operator $g$ that anticommutes with $(\atcut{X_1}{t} + \atcut{X_1}{t+1} + \atcut{X_m}{t+1/2})$ is replaced by
    $g\oplus \atcut{Z_m}{t+1/2}$. The error $\atcut{X_m}{t+1/2}$ on the measurement location is then gauge equivalent to $(\atcut{X_1}{t} + \atcut{X_1}{t+1})$.}
    \label{fig:formalism_b_demo}
\end{figure}

\paragraph{Framework (b): measurement locations}Here every measurement receives a spacetime location of its own. The benefit of this is that each
measurement error is a single-location fault, whatever the weight of the
measured Pauli. Note that measurement locations also connect the code to statistical-mechanics models: they make each measurement error an independent fault, as in the random-plaquette gauge model of Ref.~\cite{dennis2002topological}, which Pauli pairs cannot always do~\cite{ aitchison2025spacetimespinsstatisticalmechanics}. For a measurement $m$ in the block $t+1/2$ we add a location
labelled $m$ at $t+1/2$. Choose a propagator $P^{(t)} + P^{(t+1)}$ 
with $\operatorname{supp}_{\mathrm{sp}}(P)\subseteq\operatorname{supp}_{\mathrm{sp}}(m)$ (see Definition~\ref{eq:spatial-support}) and
$\symp(P,m)=1$. Then, we do the following:
\begin{enumerate}
  \item Add $(P^{(t)}+ P^{(t+1)}+ X_m^{(t+1/2)})$ and $Z_m^{(t+1/2)}$ to
        $\mathcal{G}$, which form one new gauge qubit.
  \item Replace every existing generator $g$ that anticommutes with
        $P^{(t)} + P^{(t+1)}$ by $g\oplus Z_m^{(t+1/2)}$, so that all
        generators commute with the new gauge qubit.
\end{enumerate}
After step~1, $X_m^{(t+1/2)}$ is gauge equivalent to $P^{(t)} + P^{(t+1)}$, so a
readout error has the syndrome of the Pauli pair it replaces. Explicitly,
suppose the flipped outcome belongs to a detector $\delta$. Then $s_\delta$ anticommutes
with $P^{(t)} + P^{(t+1)}$, so step~2 turns it into $s_\delta\,Z_m^{(t+1/2)}$ which anti-commutes with $X_m^{(t+1/2)}$, as required. The choice
of $P$ does not matter, since two valid choices differ by an element of
$\mathcal{G}$ (see also Figure~\ref{fig:formalism_b_demo}).

\paragraph{Framework (c): syndrome extraction circuits}
To take into account potential error propagation due to say hook errors, here each measurement is replaced by its syndrome extraction circuit. In circuit designs, this is important for verifying the distance of novel measurement schemes. 
The number of qubits changes from cut to cut due to the presence of fresh ancilla qubits. A measurement error is now
an ordinary fault on the ancilla just before its readout, of weight 1.

\paragraph{Construction.}
The construction of
Section~\ref{sec:setup-basic} applies to the circuit, with the following
changes:
\begin{itemize}
  \item The preparation of an ancilla adds its initial stabilizer ($Z_a$ for $\ket 0$, $X_a$ for $\ket +$) to the ISG.
  \item Once the ancilla qubit is measured, it is then discarded and does not appear on future locations in spacetime. 
\end{itemize}

The detectors now have support on locations within the measurement circuit and
on the ancillas. Restrictions to a single cut still give the unmasked stabilizers.

\paragraph{Example.}

\begin{figure}[ht]
\centering
\includegraphics[width=.4\linewidth]{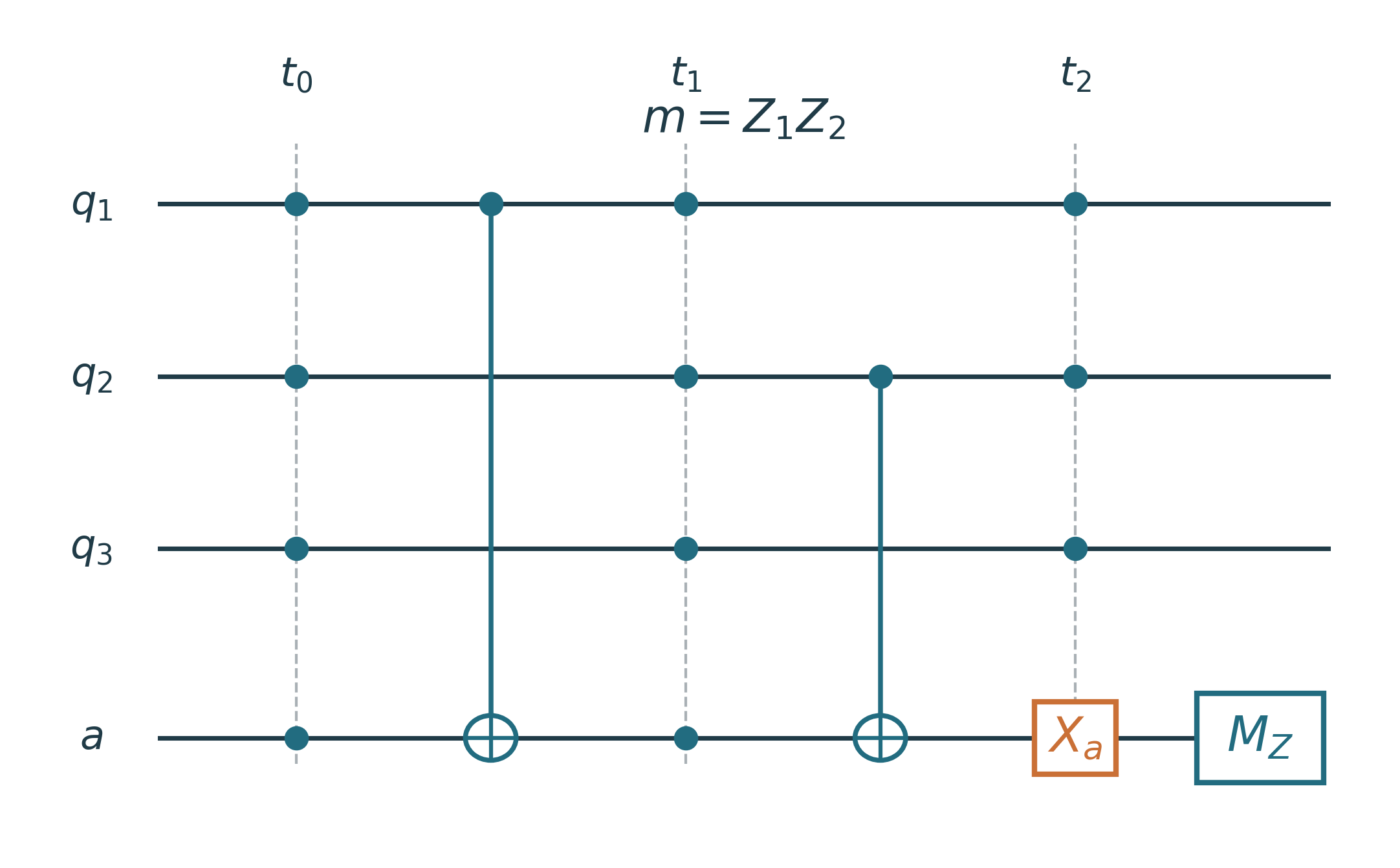}
  \caption{This sub-circuit shows the syndrome extraction circuit for  $Z_1 Z_2$ using an ancilla qubit.}
  \label{fig:meas_circuit}
\end{figure}

Consider the measurement of $Z_1Z_2$ in Figure~\ref{fig:meas_circuit}. The
ancilla $a$ is prepared in $\ket 0$, then $\mathrm{CNOT}_{1\to a}$ and
$\mathrm{CNOT}_{2\to a}$ are applied, and $Z_a$ is read out. The cuts are
$t=0,1,2$ and at each cut $\atcut{\Stab}{t} = \langle \atcut{s}{t}, \atcut{s_{\text{ancilla}}}{t}\rangle$. The ISG evolves as
\begin{equation}
  \atcut{\Stab}{0} = \langle Z_1Z_2,\ Z_a\rangle \;\to\; \atcut{\Stab}{1} = \langle Z_1Z_2,\ Z_1Z_a\rangle
  \;\to\; \atcut{\Stab}{2} = \langle Z_1Z_2,\ Z_1Z_2Z_a\rangle ,
\end{equation}
and the readout gives $Z_a = s^{(3)}\cdot s_{\mathrm{ancilla}}^{(3)}$. Back-propagating $Z_a$ gives
the stabilizer for the detector
\begin{equation}
  s_\delta \;=\; (Z_1Z_2Z_a)^{(0)} + (Z_2Z_a)^{(1)} + (Z_a)^{(2)} .
\end{equation}
The destabilizers for the unmasked stabilizers 
are the propagators
\begin{equation}
  X_a^{(0)} + X_a^{(1)} \quad\text{and}\quad X_a^{(1)} + X_a^{(2)} .
\end{equation}
The destabilizers anticommute with the unmasked stabilizer at each timestep but commutes with $s_\delta$. The second stabilizer generator at $t = 2$, $s= Z_1Z_2$ remains a temporarily masked stabilizer.\\ 

\begin{remark}
In all three frameworks, having the final round of perfect measurements is crucial in avoiding a low-weight undetectable error that occurs right at the terminal boundary. 
\end{remark}

\section{Relations between two distances}
\label{sec:two-distances}
The distance of the subsystem spacetime code is a property of the whole circuit,
whereas the unmasked distance is a property of one instantaneous stabilizer group at a
time and can be much cheaper to compute. In this section we ask when the first can be read
off from the second. For measurement-only circuits we identify a correctability
condition (see Condition 1), satisfied by static stabilizer codes and by several dynamical codes, including Floquet-Bacon-Shor code~\cite{Anonymous_2025}, the Floquet color code, planar Floquet code and the x+y Floquet code, under which
the spacetime distance equals $\min\{d_U,d_{\mathrm{meas}}\}$. Here
$d_{\mathrm{meas}}$ counts the faults needed to corrupt the logicals from the logical update rule without
tripping a detector. Appendix~\ref{sec: counter} shows that the equality can fail without this condition.

\textbf{Set-up} For simplicity of analysis, we consider a \emph{measurement-only} circuit with timeslices $t=0,\ldots,T$, initial stabilizer group $\Stab^{(0)}$ and a measurement schedule $\mathfrak{M}=\{\ms^{(1/2)}, \ms^{(3/2)},\ldots,\ms^{(T-1/2)} \}.$ Each round $\ms^{(t+1/2)}$ is a commuting family of Pauli measurements and each
qubit participates in at most one measurement per time slice. Faults are Pauli errors on the spacetime locations. Let $d_U$ denote the minimum unmasked distance over all the ISGs. \\

\begin{lemma}[Pushing forward]
\label{lem:pushforward}
Let $E$ be a Pauli operator at cut $t$. Then $E^{(t)}$ can be seen as an element on time $t+1$ and some measurement errors at cut $t+1/2$.
\end{lemma}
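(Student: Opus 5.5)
The claim is that $E^{(t)}$ is gauge-equivalent, modulo $\Gspace$, to the same Pauli $E^{(t+1)}$ at cut $t+1$ plus a set of Pauli-pair measurement errors in the block $t+1/2$, in the sense of Section~\ref{sec:setup-frameworks}. In the measurement-only setting $\atcut{U}{t+1/2}=I$, so every propagator has the form $P^{(t)}+P^{(t+1)}$.

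The plan is to decompose $E$ according to its commutation with the measurements of $\atcut{\ms}{t+1/2}=\{m_1,\dots,m_r\}$. These measurements commute and have pairwise disjoint supports.

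\textbf{Step 1.} Define $a_i\coloneqq\symp(E,m_i)$. For each $i$ with $a_i=1$, pick a Pauli $P_i$ satisfying
\[
\operatorname{supp}_{\mathrm{sp}}(P_i)\subseteq\operatorname{supp}(m_i),\qquad \symp(P_i,m_i)=1.
\]
The restriction of $E$ to $\operatorname{supp}(m_i)$ is one valid choice. Because the supports are disjoint, $\symp(P_i,m_j)=\delta_{ij}$.

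\textbf{Step 2.} Set $E'\coloneqq E+\sum_{i:a_i=1}P_i$. By construction $E'$ commutes with every $m_j$, so $E'\in\langle\atcut{\ms}{t+1/2}\rangle^\perp$. Definition~\ref{def:comm_prop_space} then gives
\[
E'^{(t)}+E'^{(t+1)}\in\Rcomm\subseteq\Gspace.
\]

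\textbf{Step 3.} Combining these,
\[
E^{(t)}=\Bigl[E'^{(t)}+E'^{(t+1)}\Bigr]+E^{(t+1)}+\sum_{i:a_i=1}\Bigl(P_i^{(t)}+P_i^{(t+1)}\Bigr).
\]
The bracketed term is a gauge operator. Each summand $P_i^{(t)}+P_i^{(t+1)}$ is exactly a representation of a flipped outcome of $m_i$. So $E^{(t)}$ equals $E^{(t+1)}$ together with measurement errors on those $m_i$ that anticommute with $E$, modulo $\Gspace$. This is the form the distance comparisons downstream will use.

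\textbf{Step 4 (optional sharpening).} I would also record the weight bookkeeping. If $P_i$ is chosen as $E|_{\operatorname{supp}(m_i)}$, then $E'$ is $E$ restricted to the qubits outside the anticommuting measurements. The propagated part then has weight at most $\operatorname{wt}(E)$, and the number of measurement errors is at most $\operatorname{wt}(E)$.

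The main obstacle is interpretive rather than technical: the statement says "can be seen as", and I will make this precise as gauge equivalence via Definition~\ref{def:gauge-equivalence}. One consistency check is needed: the syndrome of $E^{(t)}$ must match that of $E^{(t+1)}$ plus the flips. This follows because $\Rcomm\subseteq\orth{\Sdet}$ by Theorem~\ref{thm:unmasked-propagator-kernel}.
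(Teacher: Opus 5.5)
Your proposal is correct and follows the paper's argument. The paper writes $E^{(t)}=(E^{(t)}+E^{(t+1)})+E^{(t+1)}$ and only asserts that the propagator splits into gauge elements and measurement errors; you carry out that split by peeling off pieces $P_i$ on the anticommuting measurements, so the remainder is a commuting propagator in $\Rcomm\subseteq\Gspace$, and your disjoint-support check $\symp(P_i,m_j)=\delta_{ij}$ and the weight bookkeeping in Step~4 correctly fill in details the paper leaves implicit.
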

\begin{proof}
This is simply a rewrite: For example $E^{(t)} =(E^{(t)}+E^{(t+1)})+E^{(t+1)}$. The first term can be rewritten as elements in the gauge group and a set of measurement errors.
\end{proof}
Given a spacetime error $E$, we use both commuting and anti-commuting propagators to push forward its Pauli operators by applying Lemma~\ref{lem:pushforward}. According to definition, commuting propagators are part of the gauge group whereas anti-commuting propagators represent measurement errors in the intermediate circuit block.
Thus, up to gauge equivalence (see Definition~\ref{def:gauge-equivalence}), the resultant errors consist of a \emph{residual error} $E_T$, which is the errors on the final time slice and a set $M(E)$ of measurement errors. 

Next, we introduce Condition 1, under which there is a clear relation between the unmasked distance and the distance of the subsystem spacetime code.
\paragraph{Condition 1}~\label{correctability condition} A subsystem spacetime code satisfies Condition 1, if for every fault $E_T$ at timeslice $T$ with $|E_T|< d_U$, either $E_T$ is gauge equivalent to trivial errors, or $E_T$ is detectable, i.e. produces a nontrivial syndrome, by one or more perfect final rounds of measurements, regardless of the measurement errors occurring at timesteps $t< T$. In other words, $E_T$ is detectable regardless of what the measurement errors are on the earlier spacetime locations. 

The above condition does not hold in general for arbitrary dynamical codes. This works for cases where a perfect round of measurements of all the stabilizers at the last round always suffices to correct all the residual errors. As a special case, Condition 1 holds if there exists a set of permanent stabilizers that remain in every ISG and are always unmasked such that if we treat the other generators as permanently masked stabilizers, then the unmasked distance of every ISG is at least $d_U$. 

We now define a distance for a circuit with only \emph{measurement errors}.
The following definition requires the logical update rules stated in Procedure~\ref{lem:logical-update}. \\
\begin{definition}
For each logical representative $L_\ell$, let $V_\ell$ denote the set of measurements required by the logical update rule to dynamically update the logical representative from $t=0$ to $t=T$.
Let $\sigma(E)$ be the syndrome of $E$ which is a set of measurement errors. Define
\begin{equation}
d_{\mathrm{meas}}
=
\min\left\{
\operatorname{wt}(E):
\begin{array}{l}
\sigma(E)=0, \, E \text{ is gauge equivalent to a set of measurement errors.}\\
|M(E)\cap V_\ell|=1\pmod 2
\text{ for some logical operator }L_\ell
\end{array}
\right\}.
\end{equation}
If there is no such error, set $d_{\mathrm{meas}}=\infty$.
\end{definition}
This is well-defined and does not depend on the choice of logical representative: An error $E$ remains an undetectable logical error regardless of the choice of logical representative. Furthermore, by Lemma~\ref{lem:Vell-well-defined}, any two sets of measurements $V_{\ell_1}$ and $V_{\ell_2}$ are the same up to measurements and final stabilizers that form detectors. Since we do not have measurement errors for the stabilizers at the final cut, and $E$ is undetectable by detectors, $|E\cap V_{\ell_1}| =1 \pmod 2$ implies $ |E\cap V_{\ell_2}| =1 \pmod 2$.

The weight as in $\mathrm{wt}(E)$ in this definition is the weight of the original spacetime error $E$, not the number of measurement errors in $M(E)$. A pair of spacetime errors can be gauge equivalent to multiple measurement errors. Thus, $d_{\mathrm{meas}}$ is not, in general, the same as the number of measurement errors causing an undetectable logical error. 

\begin{theorem}~\label{thm:two-distances}
If \textbf{Condition 1} holds, then for a measurement-only circuit, the distance of the subsystem spacetime code $d$ is given by \begin{equation}
d=\min\{d_U,d_{\mathrm{meas}}\}.
\end{equation}
\end{theorem}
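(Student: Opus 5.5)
The plan is to prove the two inequalities $d\le\min\{d_U,d_{\mathrm{meas}}\}$ and $d\ge\min\{d_U,d_{\mathrm{meas}}\}$ separately, using Theorem~\ref{thm:distance-equality} to identify $d$ with the circuit distance $d_{\mathrm{circ}}$. In both directions we reason about undetectable faults with nontrivial logical action, i.e.\ elements of $\orth{\Sdet}\setminus\Gspace$ by Theorem~\ref{thm:gauge-structure}.

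\emph{Upper bound.} For $d\le d_{\mathrm{meas}}$ we argue directly from the definition. A minimizer $E$ in the definition of $d_{\mathrm{meas}}$ has $\sigma(E)=0$, and it flips the propagated value of some $L_\ell$, because $|M(E)\cap V_\ell|$ is odd and the outcome of $L_\ell^{(T)}$ is a parity over $V_\ell$. This value is well defined by Lemma~\ref{lem:Vell-well-defined}. Hence $E$ has nontrivial logical action and $d_{\mathrm{circ}}\le \operatorname{wt}(E)$.

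For $d\le d_U$, take the cut $t$ attaining $d_U$ and an operator $F\in(\atcut{\Uspace}{t})^\perp\setminus\atcut{\Gspace}{t}$ of minimal weight, embedded at cut $t$ only. We check two things.
\begin{enumerate}
\item $F$ is undetectable. Every detector stabilizer restricts at cut $t$ to some $\Uelt{\delta}{t}\in\atcut{\Uspace}{t}$, so by Eq.~\eqref{eq:syndrome} the syndrome of $F$ vanishes.
\item $F\notin\Gspace$. We read this off the decomposition in the proof of Theorem~\ref{thm:gauge-structure}, restricted to cut $t$. Commuting with $\atcut{\Uspace}{t}$ while lying outside $\atcut{\Gspace}{t}$ means $F$ carries a nontrivial logical component at cut $t$ modulo the instantaneous gauge group. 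Transporting this component through $\Ladj$ gives a nonzero final-cut logical class.
\end{enumerate}
The care needed in the second check is that gauge elements from $\Rcomm$ and $\Ladj$ cannot cancel a nontrivial logical class at a single cut. Once $F\notin\Gspace$ is established, $d_{\mathrm{circ}}\le d_U$.

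\emph{Lower bound.} Let $E$ be undetectable with nontrivial logical action and $\operatorname{wt}(E)<d_U$. We must show $\operatorname{wt}(E)\ge d_{\mathrm{meas}}$.
\begin{enumerate}
\item Apply Lemma~\ref{lem:pushforward} repeatedly to push every component of $E$ to cut $T$. Modulo $\Gspace$, this writes $E$ as a residual $E_T$ at cut $T$ plus a measurement-error set $M(E)$.
\item Bound the residual weight. Pushing forward is a product of single-qubit Clifford conjugations on supports, so $|E_T|\le\operatorname{wt}(E)<d_U$. In a measurement-only circuit there are no unitaries, so the residual is literally the product of the components $E^{(t)}$ and the bound is immediate.
\item Invoke Condition~1. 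Either $E_T$ is gauge equivalent to a trivial error, or it is detected by the perfect terminal round regardless of the earlier measurement errors. The second case contradicts undetectability of $E$, since the terminal round belongs to the detector set. So $E$ is gauge equivalent to the pure measurement-error set $M(E)$.
\item Conclude with the logical update rule. Since $E$ has nontrivial logical action and only measurement errors remain, some propagated logical value must be flipped. By Procedure~\ref{lem:logical-update} that value is the parity over $V_\ell$, so $|M(E)\cap V_\ell|$ is odd for some $\ell$. Thus $E$ is admissible in the definition of $d_{\mathrm{meas}}$, giving $\operatorname{wt}(E)\ge d_{\mathrm{meas}}$.
\end{enumerate}

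\emph{Main obstacle.} I expect step 3 of the lower bound to be the hardest. Condition~1 is phrased about detectability by the terminal round ``regardless of measurement errors.'' We must show this makes the residual's contribution to the syndrome independent of $M(E)$, so that a nonzero residual syndrome cannot be compensated by measurement flips. To do this I would write the syndrome of $E$ as that of $E_T$ plus that of $M(E)$, as sums over detectors, and use Condition~1 exactly to exclude cancellation. The secondary subtle point is the upper-bound argument that a single-cut unmasked logical is not absorbed by $\Rcomm$. For that I would use that every element of $\Gspace$ has trivial final logical class (Theorem~\ref{thm:gauge-structure}).
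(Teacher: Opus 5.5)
Your proposal is correct and follows the paper's proof. The upper bound comes from a single-cut weight-$d_U$ unmasked logical and from the faults counted by $d_{\mathrm{meas}}$. The lower bound pushes $E$ forward to $E_T+M(E)$, uses $\operatorname{wt}(E_T)\le\operatorname{wt}(E)$ in a measurement-only circuit, applies Condition~1 to make $E_T$ gauge-trivial, and then invokes $d_{\mathrm{meas}}$.

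Your check that the single-cut logical lies outside $\Gspace$ (via $\Ladj\subseteq\Gspace$ and Theorem~\ref{thm:gauge-structure}) spells out a step the paper calls obvious. Drop the aside that pushing forward is ``a product of single-qubit Clifford conjugations'': it is false once multi-qubit gates are present, and you do not need it here.
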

\begin{proof}
\emph{Upper bound.} This direction is obvious. We know there exists a logical error of weight $d_U$ so $d\leq d_U$ Every fault counted by
$d_{\mathrm{meas}}$ is an undetectable logical fault, so $d\le d_{\mathrm{meas}}$.

\emph{Lower bound.} Let $E$ be an undetectable fault with $\operatorname{wt}(E)<d_U$. Pushing
$E$ forward gives
\begin{equation}
    E \equiv E_T + M(E) \pmod{\Gspace},
\end{equation}
and $\operatorname{wt}(E_T)\le\operatorname{wt}(E)<d_U$, because in a measurement-only circuit
pushing an error forward does not spread its support. Since $E$ is undetectable, $E_T$
together with the measurement errors $M(E)$ trips no detector, including those of the
perfect final round(s). By Condition~1, $E_T$ is gauge equivalent to a trivial error. If $E$ is a logical fault, it must be due to $M(E)$ which must be lower bounded by
$d_{\mathrm{meas}}$, so $\operatorname{wt}(E)\ge d_{\mathrm{meas}}$. Hence every logical fault
has weight at least $\min\{d_U,d_{\mathrm{meas}}\}$, and $d=\min\{d_U,d_{\mathrm{meas}}\}$.
\end{proof}

\begin{lemma}[Persistent stabilizers]
\label{lem:persistent}
Suppose there exists a set of  stabilizers $\mathsf K\subseteq\Stab^{(0)}$ that
remain in every instantaneous stabilizer group and are always unmasked; equivalently, every
$k\in\mathsf K$ commutes with every measurement of the circuit. Let $d_K$ be the distance defined with $K$ as unmasked stabilizers and the rest as permanently masked stabilizers and their destabilizers. Then
\begin{equation}
    \min\{d_K,d_{\mathrm{meas}}\}\ \le\ d\ \le\ \min\{d_U,\ d_{\mathrm{meas}}\}.
\end{equation}
In particular, if $d_K\ge d_U$, then Condition~1 holds and $d=\min\{d_U,d_{\mathrm{meas}}\}$.
\end{lemma}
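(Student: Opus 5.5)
The upper bound $d\le\min\{d_U,d_{\mathrm{meas}}\}$ is exactly the upper-bound half of the proof of Theorem~\ref{thm:two-distances}, which did not use Condition~1. A minimum-weight element of $(\atcut{\Uspace}{t})^\perp\setminus\atcut{\Gspace}{t}$ at some cut is an undetectable spacetime fault with nontrivial logical action. Every fault counted by $d_{\mathrm{meas}}$ is such a fault by definition. So only the lower bound $d\ge\min\{d_K,d_{\mathrm{meas}}\}$ and the ``in particular'' clause need work.

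For the lower bound I will rerun the lower-bound argument of Theorem~\ref{thm:two-distances}, replacing Condition~1 by a statement about $\mathsf K$. Let $E$ be an undetectable fault with $\operatorname{wt}(E)<d_K$. Pushing forward with Lemma~\ref{lem:pushforward} gives $E\equiv E_T+M(E)\pmod{\Gspace}$ with $\operatorname{wt}(E_T)<d_K$, since pushing forward through a measurement-only circuit does not spread support.

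The key observation is that each $k\in\mathsf K$ commutes with every measurement. Hence $k^{(t)}$ is the same Pauli at every cut, and by Eq.~\eqref{eq:detector-cut-stabilizer} the operator $\bigoplus_t k^{(t)}$ is the detector stabilizer obtained from the terminal readout of $k$ compared with its input sign. Moreover, every anticommuting propagator $P^{(t)}+P^{(t+1)}$ commutes with this detector, since the two copies of $P$ contribute the same pairing with $k$. Measurement errors $M(E)$ are therefore invisible to the $\mathsf K$-detectors. Since $E$ is undetectable, $E_T$ must commute with all of $\mathsf K$, independently of $M(E)$.

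I then regard $E_T$ as an element of $\mathsf K^\perp$ at cut $T$. The code with $\mathsf K$ unmasked and all other generators permanently masked, together with their destabilizers, has distance $d_K$. So $\operatorname{wt}(E_T)<d_K$ forces $E_T$ to lie in that gauge group. That gauge group is $\langle \mathsf K,\text{rest of }\Stab^{(T)},\text{destabilizers}\rangle$, and I must check it is contained in $\Gspace$ up to logical-trivial terms. This is the main obstacle: the destabilizers chosen for $d_K$ need not coincide with those in $\atcut{\Gspace}{T}$, where $\atcut{\Uspace}{T}=\Stab^{(T)}$. My first attempt will be to argue that any element of $\Stab^{(T)}\oplus\langle\text{destabilizers}\rangle$ commuting with $\mathsf K$ acts trivially on the logical Paulis. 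The destabilizers of $\Stab^{(T)}$ can be taken in $(\text{logicals})^\perp$, so such an $E_T$ changes only syndromes and not logical values. The residual syndrome on the non-$\mathsf K$ stabilizers is then absorbed using the partners $\widehat r_i$ and masked pairs, as in the reduction in the proof of Theorem~\ref{thm:gauge-structure}. This would give $E_T\equiv$ a logically trivial operator modulo $\Gspace$.

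With that done, $E$ is equivalent modulo gauge to $M(E)$ plus a trivial term. If $E$ is a logical fault, $M(E)$ must flip some $V_\ell$ parity, so $\operatorname{wt}(E)\ge d_{\mathrm{meas}}$. This gives $d\ge\min\{d_K,d_{\mathrm{meas}}\}$. For the final clause, if $d_K\ge d_U$, the same argument shows that every $E_T$ with $|E_T|<d_U\le d_K$ is either gauge-trivial or detected by the perfect readout of $\mathsf K$, whatever the earlier measurement errors. That is exactly Condition~1, so Theorem~\ref{thm:two-distances} applies; alternatively, the two bounds squeeze together to give $d=\min\{d_U,d_{\mathrm{meas}}\}$.
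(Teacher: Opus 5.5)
\textbf{Where your proof matches the paper, and what it adds.} Your skeleton is the paper's: the upper bound comes from Theorem~\ref{thm:two-distances}, you push $E$ forward, $\operatorname{wt}(E_T)<d_K$ places $E_T$ in the $\mathsf K$-code's gauge group, and the remaining logical action is charged to $M(E)$. Your observation that $\bigoplus_t k^{(t)}$ is a detector stabilizer commuting with every propagator is correct. It gives $E_T\in\mathsf K^\perp$ whatever $M(E)$ is, which is a step the paper's proof uses silently.

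\textbf{The gap.} It lies in your patch for the mismatch you flagged. The step ``the residual syndrome on the non-$\mathsf K$ stabilizers is then absorbed using the partners $\widehat r_i$ and masked pairs'' cannot work. Every element of $\Gspace$ commutes with $\Sdet=\Z{\Gspace}$, so adding gauge operators never changes a detector syndrome. The reduction in Theorem~\ref{thm:gauge-structure} only cancels pairings with the non-detector combinations $u_i$, and it is applied to faults that are already undetectable.

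Since $\atcut{\Uspace}{T}=\atcut{\Stab}{T}$, every element of $\atcut{\Stab}{T}$ is the cut-$T$ restriction of a terminal detector. So an $E_T$ with a nonzero destabilizer component is detectable by itself. Its syndrome is cancelled by $M(E)$, which is then detectable by itself as well. Hence $E_T\notin\Gspace$ and $E\not\equiv M(E)\pmod{\Gspace}$. Your claims ``$E_T\equiv$ a logically trivial operator modulo $\Gspace$'' and ``$E$ is equivalent modulo gauge to $M(E)$ plus a trivial term'' therefore fail. In particular, $E$ need not satisfy the literal clause ``gauge equivalent to a set of measurement errors'' in $d_{\mathrm{meas}}$.

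\textbf{How to repair it.} Only the first half of your patch is needed. For undetectable $E\equiv E_T+M(E)\pmod{\Gspace}$, the error in the tracked value of $\ell$ is $\symp(E_T,\ell^{(T)})+|M(E)\cap V_\ell|\pmod 2$. By Lemma~\ref{lem:Vell-well-defined} this does not depend on the representative. Changing $\ell^{(T)}$ by $s\in\atcut{\Stab}{T}$ changes $V_\ell$ by a detector that also contains the terminal readout of $s$, and $E$ does not flip that detector.

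Multiplying $\ell^{(T)}$ by suitable non-$\mathsf K$ generators $p_j$ makes it commute with all $\mathsf K$-code destabilizers. Then the first term vanishes, and a logical $E$ has odd overlap with some $V_\ell$.

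To conclude $\operatorname{wt}(E)\ge d_{\mathrm{meas}}$, you must read ``gauge equivalent'' in the definition of $d_{\mathrm{meas}}$ as allowing a final-cut residual in the $\mathsf K$-code's instantaneous gauge group. This is exactly the reading on which the paper's one line ``$E_T$ is in the gauge group, and therefore we are left with $M(E)$'' relies. With that correction, your proof and the paper's coincide.

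The same caveat applies to deducing Condition~1, which strictly asks for $E_T\in\Gspace$. Your alternative, squeezing $\min\{d_K,d_{\mathrm{meas}}\}\ge\min\{d_U,d_{\mathrm{meas}}\}$ against the upper bound, is the sound way to obtain $d=\min\{d_U,d_{\mathrm{meas}}\}$.
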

\begin{proof}
The upper bound is the first half of the proof of Theorem~\ref{thm:two-distances}, which
does not use Condition~1.

Let $E$ be an undetectable fault with $\operatorname{wt}(E)<d_K$, and push it forward:
$E\equiv E_T+M(E)\pmod{\Gspace}$ with $\operatorname{wt}(E_T)\le\operatorname{wt}(E)$.
As $\operatorname{wt}(E_T)<d_K$, the definition of $d_K$ implies that $E_T$ is in the gauge group, and therefore we are left with $M(E)$. Hence $E$ is either harmless or a
logical fault counted by $d_{\mathrm{meas}}$. If $d_{\mathrm{meas}}\geq d_K$, then $E$ is harmless and $d \geq d_K=\min\{d_K,d_{\mathrm{meas}}\}$. If $d_{\mathrm{meas}}<d_K$, then a logical error can be caused by measurement errors from $E$, and $d = d_{\mathrm{meas}}=\min\{d_{\mathrm{meas}},d_K\}$. This proves the lower bound.

If $d_K\ge d_U$, we get back Condition~1.
\end{proof}

\begin{remark}
    With Clifford layers, pushing forward in the sense of Lemma~\ref{lem:pushforward} can spread the support of a propagated Pauli operator, so $\text{wt}(E_T) \leq \text{wt}(E)$ fails in the proof of Theorem~\ref{thm:two-distances}. The theorem still holds for a stronger condition that requires every residual error from a fault of weight less than $d_U$ to be detectable. \\
\end{remark}

\begin{remark}
In general, without Condition 1, it is possible to create stabilizer codes where there exists a logical spacetime error that has smaller weight than the minimum unmasked distance. A simple counter-example is in the Appendix, where we present a randomly generated 12-qubit code satisfying the following properties: It has a basis of logical representatives that does not evolve with time. The minimum unmasked distance is $4$, but because some of the unmasked stabilizers do not persist, we can find a weight 3 undetectable spacetime error, which is a logical error. 
\end{remark}

\section{Applications}~\label{sec: appli}
We now use the framework for two practical tasks. First, we apply the decomposition of
Section~\ref{sec:two-distances} to existing dynamical codes. It locates where distance
is lost: in the instantaneous codes or in the measurement schedule. Second, we ask
what the distances of two circuit pieces guarantee (and when we can get an equality) once the pieces are glued
together, and identify potentially new logical errors that can appear because of the gluing at the boundary. We prove a distance upper bound in Section~\ref{sec: circuit composition} and an equality under the extra correctability condition in Theorem~\ref{thm:composition}. In Subsection~\ref{subsub: Distance lower bound}, we show that the distance of the composed circuit $C$
can be determined by the distances of the individual smaller circuits and the distance of a finite window around the glued boundary.

\subsection{Diagnosing distance in dynamical codes}
Computing the spacetime distance by exhaustive search is expensive, and the resulting
number alone does not say how to improve a circuit. The decomposition of
Section~\ref{sec:two-distances} gives a cheaper and more informative route.

For a measurement-only circuit, Theorem~\ref{thm:two-distances} splits the
spacetime distance into two contributions if \textbf{Condition 1} holds,
\begin{equation}
    d = \min\{d_U,\, d_{\mathrm{meas}}\}.
\end{equation}
The first, $d_U$, is a property of the instantaneous stabilizer groups and
their masking: it is the smallest weight of an undetectable logical error at a
single cut. The second, $d_{\mathrm{meas}}$, measures how many faults are needed
to corrupt the logical update rule without tripping a detector. It is governed
by the sets $V_\ell$ of outcomes used to update each logical representative,
and by whether those outcomes also enter a detector. Computing both quantities
therefore shows how large the distance is and where distance can be lost.
If $d < d_U$, the instantaneous codes are good and the schedule is at fault;
if $d = d_U$, the bottleneck is the instantaneous code itself. We illustrate
this on two dynamical codes from the literature.

\paragraph{Planar Floquet code.}
We consider the 64-qubit hexagonal patch of Vuillot's planar Floquet
code~\cite{vuillot2021planarfloquetcodes, blackwell2025code}, which encodes one logical qubit. The window starts in the steady-state instantaneous stabilizer group after a
green round, followed by the rounds R, B, G, R. Every
instantaneous stabilizer group has unmasked distance $d_U = 5$, and Condition~1
holds, since the persistent unmasked stabilizers admit no logical operator of
weight at most six. Nevertheless the spacetime distance is $d = 2$.

Since the code satisfies Condition 1, we just need to calculate $d_{\mathrm{meas}}$. The minimum-weight logical
fault is
\begin{equation}
    E = Z_1^{(0)} + Z_1^{(1)},
\end{equation}
a single flipped outcome of the weight-one check $X_1$ on a corner of the patch.
This outcome enters the update
of both logical representatives, but it belongs to no detector, because the next round measures $Z_1Z_2$, which removes $X_1$ from the stabilizer group before it is measured again. A single measurement error therefore flips
both logical outcomes undetected, and $d= \min\{d_U,d_{\mathrm{meas}}\} = 2$.

\paragraph{Floquet--Bacon--Shor code.}
We next consider the schedule of Alam and Rieffel~\cite{Anonymous_2025},
which produces a dynamical logical qubit on a $d_{\mathrm{BS}}\times d_{\mathrm{BS}}$
Bacon--Shor lattice. The code encodes $k = 2$
logical qubits. Condition 1 holds, and we can verify that $d = d_U = d_{\text{meas}}$.

These two examples show the potential applications of the two types of distances for finding new dynamical codes. The unmasked
distance is a static computation, one instantaneous stabilizer group at a time;
the sets $V_\ell$ come directly from the logical update rule; and $d_{\mathrm{meas}}$ is the distance of classical codes. This simplifies the calculation for many existing dynamical codes in the literature.

\subsection{Composition of circuits}~\label{sec: circuit composition}
Large fault-tolerant protocols can be assembled from smaller pieces whose distances have
been computed separately.
In this subsection we determine an upper bound on the distance of the composed circuit, and what must be computed to determine the distance of the composite circuit.

\paragraph{Setup.}
Let $C_1$ be a circuit on cuts $0,\dots,T_1$ with input stabilizer group $\Stab^{(0)}$
and output instantaneous stabilizer group $\Stab^{(T_1)}$, and let $C_2$ be a circuit on
cuts $T_1,\dots,T_2$ whose input stabilizer group is the same $\Stab^{(T_1)}$. We choose the input
logical representatives of $C_2$ to be the output logical representatives of $C_1$, so
that the two pieces carry the same logical qubits across the boundary. $C_1$, taken in isolation, has distance $d_1$ as a subsystem spacetime code, and $C_2$ (again in isolation) has distance $d_2$.

The composite $C = C_1\circ_{T_1} C_2$ consists of the blocks of $C_1$ followed by those
of $C_2$. 

The gluing at $T_1$ means that we have to recheck if there are elements in the temporarily masked stabilizer cosets $\atcut{\Tspace}{t} + \atcut{\Uspace}{t}$ for $t\leq T_1$ which are unmasked or permanently masked for the enlarged measurement window ending at $T_2$. In the calculation of the subsystem spacetime code distance, these temporarily masked stabilizers are all treated as unmasked stabilizers with a terminal round of measurements. We denote the newly formed cosets of permanently masked stabilizers at $t$ by $\atcut{\Pspace}{t} $, and the unmasked stabilizers at $t$ by $\atcut{\Uspace}{t}$ with respect to the composite circuit $C$. 

Gluing has four effects at the boundaries.
\begin{enumerate}
    \item \emph{Internal detectors.} Detectors from $C_1,C_2$ that are not supported at $T_1$ remain in $C$. 
    \item \emph{Glued detectors.} For $u\in\atcut{\Uspace}{T_1}$, a detector of $C_1$ is formed from a terminal readout at $T_1 +1/2$. For $C_2$ there exists $s_\delta\in \Sdet(C_2)$ such that it is $u_\delta^{(T_1)} =u$ when restricted to $T_1$. These two detectors are merged at $T_1$ and now form a single detector of $C$. 
    \item \emph{Lost detectors.} $p\in \atcut{\Pspace}{T_1}$ represents a nonzero class of stabilizers that are permanently masked in $C$. This is previously involved in a detector by using a terminal readout in $C_1$, but it is now permanently masked.
    
    \item \emph{Logical update.} Logical representatives of $C_1$ at $T_1$ need not commute with measurements at $T_1 +1/2$ and are updated using logical update rules. $V_\ell$ now consists of the measurements involved in the logical updates from $C_1$ and $C_2$, labeled $V_{\ell,C_1}$, $V_{\ell,C_2}$, the new updates at $T_1$, and potentially measurements coming from elements in $\Stab^{(T_1)}$.
\end{enumerate}

Because of the composition of the two circuits, some of the stabilizers that were originally unmasked by the terminal measurement at $T_1$ are now masked. Thus, less syndrome information may be available for $C_1$ and the unmasked distance for $C_1$ can decrease. On the other hand, the unmasked distance for $C_2$ remains unchanged. When
$\atcut{\Pspace}{T_1}$ is not trivial, the unmasked distance must be recomputed at the cuts $t\le T_1$.  \\
\begin{definition}
    Let $d_1(C)$ denote the distance of the subsystem spacetime code $C_1$ with a classification of the stabilizers at $T_1$ based on whether they are unmasked or permanently masked in $C$. In other words, the terminal readout for $C_1$ no longer measures the full set of stabilizers but instead only measures the subset of stabilizers that are classified as unmasked stabilizers with respect to the composed circuit $C$. In general, $d_1(C)\leq d_1$.
\end{definition}
\paragraph{Distance upper bound.}
The distance $d(C)$ of the circuit $C$ is upper-bounded by the following
\begin{equation}
    d(C)\le\min\left\{ d_1(C),\ d_2,\ d_{\mathrm{meas}}(C) \right\}.
    \label{eq:glue-upper}
\end{equation}
In addition, $\min\{d_1(C),d_2\}\leq \min\{d_U(C_1), d_U(C_2)\}$ and $d_{\text{meas}}(C)\leq \min\{d_{\text{meas}}(C_1),d_{\text{meas}}(C_2)\}$.

\paragraph{Faults.} Every fault that is undetectable and non-trivial in $C_1$ and $C_2$ respectively is undetectable and non-trivial in $C$. In addition, there may be faults in $C$ that are individually detectable when restricted to $t\leq T_1$ and $t\geq T_1$ but became an undetectable logical error when combined. 

Even if $\atcut{\Pspace}{T_1}$ is trivial, in general we cannot guarantee that $d(C) = \min\{d_{\text{meas}}(C),d_1(C), d_2\}$. Consider the counterexample in appendix~\ref{sec: counter} with $T_1 = 10$: $d_1=d_2 = 4$ and $d_{\mathrm{meas}}\geq 4$ and $\atcut{\Pspace}{T_1}$ is trivial but $d(C) = 3$.

The following Condition 2 is a modified Condition 1 to account for the potential spread of errors from Clifford unitaries.
\paragraph{Condition 2} Let $E$ be a fault and $E_T$ be its residual error on time slice $T$. A subsystem spacetime code satisfies Condition 2 if $|E|< \min\{d_1(C), d_2\}$ implies that the residual error $E_T$ given a subsequent perfect final round(s) of measurements is detectable regardless of measurement errors that occurred earlier, unless $E_T$ is a trivial error.  \\

\begin{theorem}
\label{thm:composition}
If $C$ satisfies Condition 2, then $d(C) = \min\{d_1(C), d_2\}$ if and only if  $d_{\text{meas}}(C) \geq \min\{d_1(C), d_2\}$.
\end{theorem}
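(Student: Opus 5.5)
We will prove the two directions separately, modelling the argument on the proof of Theorem~\ref{thm:two-distances}. Throughout write $d^\ast:=\min\{d_1(C),d_2\}$. The first ingredient is the unconditional upper bound \eqref{eq:glue-upper}, $d(C)\le\min\{d_1(C),d_2,d_{\mathrm{meas}}(C)\}$. We take it as given, since it is established in the text before the theorem and does not use Condition~2.

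\emph{Only if.} Suppose $d(C)=d^\ast$ and, for contradiction, $d_{\mathrm{meas}}(C)<d^\ast$. The minimizer in the definition of $d_{\mathrm{meas}}(C)$ is an undetectable fault of $C$ that is gauge equivalent to measurement errors $M(E)$ with $|M(E)\cap V_\ell|$ odd for some logical $L_\ell$. By Lemma~\ref{lem:Vell-well-defined} this parity is independent of the chosen representative, so the fault flips a propagated logical value. It therefore has nontrivial logical action and is a nontrivial dressed logical of the subsystem spacetime code of $C$ (Theorems~\ref{thm:gauge-structure} and \ref{thm:distance-equality}). Hence $d(C)\le d_{\mathrm{meas}}(C)<d^\ast$, a contradiction. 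This direction is just the upper bound.

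\emph{If.} Assume $d_{\mathrm{meas}}(C)\ge d^\ast$. The upper bound then gives $d(C)\le d^\ast$, so we need $d(C)\ge d^\ast$. Take any undetectable fault $E$ in $C$ with $\operatorname{wt}(E)<d^\ast$, and push it forward through all blocks using Lemma~\ref{lem:pushforward}. Commuting propagators are absorbed into $\Gspace$ and anticommuting ones are recorded as measurement errors, giving
\[
E\equiv E_T+M(E)\pmod{\Gspace}.
\]
Unlike Theorem~\ref{thm:two-distances}, we do not need $\operatorname{wt}(E_T)\le\operatorname{wt}(E)$. Condition~2 is phrased in terms of $|E|$ itself, which is exactly what absorbs the spreading by Clifford layers noted in the Remark after Lemma~\ref{lem:persistent}. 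Since $E$ is undetectable, the pair $(E_T,M(E))$ trips no detector, including the perfect final round. Condition~2 then forces $E_T$ to be a trivial, i.e.\ gauge-equivalent-to-identity, error. So $E$ is gauge equivalent to the pure measurement-error configuration $M(E)$.

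If $E$ had nontrivial logical action, that action would be carried entirely by $M(E)$. By the logical update rules (Procedure~\ref{lem:logical-update}), this means $|M(E)\cap V_\ell|$ is odd for some $L_\ell$, where $V_\ell$ is now the composed update set described in effect~4 of the gluing. Then $E$ is admissible in the definition of $d_{\mathrm{meas}}(C)$, so $\operatorname{wt}(E)\ge d_{\mathrm{meas}}(C)\ge d^\ast$, a contradiction. Hence every undetectable $E$ with $\operatorname{wt}(E)<d^\ast$ lies in $\Gspace$, so $d(C)\ge d^\ast$ and equality holds.

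The main obstacle is the step where $E_T$ trivial plus undetectability implies the logical action is exactly the parity of $M(E)$ on $V_\ell$. We must check that the stabilizer contributions at the glued cut $T_1$ that enter $V_\ell$ are consistently accounted for. These are the reclassified temporarily masked classes, now unmasked or permanently masked, and the readouts of $\Stab^{(T_1)}$. We would handle this with Lemma~\ref{lem:Vell-well-defined}: any ambiguity in $V_\ell$ differs by a detector, and since $E$ is undetectable it cannot change the parity. We would also verify that pushing forward across $T_1$ uses the propagators of $C$ rather than those of $C_1$ with its terminal round, so that $M(E)$ is defined with respect to the composite circuit.
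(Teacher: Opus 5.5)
Your proposal is correct and is essentially the paper's argument. The paper's proof says only to rerun the proof of Theorem~\ref{thm:two-distances} with Condition~2 replacing the bound $\operatorname{wt}(E_T)\le\operatorname{wt}(E)$, which can fail when Clifford layers are present. That is exactly your argument: you push the fault forward, use Condition~2 to make $E_T$ trivial, and bound the remaining measurement-error part by $d_{\mathrm{meas}}(C)$, with the upper bound \eqref{eq:glue-upper} giving the other direction. Your extra care about the composed update sets $V_\ell$ at the glued cut, via Lemma~\ref{lem:Vell-well-defined}, fills in a detail the paper leaves implicit.
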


\begin{proof}
Similar to proof for Theorem~\ref{thm:two-distances}. Condition 2 sidesteps the issue that $\mathrm{wt}(E_T)\leq \mathrm{wt}(E)$ breaks, allowing the previous proof to work here.
\end{proof}

Another way to interpret why gluing may reduce distance in general is because the subsystem spacetime code of $C_2$ assumes knowing the correct syndrome of the initial set of stabilizers. Thus, undetectable errors are restricted to a smaller subset, where there are no initial measurement errors. However, for the composed circuit $C$, we need to calculate error chains where the eigenvalues of stabilizers at $T_1$ can be wrong due to measurement errors earlier from $C_1$. 

For $C_1$ the same issue arises at the boundary, even after taking into account the fact that some of the stabilizers are now permanently masked. The measurement errors at $T_1 +1/2$ are also not included in the distance calculation of its spacetime code. 

So we have the situation where we can now have measurement errors at the boundary, such that they create new undetectable error patterns. One such scenario arises where a stabilizer $s$ in $C_2$ may stem from a newly initialized stabilizer in $C_1$ as a destabilizer to some permanently masked stabilizer. If it does not persist for long enough but is still used as a detector, one can easily create a short string of undetectable measurement errors.

\subsubsection{Distance lower bound}~\label{subsub: Distance lower bound}
Suppose there exists a basis for $\Sdet$ such that each stabilizer generator $s \in \Sdet$ is supported on at most $\Delta$ consecutive cuts. For example, for static qLDPC codes and all known families of Floquet codes, there exists a basis such that $\Delta$ is a constant.

\begin{theorem}[Window theorem]
\label{thm:window}
Let $m=\min\{d_1(C),d_2\}$. Suppose there exists a basis of detector stabilizers in $C$ and $C_1$ such that each generator is supported on at most $\Delta$ consecutive cuts. Let $R= (m-1)(\Delta-1)$, and let
$t_1=\max\{0,\,T_1-R\}$, $t_2=\min\{T_2,\,T_1+R\}$. Define a circuit window that contains $T_1$ as $C_W=C|_{[t_1,t_2]}$. The terminal round for $C_W$ measures exactly the stabilizers at $t_2$ that $C$ reveals after $t_2$. The distance of its corresponding subsystem spacetime code is computed to be $d_W$.
Then
\begin{equation}
    d(C) = \min\{d_1(C),\ d_2,\ d_W\}.
\end{equation}
\end{theorem}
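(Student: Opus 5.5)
We prove the two inequalities $d(C)\le\min\{d_1(C),d_2,d_W\}$ and $d(C)\ge\min\{d_1(C),d_2,d_W\}$ separately, working throughout with Theorem~\ref{thm:gauge-structure} and Theorem~\ref{thm:distance-equality}. These let us identify logical faults of any of the circuits $C$, $C_1$ (with the $C$-induced reclassification at $T_1$), $C_2$ and $C_W$ with elements of $\Z{\Gspace}^\perp\setminus\Gspace$, i.e.\ faults that commute with every detector stabilizer and have nontrivial logical action.

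\emph{Upper bound.} The bounds $d(C)\le d_1(C)$ and $d(C)\le d_2$ are part of the distance upper bound~\eqref{eq:glue-upper} already established. For $d(C)\le d_W$, take a minimum-weight logical fault $E_W$ of $C_W$ and embed it into $\VST(C)$ by extending it by the identity outside $[t_1,t_2]$.

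\begin{itemize}
\item \emph{Undetectability.} Every detector of $C$ restricts to the window as a parity of window measurements together with window-boundary data. The terminal round of $C_W$ was chosen to reveal exactly the stabilizers at $t_2$ that $C$ reveals later, so each such restriction is a detector of $C_W$ (possibly trivial), and $E_W$ commutes with it.
\item \emph{Nontrivial logical action.} Propagated logical values of $C$ restricted to the window coincide with those of $C_W$ up to window-boundary stabilizers, by the argument of Lemma~\ref{lem:Vell-well-defined}. Hence the logical action of $E_W$ is still nontrivial in $C$.
\end{itemize}

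\emph{Lower bound.} Let $E$ be an undetectable logical fault of $C$ with $\operatorname{wt}(E)<m$; we must show $\operatorname{wt}(E)\ge d_W$. The key is a locality argument. Decompose the cuts containing $\operatorname{supp}(E)$ into maximal ``clusters'': groups of occupied cuts in which consecutive occupied cuts are separated by gaps of fewer than $\Delta-1$ empty cuts.

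\begin{itemize}
\item Since $E$ has at most $m-1$ nonzero cuts, any cluster spans at most $(m-1)$ occupied cuts plus $(m-2)(\Delta-2)$ empty ones. In particular, a cluster that touches $T_1$ lies within distance $R=(m-1)(\Delta-1)$ of $T_1$, hence inside $[t_1,t_2]$.
\item Two distinct clusters are separated by at least $\Delta-1$ empty cuts. Since each detector generator is supported on at most $\Delta$ consecutive cuts, no generator meets two clusters. Therefore each cluster $E_j$ is separately undetectable in $C$.
\end{itemize}

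The logical action is additive over clusters, so some cluster $E_j$ is itself an undetectable logical fault of $C$, with $\operatorname{wt}(E_j)<m$. We then distinguish three cases.

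\begin{itemize}
\item If $E_j$ lies entirely in $t\le T_1$ and avoids the window near $T_1$, it is an undetectable logical fault of $C_1$ with the $C$-classification at $T_1$. This gives $\operatorname{wt}\ge d_1(C)\ge m$, a contradiction.
\item If $E_j$ lies entirely in $t\ge T_1$ away from $T_1$, the same argument with $C_2$ gives $\operatorname{wt}\ge d_2\ge m$, again a contradiction.
\item Otherwise $E_j$ lies in $[t_1,t_2]$. It commutes with all detectors of $C_W$, since these are restrictions of $C$ detectors or are generated by window detectors using the $\Delta$-locality basis for $C$ and $C_1$. Its logical action in $C_W$ is nontrivial, so $\operatorname{wt}(E)\ge\operatorname{wt}(E_j)\ge d_W$.
\end{itemize}

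\emph{Main obstacle.} The delicate step is the last case together with the first two: matching detector spaces across the restriction. We need that the detectors of $C_W$ are exactly the detectors of $C$ restricted to (or supported within) the window, and that clusters away from $T_1$ see detectors of $C_1$ and $C_2$ correctly. Here the hypothesis that a $\Delta$-local basis exists simultaneously for $C$ and $C_1$ is used, and it is also where the choice of terminal round for $C_W$ matters. I would first handle this by showing $\Sdet(C_W)$ equals the span of those $\Delta$-local generators of $\Sdet(C)$ meeting $[t_1,t_2]$, truncated at $t_2$ via the terminal readout. I would then check that the truncation does not create spurious detectors at $t_1$, since the input ISG of $C_W$ with known signs plays the role of the earlier detectors.
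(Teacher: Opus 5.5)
\paragraph{Verdict.} Your overall route is the paper's, but there is a gap: the claim that logical action is the same in $C$ and $C_W$ is left unjustified, and it fails without an extra convention.

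\paragraph{Where you agree with the paper.} The upper bound comes from embedding subcircuit faults into $C$. The lower bound splits $E$ into clusters separated by at least $\Delta-1$ empty cuts, so that no $\Delta$-local detector generator meets two of them. You then confine a cluster that straddles $T_1$ to $[t_1,t_2]$ and compare with $C_W$. Extracting a single nontrivial cluster via additivity of logical action is cleaner than the paper's ``single fault cluster'' step.

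\paragraph{The gap.} You need, in both directions, that a window-supported undetectable fault has nontrivial logical action in $C$ if and only if it does in $C_W$. Matching the detector spaces, which is all your ``main obstacle'' paragraph addresses, does not give this. You need $\Gspace(C_W)$ to agree with the window-supported part of $\Gspace(C)$, and under $C_W$'s own conventions it need not. A stabilizer at a window cut that $C$ masks after $t_2$ is permanently masked in $C$, with destabilizer fixed by the masking measurement. In $C_W$ it is only temporarily masked, and its destabilizer is a free choice.

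For example, let $s\in\Stab^{(t_2)}$ be masked by a measurement $m$ in block $t_2+\tfrac12$. Suppose $m$ anticommutes with $\bar X^{(t_2)}$ and commutes with $\bar Z^{(t_2)}$.
\begin{itemize}
\item In $C$, $\atcut{m}{t_2}\in\Gspace$. It flips both $s$ and $\bar X$, and the update $\bar X\mapsto\bar X s$ cancels the two flips.
\item In $C_W$ on its own, a destabilizer of $s$ that leaves the terminal logical representatives untouched is $m\bar Z$.
\item With that choice, $\atcut{m}{t_2}$ is undetectable and gauge-equivalent to $\atcut{\bar Z}{t_2}$. Hence $d_W\le\operatorname{wt}(m)$, and $d(C)\le d_W$ fails.
\item The same choice puts $\atcut{(m\bar Z)}{t_2}$, a nontrivial logical of $C$, into $\Gspace(C_W)$. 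So the inclusion $\Gspace(C_W)\subseteq\Gspace(C)$ that your case (c) silently uses also fails.
\end{itemize}
Your phrase ``coincide up to window-boundary stabilizers'' hides exactly this discrepancy: it is the value of a masked boundary stabilizer that is never read.

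\paragraph{How to fix it.} Classify these stabilizers in $C_W$ as permanently masked, with destabilizers inherited from $C$, just as the definition of $d_1(C)$ does at $T_1$. The paper builds this in by asserting that $C_W$ has the same unmasked stabilizers, destabilizers and masked components as $C$ on the window. Then $\Gspace(C_W)$ is generated by the $\atcut{\Gspace}{t}(C)$ with $t_1\le t\le t_2$ together with the commuting propagators of the window blocks. Both implications then follow from Theorem~\ref{thm:gauge-structure} once the detectors match.

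\paragraph{A smaller fix to the case split.} Drop the qualifier ``avoids the window near $T_1$'' in your first case. As written, your ``otherwise'' case asserts that a cluster supported in $t\le T_1$ but reaching below $t_1$ lies in $[t_1,t_2]$, which is false. No proximity condition is needed. A cluster supported in $t\le T_1$ (resp.\ $t\ge T_1$) is handled by $C_1(C)$ (resp.\ $C_2$), for two reasons:
\begin{itemize}
\item Their detectors are restrictions of detectors of $C$. The terminal round of $C_1(C)$ reads only $C$-unmasked stabilizers, and the input signs of $C_2$ are parities of $C_1$ outcomes.
\item Their gauge groups embed in $\Gspace(C)$.
\end{itemize}
Only clusters with support on both sides of $T_1$ need your span bound.
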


\begin{proof}
\emph{Upper bound.} The upper bound proof is trivial. Any undetectable nontrivial logical error in any of the subcircuits of $C$ must be an undetectable logical error on $C$ as well.

\emph{Lower bound.} Let $E$ be a minimum-weight logical fault of $C$, and suppose that
$w=\operatorname{wt}(E)<\min\{d_1(C),d_2,d_W\}$. Then $w\le m-1$, so $E$ is supported on at most $m-1$ cuts.

First we determine the temporal support of $E$. $E$ cannot be supported only in $C_1$ because logical faults within $C_1$ must have distance greater than $d_1(C) > w$, and similarly $E$ cannot be supported only in $C_2$. We therefore consider the case where $E$ has support on both $C_1$ and $C_2$. 

For a single fault cluster, we can confine the temporal support of $E$ that is supported on both $C_1$ and $C_2$. The fault cluster must not have $\Delta -1$ or more consecutive cuts without faults. Otherwise, we can find two smaller fault clusters each localized on a separate set of detectors, so the smaller fault clusters must be individually undetectable as well. If $E$ has non-trivial support on $t> T_1 +(\Delta-1)(m-1)$, then since each gap is at most $\Delta-2$ cuts, $E$ must be supported on $t>T_1$, giving a contradiction. The same argument works for the starting temporal support of $E$ by symmetry. Thus, $E$ must be supported temporally on 
\begin{equation*}
    I\coloneqq\big[\,T_1-(m-1)(\Delta-1),\ T_1+(m-1)(\Delta-1)\,\big].
\end{equation*}
We define the circuit window at the boundary to be $C_W \coloneqq C|_{[t_1, t_2]}$ with $t_1= T_1-(m-1)(\Delta-1),t_2 = T_1+(m-1)(\Delta-1)$. Note that the subsystem spacetime code for $C_W$ must be constructed using a partial terminal round as in Section~\ref{sec:beyond_steady_state} to avoid having extra detectors in $C_W$ as compared to in $C$. As defined here, $C_W$ has the same detectors as $C$ restricted to the time window for $C_W$. $C_W$ has the same unmasked stabilizers and destabilizers, and has the same masked components as $C$ restricted to the window.

Hence, $E$ has the same syndrome information in $C_W$ and in $C$. But $\mathrm{wt}(E)<d_W$ so $E$ must be either trivial or detectable in $C_W$. So $E$ must be trivial or detectable in $C$, as a full terminal round of measurements can unmask some permanently masked stabilizers in $C$. A contradiction. Therefore, we conclude that the equality holds, i.e. $w = \min\{d_1(C),d_2,d_W\}$.
\end{proof}

\begin{remark}[The depth assumption is needed]
The circuit of Appendix~\ref{sec: counter} has detectors spanning up to $12$ cuts. The windows $C|_{[0,t_2]}$ with $t_2\le 12$ have distance $4$.
From $t_2 =13$ on, the windows have distance $3$.
\end{remark}

\section{Conclusion}~\label{sec: conclude}
We have introduced the subsystem spacetime code of a circuit of Clifford
unitaries and Pauli measurements. Its gauge group is generated by the instantaneous gauge group, which depends on the masking properties of the elements,
together with the commuting propagators. By Theorem~\ref{thm:gauge-structure} the gauge group
consists exactly of the undetectable faults with trivial logical action. The subsystem spacetime code plays, for a
dynamical protocol, the role that the stabilizer formalism plays for a
static code: Its
center is the space of detector stabilizers, its distance is the fault
distance of the circuit, and gauge operators are exactly the harmless faults.

We use masking as the key organizing principle for defining the subsystem spacetime code. Each gauge qubit
is either a masked stabilizer paired with its
destabilizer, or an unmasked stabilizer paired with a commuting propagator, or
a pair of adjacent logical representatives. This classification helps 
explain how a circuit built from good instantaneous codes can
nevertheless have a small fault distance: A syndrome bit may never be read due to masking,
and measurement outcomes that feed the logical update may be undetectable. These two mechanisms yield two distances that bound the fault
distance from above, $d \le \min \{d_U, d_{\mathrm{meas}}\}$, where $d_U$ is the minimum
unmasked distance of all the instantaneous stabilizer groups, which is a static
computation at a fixed timeslice, and $d_{\mathrm{meas}}$ is the measurement distance, a classical
parity computation over the measurement record. For circuits that additionally satisfy a correctability condition
(see Condition 1), this bound turns out to be tight. 

One major application is circuit composition. The subsystem spacetime code framework can be used to compute the distance for each circuit component which bounds the fault-distance of the entire protocol. We show that gluing two circuits together can create logical faults that
neither piece contains. This can originate from lost detectors at the interface as temporarily
masked stabilizers become permanently masked. In general one retains only an upper bound,
$d(C) \le \min\{d_1(C), d_2, \dmeas(C)\}$. However, with an analogue of the
correctability condition (Condition 2), this bound is tight. In addition, we introduce the window
theorem which shows that the fault distance of the composed circuit equals
the minimum of the distances of the two pieces and the distance of
a window around the interface of the two smaller circuits.

The framework takes into account state preparation, logical measurement, and
reset, so it includes Floquet codes, code switching and deformation, and syndrome extraction circuits. We can design a more complicated circuit by preparing smaller gadgets and assembling them, using subsystem spacetime code to compute their fault distance or checking where the distance is lost in the composition. 

We close with several directions
that the present framework leaves open.

\subsection{Open problems}~\label{sec: open problems}
Firstly, our formalism only works for Clifford circuits and Pauli measurements. In light of the recent series of works on non-abelian codes~\cite{ZhuKobayashiHsin2026,LiWeiHanLiu2026}, it is natural to ask if the framework can be generalized to other families of quantum codes.

Secondly, the ZX calculus offers a graphical language for rewriting quantum
circuits, and distance-preserving rewrites have already been used
to construct Floquet codes \cite{RodatzPoorKissinger2026}. We imagine potential use of both subsystem spacetime code and the ZX calculus towards developing novel practical fault-tolerant architectures.

Thirdly, there exists a limit to logical computation that can be achieved using transversal gates \cite{EastinKnill2009, chakraborty2026no}. Fault-tolerant switching between codes with complementary gate sets
can overcome this limitation \cite{AndersonDuclosCianciPoulin2014}. One might expect that the subsystem spacetime code framework will be useful for developing new code switching protocols between two (or more) codes to achieve a universal set of gates.

Lastly, an example of Floquet codes without a parent subsystem code has been
constructed \cite{PRXQuantum.4.020341}, and
path-integral methods have produced new dynamical codes \cite{Bauer2024}. What general principles can
guide the discovery of further dynamical codes, including codes
outside the Pauli stabilizer setting?

\section*{Acknowledgments}
We thank Gideon Lee, Nathanan Tantivasadakarn, Joe Iverson, Xiao Xiao and Michael Gullans for valuable discussion and feedback. Part of this work was done while XF was an intern at Amazon in summer 2024. 

This material was funded in part by the U.S. Department of Energy, Office of Science, Office of Advanced Scientific Computing Research. Sandia National Laboratories is a multimission laboratory managed and operated by National Technology and Engineering Solutions of Sandia, LLC, a wholly owned subsidiary of Honeywell International Inc., for the U.S. Department of Energy’s National Nuclear Security Administration under contract DE-NA0003525. This paper describes objective technical results and analysis. Any subjective views or opinions that might be expressed in the paper do not necessarily represent the views of the U.S. Department of Energy or the United States Government. E.X.F. acknowledges NSF grant
OMA2120757 (QLCI) for Robust Quantum Simulation.

\section{AI disclosure}
Parts of this manuscript were prepared with the assistance of ChatGPT and Claude Code. These agents were used to run numerics to build intuition, to verify proofs, generate figures and to edit the writing of some sections, in accordance with the authors’ instructions. All content, claims, and conclusions have been reviewed and verified by the authors to ensure accuracy and originality. 
 
\appendix
\section{Notations}
\begingroup
\small
\setlength{\LTpre}{0.5\baselineskip}
\setlength{\LTpost}{0.5\baselineskip}
\begin{longtable}{@{}p{0.25\textwidth}p{0.27\textwidth}p{0.38\textwidth}@{}}
\caption{Notation used throughout the paper.}\label{tab:notation}\\
\toprule
Object & Notation & Meaning \\
\midrule
\endfirsthead
\multicolumn{3}{c}{\tablename~\thetable\ (continued)}\\
\toprule
Object & Notation & Meaning \\
\midrule
\endhead
\midrule
\multicolumn{3}{r}{Continued on the next page}\\
\endfoot
\bottomrule
\endlastfoot

\multicolumn{3}{@{}l}{\emph{Symplectic notation}}\\[2pt]
Symplectic form & $\symp(E,F)$ & $0$ for commuting Paulis and $1$ for anticommuting Paulis.\\
Symplectic orthogonal & $\orth{\mathsf W}$ & Normalizer of $W$.\\
Linear dual & $\dual{\mathsf W}$ & \\
\addlinespace
\multicolumn{3}{@{}l}{\emph{Circuit and spacetime notation}}\\[2pt]
Pauli group at cut $t$ & $\atcut{\Vcut}{t}$ & Phase-free (abelian) Pauli group at the time slice $t$.\\
Instantaneous stabilizer group & $\atcut{\Stab}{t}$ & Stabilizer group at cut $t$.\\
Circuit block & $\atcut{\CircuitBlock}{t+1/2}$ & Circuit between cuts $t$ and $t+1$.\\
Spacetime Pauli group & $\VST$ & $\bigotimes_{t=0}^{\FinalCut}\atcut{\Vcut}{t}$.\\

\addlinespace
\multicolumn{3}{@{}l}{\emph{Detector and logical notation}}\\[2pt]
Detector space & $\DetectorSpace$ & Binary space of detector relations.\\
Detector & $\DetElt\in\DetectorSpace$ & An individual detector relation.\\
Syndrome & $\DetResp$ & $\DetResp_\DetElt(f)$ indicates whether $f$ flips $\DetElt$. \\
Detector-to-spacetime map & $\DetMap$ & A map that sends a detector relations into $\VST$.\\
 Stabilizer from a detector & $\DetStab{\DetElt}$ & $\DetStab{\DetElt}=\DetMap(\DetElt)$.\\
Stabilizer space from detectors& $\Sdet$ & $\operatorname{im}\DetMap$.\\
 Logical space & $\Logical$ & Logical Pauli space.\\

Adjacent logical space & $\Ladj$ & Span of $\atcut{\ell}{t}+\atcut{\ell}{t+1}$, for all $t$.\\

\addlinespace
\multicolumn{3}{@{}l}{\emph{Masking and subsystem-code notation}}\\[2pt]
Sets for masking & $\atcut{\Uspace}{t},\atcut{\Tspace}{t},\atcut{\Pspace}{t}$ & Unmasked, temporarily masked, and permanently masked stabilizer subspaces at cut t.\\
Elements & $\Uelt{a}{t},\Telt{b}{t},\Pelt{c}{t}$ & Elements of the three masking subspaces.\\
Corresponding destabilizers & $\Udual{a}{t},\Tdual{b}{t},\Pdual{c}{t}$ & \\
Gauge group & $\Gspace$ & Gauge group of the subsystem spacetime code.\\
Center & $\Z{\Gspace}$ & $\Gspace\cap\orth{\Gspace}$.\\
Bare logical space & $\Lbare{\Gspace}$ & $\orth{\Gspace}\backslash\Z{\Gspace}$.\\
Dressed logical space & $\Ldressed{\Gspace}$ & $\orth{\Z{\Gspace}}\backslash\Gspace$.\\
Subsystem-code distance & $\CodeDistance{\Gspace}$ & Minimum weight of a nontrivial dressed logical Pauli.\\
Propagator spaces & $\Rprop,\Rcomm,\Ranti$ & All, measurement-commuting, and measurement-anticommuting propagators. \\
\end{longtable}
\endgroup
\section{Example where Condition 1 fails}~\label{sec: counter}
The following randomly generated 12-qubit code provides a counterexample to Condition 1. It illustrates that for a quantum code where unmasked stabilizers are constantly initialized and destroyed there may be no relation between the unmasked distance and the distance of the subsystem spacetime code. This is true even if the logical representatives do not dynamically evolve.

\begin{align}
  \Stab^{(0)} &= \langle g_1,\dots,g_{11}\rangle.
\end{align}

The code encodes one logical qubit, with logical operators
\begin{equation}
  \bar{X} = \texttt{ZYZYYIYIXIYY}, \qquad
  \bar{Z} = \texttt{YXXYZYXZZZXI},
\end{equation}
both of which commute with every measurement in the measurement scheme. The generators of $\Stab^{(0)}$ are given in Table~\ref{tab:initial-isg} and the measurement scheme, along with the effects of each measurement (either unmasking or masking a stabilizer), is described in Table~\ref{tab:schedule}.

\begin{table}[ht]
\centering
\begin{subtable}[t]{0.48\textwidth}
\centering
\caption{}
\label{tab:initial-isg}
\begin{tabular}{@{}lc@{}}
\toprule
name & Pauli \\
\midrule
$g_1$           & \texttt{IYYIYYIXYXXZ} \\
$g_2 = h$       & \texttt{YZXXZZYYXXYX} \\
$g_3$           & \texttt{YYIIXIIYZZZY} \\
$g_4$           & \texttt{YYXXZXXXZXZX} \\
$g_5$           & \texttt{XXIIIZYIZZZY} \\
$g_6$           & \texttt{YZYYZYIXXXYI} \\
$g_7$           & \texttt{IXXXIIIXYIYY} \\
$g_8$           & \texttt{IZXIXZIYXXIZ} \\
$g_9$           & \texttt{YIXZYIIIIZIZ} \\
$g_{10}$        & \texttt{ZIIIXZZXZIYZ} \\
$g_{11}$        & \texttt{YIIIIYZXZZIZ} \\
\bottomrule
\end{tabular}
\end{subtable}
\hfill
\begin{subtable}[t]{0.48\textwidth}
\centering
\caption{}
\label{tab:measured-operators}
\begin{tabular}{@{}lc@{}}
\toprule
name & Pauli \\
\midrule
$M_1 = c_1$ & \texttt{XXXZZXZIZXZY} \\
$c_2$       & \texttt{IXXXYZYIYYXZ} \\
$c_3$       & \texttt{IZYIYIZYIXIZ} \\
$c_4$       & \texttt{ZIZXXYIZIZII} \\
$c_5$       & \texttt{IYXZYZYYYXXY} \\
$c_6$       & \texttt{IZZXYYIIIXZX} \\
$c_7$       & \texttt{IXZIZXIZZIXI} \\
$c_8$       & \texttt{IZIYXZYYXYYY} \\
$c_9$       & \texttt{ZYYIZXZIXXZI} \\
$c_{10}$    & \texttt{IZXXZXXZYYYY} \\
$h$         & \texttt{YZXXZZYYXXYX} \\
$M_2$       & \texttt{IIXYIYYXZXZX} \\
\bottomrule
\end{tabular}
\end{subtable}

\caption{Pauli operators defining the 12-qubit example.
(a) Generators of the initial ISG $\Stab^{(0)}$.
(b) Operators appearing in the measurement scheme.}
\label{tab:example-operators}
\end{table}

\begin{table}[ht]
\centering
\begin{tabular}{@{}llllp{4.2cm}@{}}
\toprule
$m_i$ & Operator measured & Effect \\
\midrule
$m_0$ & $M_1$ & masks $g_1$; $\Stab^{(t_0)}\to \Stab^{(t_1)}$ \\
$m_1$ & $c_1\,(=M_1)$ & detector \\
$m_2,\dots,m_{10}$ & $c_2,\dots,c_{10}$ & detectors \\
$m_{11}$ & $h$ & detector; the only readout of $h$ \\
$m_{12}$ & $M_2$ & masks $h$; $\Stab^{(t_1)}\to \Stab^{(t_2)}$ \\
$m_{13},\dots,m_{22}$ & $c_1,\dots,c_{10}$ & detectors \\
$m_{23}$ & $M_2$ & detector \\
Periodic measurements after $t=24$ & $\{c_1,\dots,c_{10},M_2\}$ & detectors \\
\bottomrule
\end{tabular}
\caption{Measurement schedule with one measurement per timestep. Measurement
$m_i$ occurs at time $i+0.5$}
\label{tab:schedule}
\end{table}

The ISG at each time step is given by,
\begin{align}
  \Stab^{(t_1)} &= \langle c_1,c_2,\dots,c_{10},h\rangle
    && t_1 \in \{1\ldots 12\}, \\
  \Stab^{(t_2)} &= \langle c_1,\dots,c_{10},M_2\rangle
    && t_2 \in \{13\ldots 24\}.
\end{align}

The unmasked distance for each ISG is verified to be $4$ but the distance as a subsystem spacetime code is 3 with measurement error treated as a weight 2 Pauli error. A weight-3 spacetime logical is given by $(Y_1^{(0)} + Y_1^{(2)} + Z_1^{(12)})$ which is equivalent to a measurement error and a Pauli error each individually detectable.

\clearpage

\bibliography{references.bib}

\newcommand{\etalchar}[1]{$^{#1}$}
\begin{thebibliography}{MdlFOTT{\etalchar{+}}25}

\bibitem[AB25]{aitchison2025spacetimespinsstatisticalmechanics}
Cory~T. Aitchison and Benjamin Béri.
\newblock Spacetime spins: Statistical mechanics for error correction with
  stabilizer circuits, 2025.
\newblock arXiv:2512.21991.

\bibitem[ABO08]{aharonov1999faulttolerant}
Dorit Aharonov and Michael Ben-Or.
\newblock Fault-tolerant quantum computation with constant error rate.
\newblock {\em SIAM Journal on Computing}, 38(4):1207--1282, 2008.
\newblock arXiv:quant-ph/9906129.

\bibitem[ADCP14]{AndersonDuclosCianciPoulin2014}
Jonas~T. Anderson, Guillaume Duclos-Cianci, and David Poulin.
\newblock Fault-tolerant conversion between the {Steane} and {Reed--Muller}
  quantum codes.
\newblock {\em Physical Review Letters}, 113:080501, 2014.
\newblock arXiv:1403.2734.

\bibitem[AGP06]{aliferis2006threshold}
Panos Aliferis, Daniel Gottesman, and John Preskill.
\newblock Quantum accuracy threshold for concatenated distance-3 codes.
\newblock {\em Quantum Information and Computation}, 6(2):97--165, 2006.
\newblock arXiv:quant-ph/0504218.

\bibitem[AR25]{Anonymous_2025}
M.~Sohaib Alam and Eleanor Rieffel.
\newblock Dynamical logical qubits in the {Bacon-Shor} code.
\newblock {\em Physical Review A}, July 2025.
\newblock arXiv:2403.03291.

\bibitem[Bau24]{Bauer2024}
Andreas Bauer.
\newblock Topological error correcting processes from fixed-point path
  integrals.
\newblock {\em Quantum}, 8:1288, 2024.
\newblock arXiv:2303.16405.

\bibitem[Bau25]{PhysRevA.111.032413}
Andreas Bauer.
\newblock $x+y$ {Floquet} code: A simple example for topological quantum
  computation in the path-integral approach.
\newblock {\em Phys. Rev. A}, 111:032413, Mar 2025.

\bibitem[BFHS17]{bacon_sparse_2017}
Dave Bacon, Steven~T. Flammia, Aram~W. Harrow, and Jonathan Shi.
\newblock Sparse {Quantum} {Codes} from {Quantum} {Circuits}.
\newblock {\em IEEE Transactions on Information Theory}, 63(4):2464--2479,
  April 2017.
\newblock arXiv:1411.3334.

\bibitem[BH25]{blackwell2025code}
Keller Blackwell and Jeongwan Haah.
\newblock The code distance of {Floquet} codes.
\newblock {\em arXiv preprint arXiv:2510.05549}, 2025.
\newblock arXiv:2510.05549.

\bibitem[BLN{\etalchar{+}}24]{Bombin2024unifyingflavorsof}
Hector Bombin, Daniel Litinski, Naomi Nickerson, Fernando Pastawski, and Sam
  Roberts.
\newblock Unifying flavors of fault tolerance with the {ZX} calculus.
\newblock {\em {Quantum}}, 8:1379, June 2024.

\bibitem[CG26]{chakraborty2026no}
Aranya Chakraborty and Daniel Gottesman.
\newblock No-go theorem on fault tolerant gadgets for multiple logical qubits.
\newblock {\em PRX Quantum}, 7(3):033027, 2026.

\bibitem[DKLP02]{dennis2002topological}
Eric Dennis, Alexei Kitaev, Andrew Landahl, and John Preskill.
\newblock Topological quantum memory.
\newblock {\em Journal of Mathematical Physics}, 43:4452--4505, 2002.
\newblock arXiv:quant-ph/0110143.

\bibitem[dlF24]{delafuente2024dynamicalweightreductionpauli}
Julio C.~Magdalena de~la Fuente.
\newblock Dynamical weight reduction of {Pauli} measurements, 2024.
\newblock arXiv:2410.12527.

\bibitem[DP23]{delfosse2023spacetimecodescliffordcircuits}
Nicolas Delfosse and Adam Paetznick.
\newblock Spacetime codes of {Clifford} circuits, 2023.
\newblock arXiv:2304.05943.

\bibitem[DTB23]{PRXQuantum.4.020341}
Margarita Davydova, Nathanan Tantivasadakarn, and Shankar Balasubramanian.
\newblock {Floquet} codes without parent subsystem codes.
\newblock {\em PRX Quantum}, 4:020341, Jun 2023.

\bibitem[DTBA24]{Davydova2024quantumcomputation}
Margarita Davydova, Nathanan Tantivasadakarn, Shankar Balasubramanian, and
  David Aasen.
\newblock Quantum computation from dynamic automorphism codes.
\newblock {\em {Quantum}}, 8:1448, August 2024.

\bibitem[DTSE24]{PRXQuantum.5.020305}
Arpit Dua, Nathanan Tantivasadakarn, Joseph Sullivan, and Tyler~D. Ellison.
\newblock Engineering {3D} {Floquet} codes by rewinding.
\newblock {\em PRX Quantum}, 5:020305, Apr 2024.

\bibitem[EK09]{EastinKnill2009}
Bryan Eastin and Emanuel Knill.
\newblock Restrictions on transversal encoded quantum gate sets.
\newblock {\em Physical Review Letters}, 102:110502, 2009.
\newblock arXiv:0811.4262.

\bibitem[FG25]{fu2024errorcorrectiondynamicalcodes}
Esther~Xiaozhen Fu and Daniel Gottesman.
\newblock Error correction in dynamical codes.
\newblock {\em {Quantum}}, 9:1886, October 2025.
\newblock arXiv:2403.04163.

\bibitem[Gid23]{Gidney_2023}
Craig Gidney.
\newblock A pair measurement surface code on pentagons.
\newblock {\em Quantum}, 7:1156, October 2023.

\bibitem[Got97]{gottesman1997stabilizer}
Daniel Gottesman.
\newblock {\em Stabilizer Codes and Quantum Error Correction}.
\newblock PhD thesis, California Institute of Technology, 1997.
\newblock arXiv:quant-ph/9705052.

\bibitem[Got22]{gottesman2022opportunities}
Daniel Gottesman.
\newblock Opportunities and challenges in fault-tolerant quantum computation,
  2022.
\newblock arXiv:2210.15844. To appear in Proc. 28th Solvay Conference on
  Physics.

\bibitem[HB24]{PRXQuantum.5.040327}
Oscar Higgott and Nikolas~P. Breuckmann.
\newblock Constructions and performance of hyperbolic and semi-hyperbolic
  {Floquet} codes.
\newblock {\em PRX Quantum}, 5:040327, Nov 2024.

\bibitem[HH21]{Hastings2021dynamically}
Matthew~B. Hastings and Jeongwan Haah.
\newblock Dynamically {G}enerated {L}ogical {Q}ubits.
\newblock {\em {Quantum}}, 5:564, October 2021.

\bibitem[KMdlFT{\etalchar{+}}24]{PRXQuantum.5.010342}
Markus~S. Kesselring, Julio~C. Magdalena de~la Fuente, Felix Thomsen, Jens
  Eisert, Stephen~D. Bartlett, and Benjamin~J. Brown.
\newblock Anyon condensation and the color code.
\newblock {\em PRX Quantum}, 5:010342, Mar 2024.

\bibitem[LWHL26]{LiWeiHanLiu2026}
Zimu Li, Fuchuan Wei, Zhengyi Han, and Zi-Wen Liu.
\newblock {Non-Abelian} sheaf quantum {LDPC} codes: good and magical, 2026.
\newblock arXiv:2609.30159.

\bibitem[MBG23]{McEwen_2023}
Matt McEwen, Dave Bacon, and Craig Gidney.
\newblock Relaxing hardware requirements for surface code circuits using
  time-dynamics.
\newblock {\em Quantum}, 7:1172, November 2023.

\bibitem[MdlFOTT{\etalchar{+}}25]{PRXQuantum.6.010360}
Julio~C. Magdalena de~la Fuente, Josias Old, Alex Townsend-Teague, Manuel
  Rispler, Jens Eisert, and Markus M\"uller.
\newblock $\mathrm{XYZ}$ ruby code: Making a case for a three-colored graphical
  calculus for quantum error correction in spacetime.
\newblock {\em PRX Quantum}, 6:010360, Mar 2025.

\bibitem[RPK26]{RodatzPoorKissinger2026}
Benjamin Rodatz, Boldizs{\'a}r Po{\'o}r, and Aleks Kissinger.
\newblock {Floquetifying} stabiliser codes with distance-preserving rewrites.
\newblock {\em Quantum}, 10:2202, 2026.
\newblock arXiv:2410.17240.

\bibitem[ST25]{PhysRevLett.134.090602}
Mackenzie~H. Shaw and Barbara~M. Terhal.
\newblock Lowering connectivity requirements for bivariate bicycle codes using
  morphing circuits.
\newblock {\em Phys. Rev. Lett.}, 134:090602, Mar 2025.

\bibitem[SWP23]{PhysRevB.108.195134}
Joseph Sullivan, Rui Wen, and Andrew~C. Potter.
\newblock {Floquet} codes and phases in twist-defect networks.
\newblock {\em Phys. Rev. B}, 108:195134, Nov 2023.

\bibitem[TGB25]{tanggara2025simpleconstructionquditfloquet}
Andrew Tanggara, Mile Gu, and Kishor Bharti.
\newblock Simple construction of qudit {Floquet} codes on a family of lattices,
  2025.
\newblock arXiv:2410.02022.

\bibitem[TTMdlFK23]{Townsend_Teague_2023}
Alex Townsend-Teague, Julio Magdalena de~la Fuente, and Markus Kesselring.
\newblock {Floquetifying} the colour code.
\newblock {\em Electronic Proceedings in Theoretical Computer Science},
  384:265--303, August 2023.

\bibitem[Vui21]{vuillot2021planarfloquetcodes}
Christophe Vuillot.
\newblock Planar {Floquet} codes, 2021.
\newblock arXiv:2110.05348.

\bibitem[XD25]{xu2025faulttolerantprotocolsspacetimeconcatenation}
Yichen Xu and Arpit Dua.
\newblock Fault-tolerant protocols through spacetime concatenation, 2025.
\newblock arXiv:2504.08918.

\bibitem[XWL26]{xu2026theoryspacetimequantumfault}
Yijia Xu, Yixu Wang, and Zi-Wen Liu.
\newblock Theory of spacetime quantum fault tolerance, 2026.
\newblock arXiv:2609.39758.

\bibitem[YCC24]{yan2024floquetcodescoupledspin}
Bowen Yan, Penghua Chen, and Shawn~X. Cui.
\newblock {Floquet} codes from coupled spin chains, 2024.
\newblock arXiv:2410.18265.

\bibitem[ZAV23]{PhysRevB.108.205116}
Zhehao Zhang, David Aasen, and Sagar Vijay.
\newblock $x$-cube {Floquet} code: A dynamical quantum error correcting code
  with a subextensive number of logical qubits.
\newblock {\em Phys. Rev. B}, 108:205116, Nov 2023.

\bibitem[ZKH26]{ZhuKobayashiHsin2026}
Guanyu Zhu, Ryohei Kobayashi, and Po-Shen Hsin.
\newblock {Non-Abelian qLDPC}: {TQFT} formalism, addressable gauging
  measurement and application to magic state fountain on {2D} product codes,
  2026.
\newblock arXiv:2601.06736.

\end{thebibliography}
\end{document}